\theoremstyle{plain}
\newtheorem{theorem}{Theorem}[section]
\newtheorem{lemma}[theorem]{Lemma}
\newtheorem{proposition}[theorem]{Proposition}
\theoremstyle{definition}
\numberwithin{equation}{section}
\begin{document}

\title[Area-Angular Momentum-Charge Inequalities]{Bounding Horizon Area by Angular Momentum, Charge, and Cosmological Constant in 5-Dimensional Minimal Supergravity}

\author{Aghil Alaee}
\address{Department of Mathematics\\
University of Toronto\\
Toronto, ON M5S 2E4, Canada }
\email{a.alaeekhangha@utoronto.ca}

\author{Marcus Khuri}
\address{Department of Mathematics, Stony Brook University, Stony Brook, NY 11794, USA}
\email{khuri@math.sunysb.edu}
	
\author{Hari Kunduri}
\address{Department of Mathematics and Statistics,
McMaster University, Hamilton, ON L8S 4K1,
Canada (on sabbatical leave from Memorial University of Newfoundland)}
\email{hkkunduri@mun.ca}

\thanks{A. Alaee acknowledges the support of a NSERC Postdoctoral Fellowship 502873 and PIMS Postdoctoral Fellowship. M. Khuri acknowledges the support of NSF Grants DMS-1308753 and DMS-1708798. H. Kunduri acknowledges the support of NSERC Grant 418537-2012.}

\begin{abstract}
We establish a class of area-angular momentum-charge inequalities satisfied by stable marginally outer trapped surfaces in 5-dimensional minimal supergravity which admit a $U(1)^2$ symmetry.  A novel feature is the fact that such surfaces can have the nontrivial topologies $S^1 \times S^2$ and $L(p,q)$.  In addition to two angular momenta, they may be characterized by `dipole charge' as well as electric charge. We show that the unique geometries which saturate the inequalities are the horizon geometries corresponding to extreme black hole solutions. Analogous inequalities which also include contributions from a positive cosmological constant are also presented.
\end{abstract}

\maketitle

\section{Introduction}\label{Sec0}

There has been significant progress in establishing sharp geometric inequalities, motivated in part by black hole thermodynamics, which relate the area $A$, angular momenta $\mathcal{J}$, and charge $Q$ of axisymmetric stable marginally outer trapped surfaces (MOTS) in spacetimes satisfying an appropriate energy condition \cite{dain2012geometric,DainClement}. In spacetime dimension $D=4$ a typical example of such an inequality \cite{clement2012proof} is given by
\begin{equation}\label{4dineq}
A \geq 4\pi \sqrt{4\mathcal{J}^2 + Q^4},
\end{equation}
where equality is achieved if and only if the induced geometry of the MOTS arises from a spatial cross section of the event horizon of the extreme Kerr-Newman black hole.
This class of results has been extended to include contributions from a positive cosmological constant \cite{Bryden:2016frb,clement2015area}, and studied in the setting of Einstein-Maxwell-axion-dilaton gravity \cite{FajmanSimon,Rogatko,Yazadjiev_area}. They have also been used to find lower bounds for horizon area in terms of (ADM) mass, angular momentum, and charge \cite{dain2013lower}.

Given the significant interest in black hole solutions in spacetime dimension $D>4$ \cite{emparan2008black}, chiefly motivated from the physical point of view by string theory,  a natural problem is to generalize such inequalities to this setting.  As is well known, $D=5$ asymptotically flat black holes arise as certain intersecting configurations of D-branes, which are dynamical extended objects in string theory.  A classic achievement of string theory is the calculation of the Bekenstein-Hawking entropy $S = A/4$ for a large class of extremal 5-dimensional black holes from a quantum statistical counting of such configurations \cite{Strominger1996}.
Inequalities relating the area, angular momenta, and charge of dynamical black holes can be translated into corresponding relations on the quantum numbers that characterize the string states.

This program has been initiated in the work of Hollands \cite{Hollands2012} (see also \cite{Yazadjiev_area1}), who proved an extension of \eqref{4dineq} to $D > 4$ for vacuum spacetimes, possibly with a positive cosmological constant.  The $(D-2)$-dimensional MOTS $\mathcal{B}$ was assumed to admit a $U(1)^{D-3}$ isometry group.  This requirement implies that $\mathcal{B}$ must be diffeomorphic to $S^3 \times T^{D-5}$, $S^1 \times S^2 \times T^{D-5}$, or $L(p,q) \times T^{D-5}$ where $T^n$ is the $n$-dimensional torus \cite{hollands2012black}. In particular the elegant inequality
\begin{equation}\label{5dvacineq}
A \geq 8\pi \sqrt{|\mathcal{J}_+ \mathcal{J}_-|}
\end{equation}
is shown to hold for all stable $\mathcal{B}$, where $\mathcal{J}_\pm = \mathcal{J}_i v^i_{\pm}$ are certain linear combinations of angular momenta $\mathcal{J}_i$, associated to each $U(1)$ generator, and $v^i_\pm$ are a set of integers which determine the topology of $\mathcal{B}$. As before, the unique geometries that saturate the inequality are the extreme horizon geometries corresponding to each of the allowed topologies. The possible vacuum horizon geometries are completely classified and are in fact known explicitly in closed form \cite{hollands2010all,Kunduri2009}.

It is worthwhile to elaborate on this point. The term near-horizon geometry refers to the precise notion of the spacetime geometry in a neighborhood of a degenerate Killing horizon (for a comprehensive review, see \cite{Kunduri:2013ana}).  For example, the near-horizon geometry associated to the extreme Reissner-Nordstr\"{o}m horizon is a product metric on AdS$_2 \times S^2$, while the near-horizon geometry associated to the extreme Kerr horizon is a twisted $S^2$ bundle over AdS$_2$.  It is important to note that different extreme black holes can have the same associated near-horizon geometry (see \cite{Kunduri:2007vf} for an explicit example in $D=5$).  A spatial cross section of the event horizon (which is a MOTS) is a $(D-2)$-Riemannian manifold embedded in the $D$-dimensional Lorentzian near-horizon spacetime. Therefore, when stating the rigidity results for the area inequalities satisfied by MOTS, we must state that those saturating the inequality are the extreme horizon geometries induced from a near-horizon geometry, rather than a particular extreme black hole solution; indeed, there could be more than one extreme black hole that gives rise to the same induced geometry on $\mathcal{B}$.  Of course, in $D=4$, an axisymmetric extreme electrovacuum black hole must be an extreme member of the Kerr-Newman family \cite{Chrusciel2010}, and so we can state the rigidity result simply in terms of the induced metric on the horizon of an extreme black hole solution without reference to its near-horizon geometry.

The purpose of the present work is to establish an extension of \eqref{5dvacineq} valid for 5-dimensional black holes which carry charges sourced by a Maxwell field $F$. The simplest relevant theory for this purpose is minimal $D=5$ supergravity. As explained in \cite{Alaee:2016jlp}, this theory admits a harmonic map formulation for stationary $U(1)^{2}$-invariant solutions $(\mathbf{g},F)$ which plays a key role in establishing the relevant geometric inequalities. Moreover, all explicitly known charged 5-dimensional black holes (e.g. the charged Myers-Perry solution, the natural generalization of Kerr-Newman) are solutions of supergravity; these solutions will serve as model maps in the construction of the proof.  An added motivation is that it is this theory, and not standard Einstein-Maxwell theory, that arises as a consistent reduction of the ten or eleven-dimensional supergravity theories that govern the low-energy dynamics of string theory.  Therefore minimal supergravity is natural to consider for a number of reasons.

The key difference between the supergravity setting and the pure vacuum case analyzed in  \cite{Hollands2012} is that the space of extreme black hole horizons is significantly larger.  As we will show, one can produce a lower bound for the area of admissible MOTS with fixed angular momenta $\mathcal{J}_i$ and charge $Q$ in terms of an `area functional', which is in turn a certain renormalized Dirichlet energy for singular maps taking $[-1,1] \to G_{2(2)} / SO(4)$; here $G_{2(2)}$ refers to the noncompact real Lie group whose complexification is $G_2$, and the notation $2(2)$ refers respectively to the rank and character of the group.  The critical points of this functional are simply the harmonic maps corresponding to horizon geometries of $U(1)^2$-invariant extreme black holes with the same $\mathcal{J}_i$ and $Q$.  In contrast to the $D=5$ vacuum case, however, a complete classification of all allowed extreme horizon geometries is an open problem (see \cite{kunduri2011constructing} for a partial classification). Indeed, for fixed horizon topology $\mathcal{B}$, one can have distinct families of extreme horizon geometries.  For example, there are non-isometric families of extreme black ring horizon geometries ($\mathcal{B} = S^1 \times S^2$).  In particular, these distinct families have different expressions for the area in terms of conserved charges.  Nevertheless, for a given topology
it is possible to identify a unique extreme horizon geometry by specifying the angular momenta, electric charge, and so-called `dipole charge'.

It is worth emphasizing how this is qualitatively different from the $D=4$ Einstein-Maxwell case.  For fixed angular momentum and electric charge, the unique axisymmetric extreme horizon geometry is that of the extreme Kerr-Newman black hole \cite{chrusciel2012stationary}.  This fact underlies the single inequality \eqref{4dineq}.  In our case, rather than establishing a single unified inequality \eqref{5dvacineq} valid for all $\mathcal{B}$, we will have different inequalities which depend on both the topology of $\mathcal{B}$ and the range of parameters associated with conserved charges.

\section{Description of Main Results}\label{Sec1}

Consider a five dimensional spacetime $(M,\mathbf{g},F)$ where $M$ is a smooth oriented manifold, $\mathbf{g}$ is a Lorentzian metric with signature $(-,+,+,+,+)$, and $F$ is a closed 2-form representing a Maxwell field. Assuming that $F=d\mathcal{A}$, the action for $D=5$ minimal supergravity is given by
\begin{equation}\label{sugraaction}
\mathcal{S} = \int_M (R-12{\Lambda}) \star 1 - \frac{1}{2} F \wedge \star F - \frac{1}{3\sqrt{3}} F \wedge F \wedge \mathcal{A},
\end{equation}
where $\star$ is the Hodge dual operator associated to $\mathbf{g}$ and $\Lambda \geq 0$ is the cosmological constant. The field equations are then expressed as
\begin{equation}\label{field}
\begin{split}
&R_{ab} = \frac{1}{2} F_{ac} F_{b}^{~ c} - \tfrac{1}{12} |F|^2 \mathbf{g}_{ab}+4{\Lambda} \mathbf{g}_{ab},  \\
&d \star F + \frac{1}{\sqrt{3}} F \wedge F=0.
\end{split}
\end{equation}
Note that in contrast to pure Einstein-Maxwell theory, $d \star F \neq 0$. If $H_2(M)\neq 0$ then $\mathcal{A}$ appearing in the action is not globally defined and must be constructed from local potentials.
	
Recall that a marginally outer trapped surfaces (MOTS) is a $3$-dimensional spacelike submanifold $\mathcal{B}$ embedded in the spacetime $(M,g,F)$ with $\theta_{\mathbf{n}}=0$. Here $\theta_{\mathbf{n}}$ is the expansion with respect to the future pointing outward null normal $\mathbf{n}$ and is defined by
\begin{equation}
\theta_{\mathbf{n}} \epsilon_{\gamma}=\epsilon_{\gamma}\text{div}_{\gamma}\mathbf{n}
=\mathfrak{L}_{\mathbf{n}}\epsilon_{\gamma} =\left( \frac{1}{2} \gamma^{ab} \mathfrak{L}_{\mathbf{n}} \gamma_{ab} \right)\epsilon_\gamma,
\end{equation}
where $\mathfrak{L}$ denotes Lie differentiation,
\begin{equation}\label{eq3.2}
\gamma_{ab}=2\mathbf{l}_{(a}\mathbf{n}_{b)}+\mathbf{g}_{ab}
\end{equation}
is the induced metric on $\mathcal{B}$ with volume form $\epsilon_{\gamma}$, and  $\mathbf{l}$ is the future pointing inward null normal such that $\mathbf{g}(\mathbf{n},\mathbf{l})=-1$. The MOTS will be referred to as stable if $\mathfrak{L}_{\mathbf{l}}\theta_{\mathbf{n}}\leq 0$.

The total electric charge contained within $\mathcal{B}$ is given by
\begin{equation}\label{defcharge}
Q =  \frac{1}{16\pi} \int_{\mathcal{B}}\left(\star F + \frac{1}{\sqrt{3}} \mathcal{A} \wedge F\right).
\end{equation}
Inclusion of the second term in the integrand is motivated by the fact that, as a consequence of the Maxwell equation in \eqref{field}, the full integrand is a closed 3-form.  If, in addition $H_2 (\mathcal{B})$ is non-trivial (e.g. $\mathcal{B}= S^1 \times S^2$), a `dipole charge' may be defined by
 \begin{equation}\label{defdipole}
\mathcal{D}[\mathcal{C}] = \frac{1}{2\pi} \int_{\mathcal{C}} F
\end{equation}
for each homology class $[\mathcal{C}]\in H_2(\mathcal{B})$. From the $D=4$ setting, this may seem reminiscent of the magnetic charge, however in $D=5$ it turns out that there is no natural notion of a conserved magnetic charge \cite{Alaee:2016jlp}. Note that if $\mathcal{B} = S^3$ (or indeed any lens space), $H_2(\mathcal{B})$ is trivial.

In order to define a suitable notion of angular momenta, let $\eta_{(i)}$, $i=1,2$ denote the Killing fields with orbits of period $2\pi$ that generate the $U(1)^2$ isometry, so that
\begin{equation}
\mathfrak{L}_{\eta_{(i)}} \mathbf{g} = 0\;,\qquad \mathfrak{L}_{\eta_{(i)}} F = 0.
\end{equation}  	
The angular momentum associated with the generator $\eta_{(i)}$ is then defined by
\begin{equation}\label{defJ}
\mathcal{J}_{i} = \frac{1}{16\pi} \int_\mathcal{B} \star d[\mathbf{g}(\eta_{(i)},\cdot)] + \mathcal{A}(\eta_{(i)}) \left(\star F + \frac{2}{3\sqrt{3}} \mathcal{A} \wedge F \right).
\end{equation}
The first term of the integrand comes from the standard Komar integral, and the remaining
terms are then appended in order to obtain a closed 3-form yielding a conserved quantity.
This is the spacetime version of the definition in terms of initial data used for the proof of the mass-angular momentum-charge inequality \cite{Alaee:2016jlp}.  Moreover, when $F=0$ this reduces to the definition of angular momenta used in the proof of the vacuum inequality \eqref{5dvacineq}. Note that there is an $SL(2,\mathbb{Z})$ freedom in choosing a basis for the $U(1)^2$ generators $\eta_{(i)}$, and hence to define the two angular momenta.

\begin{theorem}\label{Thm1}
Let $(M,\mathbf{g},F)$ be a bi-axisymmetric solution of 5-dimensional minimal supergravity with $\Lambda=0$. If $\mathcal{B}$ is a bi-axisymmetric stable MOTS diffeomorphic to $S^3$ then
\begin{equation}
A\geq 8\pi\sqrt{\left\vert\mathcal{J}_1\mathcal{J}_2+\frac{4Q^3}{3\pi\sqrt{3}}\right\vert},
\end{equation}
and equality holds if and only if $(\mathcal{B},\gamma,F)$ arises from the near-horizon geometry of an extreme charged Myers-Perry black hole.
\end{theorem}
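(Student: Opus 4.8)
The plan is to follow the variational strategy pioneered in $D=4$ by Dain and collaborators and extended to higher dimensions by Hollands, adapting it to the charged supergravity setting by exploiting the harmonic map formulation on $G_{2(2)}/SO(4)$. The key observation is that the stability of $\mathcal{B}$ together with the $U(1)^2$-invariance collapses the three-dimensional problem to a one-dimensional variational problem on the interval $[-1,1]$, whose minimizer among maps carrying the prescribed conserved charges is the extreme charged Myers-Perry near-horizon geometry. The explicit lower bound is then obtained by evaluating the relevant functional on that model map.

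First I would extract the stability inequality. Writing the condition $\mathfrak{L}_{\mathbf{l}}\theta_{\mathbf{n}}\leq 0$ as nonnegativity of the principal eigenvalue of the MOTS stability operator, one obtains, for every test function $\alpha$,
\begin{equation}
\int_{\mathcal{B}}\left(|\nabla\alpha|^2 + \tfrac{1}{2}R_\gamma\,\alpha^2\right)dV_\gamma \geq \int_{\mathcal{B}}\left(|\sigma|^2 + \mathcal{T}\right)\alpha^2\,dV_\gamma,
\end{equation}
where $\mathcal{T}$ collects the matter terms obtained by contracting \eqref{field} with the null normals, and has favorable sign since the supergravity stress-energy satisfies the relevant energy condition. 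Restricting to $U(1)^2$-invariant $\alpha=\alpha(s)$ and using the Gauss equation to rewrite $R_\gamma$ in terms of the Gram matrix $\lambda_{ij}=\mathbf{g}(\eta_{(i)},\eta_{(j)})|_{\mathcal{B}}$, the twist potentials, and the reduced Maxwell data, this becomes a one-dimensional estimate. Since $\mathcal{B}\cong S^3$ carries a torus action with orbit space $[-1,1]$, the area is
\begin{equation}
A = (2\pi)^2\int_{-1}^{1}\sqrt{\det\lambda}\,\,ds,
\end{equation}
with $\det\lambda\to 0$ at the poles $s=\pm 1$ where the two axes degenerate, while \eqref{defJ} and \eqref{defcharge} reduce to monodromy data of the twist and electric potentials. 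Combining these, I would establish
\begin{equation}
A \geq \mathcal{A}[\Phi], \qquad \Phi:[-1,1]\to G_{2(2)}/SO(4),
\end{equation}
where $\Phi$ is the reduced map packaging $\lambda$, the twist potentials, and the electric and magnetic potentials (the latter forced by the Chern-Simons term), and $\mathcal{A}$ is the renormalized Dirichlet energy obtained after subtracting the divergent endpoint contributions; equality holds precisely when $\mathcal{B}$ is itself a near-horizon geometry.

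The heart of the argument is minimizing $\mathcal{A}$ subject to fixed $\mathcal{J}_1,\mathcal{J}_2,Q$. Because $G_{2(2)}/SO(4)$ is a non-positively curved symmetric space, $\mathcal{A}$ is convex along geodesic homotopies between maps sharing the same boundary data, and its critical points are exactly the harmonic maps, i.e. the $U(1)^2$-invariant extreme horizon geometries. For $S^3$ topology $H_2(\mathcal{B})=0$, so no dipole charge enters and the unique relevant critical point is the extreme charged Myers-Perry near-horizon map $\Phi_0$ with the prescribed $\mathcal{J}_i,Q$. Convexity then yields $\mathcal{A}[\Phi]\geq\mathcal{A}[\Phi_0]=A_0$, and a direct computation of $\mathcal{A}[\Phi_0]$ on the known extreme Myers-Perry data produces the explicit right-hand side $8\pi\sqrt{\left|\mathcal{J}_1\mathcal{J}_2+\tfrac{4Q^3}{3\pi\sqrt{3}}\right|}$. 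Strict convexity forces $\Phi=\Phi_0$ in the equality case, which together with the saturation condition in $A\geq\mathcal{A}[\Phi]$ gives the rigidity statement.

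I expect the principal obstacle to be the renormalization and boundary analysis: one must control the singular behavior of $\Phi$ at $s=\pm1$ so that the gap $\mathcal{A}[\Phi]-\mathcal{A}[\Phi_0]$ is finite and nonnegative, and must verify that the asymptotics of an arbitrary admissible $\Phi$ genuinely match those of $\Phi_0$ so that the convexity comparison is legitimate. A secondary difficulty is correctly encoding the Chern-Simons contribution into the magnetic potential and into the charge \eqref{defcharge}, since it is precisely this topological term, absent in pure Einstein-Maxwell theory, that is responsible for the unusual $Q^3$ dependence appearing in the bound.
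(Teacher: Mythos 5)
Your proposal follows essentially the same route as the paper: the stability inequality for the MOTS with a $U(1)^2$-invariant test function (the paper's specific choice $\varphi=\sqrt{(1-x^2)/\det\lambda}$ when $\Lambda=0$ is what makes the divergent endpoint terms integrate away, via \eqref{eq3.12}), reduction to a renormalized one-dimensional energy for singular maps into $G_{2(2)}/SO(4)$, convexity along geodesics in the nonpositively curved target to compare with the extreme charged Myers-Perry model map, and evaluation of the model to get $A_0=8\sqrt{\pi^2\mathcal{J}_1\mathcal{J}_2+\frac{4\pi}{3\sqrt{3}}Q^3}$. One cosmetic difference: the paper's relation between area and functional is exponential, $A\geq 4\sqrt{3}\pi^2\,e^{(\mathcal{I}_{\mathcal{B}}(\Psi)-\beta_{\xi}^{s})/2}$ (Proposition \ref{proposition3.2}), rather than a literal bound $A\geq\mathcal{A}[\Phi]$, though one can of course absorb the exponential into the definition of $\mathcal{A}$. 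The two analytic obstacles you flag, the endpoint asymptotics and the infinite energy of the competing maps, are exactly what the paper resolves through the prescribed decay rates \eqref{eq4.13}--\eqref{asdfghjkl} and the logarithmic cutoff \eqref{eq4.16} in the cut-and-paste argument of Lemma \ref{lemmaepsilon}; identifying rather than executing these is acceptable at the proposal level.

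The one step that would fail as written is the equality case: ``strict convexity forces $\Phi=\Phi_0$'' is not available. The quasi-harmonic energy is convex but not strictly convex along geodesic deformations, and the paper's gap bound \eqref{eq4.7} (Theorem \ref{minimum}) yields, upon saturation, only that $\operatorname{dist}_{G_{2(2)}/SO(4)}(\tilde{\Psi},\tilde{\Psi}_0)$ is \emph{constant}, not zero. The paper therefore argues differently: equality forces $\Psi$ to realize the infimum of $\mathcal{I}_{\mathcal{B}}$, hence to be a critical point and thus a harmonic map, and an external uniqueness theorem for singular harmonic maps with these boundary rates (the cited Alaee--Khuri--Kunduri near-horizon classification) then shows $\Psi$ and $\Psi_0$ agree up to an isometry of the target symmetric space. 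This weaker conclusion is also the correct one: the paper explicitly remarks after Theorem \ref{Thm1} that the harmonic map data need not coincide with the Myers-Perry data, only be related by a target isometry, so your literal conclusion $\Phi=\Phi_0$ is both unproved and stronger than what is true. Relatedly, your assertion that the Myers-Perry map is ``the unique relevant critical point'' with prescribed $(\mathcal{J}_1,\mathcal{J}_2,Q)$ is not needed for the inequality (convexity against a single model competitor suffices) and is itself part of the nontrivial classification problem, so it should not be presupposed.
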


It is important to note that the rigidity statement does not imply that the harmonic map data arising from $(\mathcal{B},\gamma,F)$ agree with that of the specified near-horizon geometry. Rather, by stating that the given data `arise' from the near-horizon geometry of an extreme charged Myers-Perry black hole, we mean that the given data are related to this near-horizon geometry through an isometry in the target symmetric space $G_{2(2)}/SO(4)$.
The same interpretation applies to the remaining theorems of this section.

We remark that in the pure vacuum case $Q=0$ and the above inequality reduces to \eqref{5dvacineq}, whereas if either of the independent angular momenta $\mathcal{J}_i$ vanish then
\begin{equation}
A \geq 16 \sqrt{\frac{\pi |Q|^3}{3\sqrt{3}}},
\end{equation}
which is saturated if and only if the MOTS arises from the near-horizon geometry of the extreme Reissner-Nordstr\"{o}m black hole.

\begin{theorem}\label{Thm2}		
Let $(M,\mathbf{g},F)$ be a bi-axisymmetric solution of 5-dimensional minimal supergravity with $\Lambda=0$. Let $\mathcal{B}$ be a bi-axisymmetric stable MOTS diffeomorphic to $S^1 \times S^2$ with $\mathcal{J}_1$, $\mathcal{J}_2$ representing the angular momentum associated with $S^1$, $S^2$ respectively.
\begin{enumerate}[(a)]
\item If $Q=0$ and $\mathcal{J}_2=0$ then
\begin{equation}
A\geq 4\pi \sqrt{ \frac{\pi |\mathcal{J}_1 \mathcal{D}^3|}{3\sqrt{3}}},
\end{equation}
and equality holds if and only if $(\mathcal{B},\gamma,F)$ arises from the near-horizon geometry of an extreme singly-spinning dipole black ring.

\item If $Q=0$ and $\mathcal{J}_2^2\geq \frac{\pi}{12\sqrt{3}}\mathcal{J}_1\mathcal{D}^3$ then
\begin{equation}
A\geq 8\pi\sqrt{\mathcal{J}^2_{2}-\frac{\pi}{12\sqrt{3}}\mathcal{J}_1\mathcal{D}^3},
\end{equation}
and equality holds if and only if $(\mathcal{B},\gamma,F)$ arises from the near-horizon geometry of an extreme magnetic boosted Kerr string.
\end{enumerate}
\end{theorem}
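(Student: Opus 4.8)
The plan is to adapt the variational framework behind Theorem~\ref{Thm1} to the topology $\mathcal{B}\cong S^1\times S^2$, the essential changes being the rod structure of the reduced problem and the choice of model map. First I would record the consequences of stability: the condition $\mathfrak{L}_{\mathbf{l}}\theta_{\mathbf{n}}\le 0$ implies nonnegativity of the principal eigenvalue of the MOTS stability operator, and testing the resulting spectral inequality against a bi-axisymmetric weight yields an integral inequality whose left-hand side carries the scalar curvature $R_{\mathcal{B}}$ and whose right-hand side carries the squared second fundamental form together with the supergravity matter terms built from $F$. Performing the $U(1)^2$ reduction, the orbit space is the interval $[-1,1]$, the Gram matrix $\lambda_{ij}=\gamma(\eta_{(i)},\eta_{(j)})$ together with the twist, electric, and magnetic potentials descend to functions on $[-1,1]$, and the area reduces to a weighted one-dimensional integral of the reduced metric. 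The topology of $\mathcal{B}$ enters only through the degeneration (rod) pattern at the endpoints $x=\pm 1$, i.e.\ through the integer vectors determining which combination of $\eta_{(i)}$ vanishes there. The reduced stability inequality then produces a lower bound $A\ge \mathcal{I}[\Phi]$, where $\mathcal{I}$ is the area functional, a renormalized Dirichlet energy for singular maps $\Phi:[-1,1]\to G_{2(2)}/SO(4)$ whose boundary behavior encodes the conserved charges $\mathcal{J}_1,\mathcal{J}_2$, the vanishing electric charge $Q$, and the dipole charge $\mathcal{D}$.

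The decisive structural fact is that $G_{2(2)}/SO(4)$ is a symmetric space of noncompact type, hence nonpositively curved, so $\mathcal{I}$ is geodesically convex along geodesic homotopies that fix the charge data. Any critical point of $\mathcal{I}$ with prescribed charges is therefore the unique global minimizer, and the difference $\mathcal{I}[\Phi]-\mathcal{I}[\Phi_0]$ between an admissible map and the harmonic model $\Phi_0$ is controlled pointwise by a Carter-type divergence identity: the integral of the divergence produces exactly the boundary contributions fixed by $\mathcal{J}_i$ and $\mathcal{D}$, while the remaining term is nonnegative and vanishes precisely when $\Phi$ and $\Phi_0$ agree up to an isometry of the target. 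This reduces each part of the theorem to evaluating $\mathcal{I}[\Phi_0]$ for the appropriate explicit model and re-expressing the result through the conserved charges.

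The two cases are distinguished by which extreme near-horizon geometry realizes the minimizer in the relevant parameter range. For part (a), with $Q=0$ and $\mathcal{J}_2=0$, the model map $\Phi_0$ is that of the extreme singly-spinning dipole black ring; substituting its explicit harmonic data into $\mathcal{I}$ and writing the outcome through $\mathcal{J}_1$ and $\mathcal{D}$ gives $A\ge 4\pi\sqrt{\pi|\mathcal{J}_1\mathcal{D}^3|/(3\sqrt{3})}$. For part (b) the model is the extreme magnetic boosted Kerr string, and the hypothesis $\mathcal{J}_2^2\ge \frac{\pi}{12\sqrt{3}}\mathcal{J}_1\mathcal{D}^3$ is exactly the condition guaranteeing that this solution exists and that the radicand in $A\ge 8\pi\sqrt{\mathcal{J}_2^2-\frac{\pi}{12\sqrt{3}}\mathcal{J}_1\mathcal{D}^3}$ is nonnegative. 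In both cases rigidity follows from the strict convexity step: equality forces $\Phi=\Phi_0$ up to a target isometry, which is precisely the statement that $(\mathcal{B},\gamma,F)$ arises from the stated near-horizon geometry in the sense clarified after Theorem~\ref{Thm1}.

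The main obstacle I anticipate is the explicit construction and verification of the two model maps. Unlike the $S^3$ case, where the charged Myers--Perry near-horizon geometry is the canonical model, here one must isolate the correct reduced harmonic data for the dipole black ring and for the magnetic boosted Kerr string, confirm that they are genuine critical points of $\mathcal{I}$ with the prescribed singular boundary conditions at $x=\pm 1$ (including the $2$-cycle carrying $\mathcal{D}$), and then perform the delicate bookkeeping matching the parameters of these families to $(\mathcal{J}_1,\mathcal{J}_2,\mathcal{D})$. Because, as emphasized in the introduction, distinct extreme families can share the topology $S^1\times S^2$ while yielding different area formulas, the principal care lies in demarcating the parameter regimes --- reflected in the hypotheses $\mathcal{J}_2=0$ versus $\mathcal{J}_2^2\ge \frac{\pi}{12\sqrt{3}}\mathcal{J}_1\mathcal{D}^3$ --- so that in each regime the asserted model is in fact the global minimizer of the renormalized energy.
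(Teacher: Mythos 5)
Your proposal follows essentially the same route as the paper's proof: the stability inequality of Lemma \ref{lemma3.1}, tested against a bi-axisymmetric weight, reduces to the renormalized energy $\mathcal{I}_{\mathcal{B}}$ for singular maps $[-1,1]\to G_{2(2)}/SO(4)$, which by geodesic convexity of the nonpositively curved target (implemented through the cut-and-paste regularization of Lemma \ref{lemmaepsilon} and Theorem \ref{minimum}) is minimized by the harmonic map data of the extreme singly-spinning dipole ring in case (a) and the extreme magnetic boosted Kerr string in case (b), whose areas are evaluated in Appendices \ref{app2} and \ref{app3}, with rigidity via an isometry of the target. Two minor points of difference in presentation: the actual area bound is exponential, $A\geq 4\sqrt{3}\pi^2 e^{(\mathcal{I}_{\mathcal{B}}(\Psi)-\beta^{s}_{1})/2}$ (Proposition \ref{proposition3.2}), rather than $A\geq \mathcal{I}[\Phi]$, and the paper's gap estimate rests on convexity plus a Poincar\'e inequality with the equality case settled by an external harmonic map uniqueness result, not on a Carter-type divergence identity or strict convexity alone.
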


\begin{theorem}\label{Thm3}
Let $(M,\mathbf{g},F)$ be a bi-axisymmetric solution of 5-dimensional minimal supergravity with $\Lambda=0$. If $\mathcal{B}$ is a bi-axisymmetric stable MOTS diffeomorphic to the lens space $L(p,1)$, with $\mathcal{J}_1 = -\mathcal{J}_2=\mathcal{J}$ and $4Q^3\geq 3p\sqrt{3} \pi\mathcal{J}^2$ then
\begin{equation}
A\geq 8\pi \sqrt{\frac{4pQ^3}{3\sqrt{3} \pi} - p^2\mathcal{J}^2},
\end{equation}
and equality holds if and only if $(\mathcal{B},\gamma,F)$ arises from the near-horizon geometry of an extreme supersymmetric black lens solution \cite{kunduri2014supersymmetric,Tomizawa:2016kjh}.
\end{theorem}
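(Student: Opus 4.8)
The plan is to follow the same variational framework developed for Theorems~\ref{Thm1} and \ref{Thm2}, specializing the topology to the lens space $L(p,1)$ and the model map to the extreme supersymmetric black lens. First I would invoke the stability inequality $\mathfrak{L}_{\mathbf{l}}\theta_{\mathbf{n}}\leq 0$ together with the field equations \eqref{field} to bound the area $A$ from below by a geometric integral over $\mathcal{B}$. Exploiting the $U(1)^2$ symmetry, this integral reduces to a one-dimensional functional on the orbit space $[-1,1]$, with the reduced data---the norms and inner product of the Killing fields, the electric potential, and the twist and dipole potentials---assembling into a map $\Phi\colon[-1,1]\to G_{2(2)}/SO(4)$. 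The lens topology enters through the boundary conditions at the degeneration points $z=\pm 1$, which fix the winding data $v^i_\pm$ appropriate to $L(p,1)$ and thereby constrain the asymptotics of $\Phi$; the condition $\mathcal{J}_1=-\mathcal{J}_2=\mathcal{J}$ is the additional symmetry constraint satisfied by supersymmetric horizon data.

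Next I would recast the lower bound as the renormalized Dirichlet energy $\mathcal{I}[\Phi]$ of this map, with the conserved quantities $\mathcal{J}$, $Q$, and the dipole charge $\mathcal{D}$ appearing as fixed boundary and monodromy data. The crucial estimate is the gap inequality $\mathcal{I}[\Phi]\geq\mathcal{I}[\Phi_0]$, where $\Phi_0$ is the harmonic map representing the near-horizon geometry of the extreme supersymmetric black lens carrying the same charges. This follows from the nonpositive curvature of the target symmetric space $G_{2(2)}/SO(4)$: interpolating geodesically between $\Phi$ and $\Phi_0$, the energy is convex, so since $\Phi_0$ is a critical point the first-variation term vanishes and the second variation is nonnegative, provided the integration-by-parts boundary terms at $z=\pm 1$ cancel---a cancellation guaranteed precisely because $\Phi$ and $\Phi_0$ share the same boundary data.

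Then I would compute $\mathcal{I}[\Phi_0]$ explicitly by inserting the known near-horizon data of the extreme supersymmetric black lens \cite{kunduri2014supersymmetric,Tomizawa:2016kjh}, obtaining the closed form $8\pi\sqrt{\frac{4pQ^3}{3\sqrt{3}\pi}-p^2\mathcal{J}^2}$. The hypothesis $4Q^3\geq 3p\sqrt{3}\pi\mathcal{J}^2$ is exactly the statement that this radicand is nonnegative, equivalently the condition that the model map with the prescribed charges exists in this parameter range---which is why the inequality is topology- and range-dependent, as anticipated in the introduction. Rigidity comes from the equality case of the convexity argument: $\mathcal{I}[\Phi]=\mathcal{I}[\Phi_0]$ forces $\Phi$ and $\Phi_0$ to agree as maps into $G_{2(2)}/SO(4)$ up to an isometry of the target, which is the precise sense in which $(\mathcal{B},\gamma,F)$ \emph{arises from} the extreme black lens near-horizon geometry.

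The hardest part will be twofold. First is the careful analysis of the renormalized energy and the boundary terms at the degeneration points for the lens topology: one must verify that the singular behavior of $\Phi$ there is integrable after renormalization and that the integration-by-parts contributions genuinely vanish under the matched $L(p,1)$ winding data---more delicate than in the $S^3$ case of Theorem~\ref{Thm1} owing to the nontrivial fundamental group and the imposed relation $\mathcal{J}_1=-\mathcal{J}_2$. Second is verifying that the explicit supersymmetric black lens near-horizon data solve the reduced harmonic map equations and reproduce exactly the claimed charge--area relation, which requires carefully matching the conventions of the supergravity solution with the symmetric-space parametrization of $G_{2(2)}/SO(4)$.
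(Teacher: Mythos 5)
Your proposal follows essentially the same route as the paper: stability reduces the problem to a renormalized orbit-space energy for maps into $G_{2(2)}/SO(4)$, the key gap estimate comes from geodesic convexity of this energy against the model map of the extreme supersymmetric black lens with matching charges (the paper's Proposition 5.1 and Theorem 6.1), the hypothesis $4Q^3\geq 3p\sqrt{3}\pi\mathcal{J}^2$ is precisely nonnegativity of the model horizon area $A_e=\frac{32\pi^2}{p}\sqrt{2\alpha^3-\beta^2}$, and you correctly identify the cut-and-paste regularization of the infinite energy near the degeneration axes as the delicate step. The one refinement worth noting is that in the paper rigidity does not come from the equality case of convexity alone: equality first forces the given map to be a minimizing critical point, hence a harmonic map, after which a separate uniqueness theorem for such singular harmonic maps is invoked to conclude that it agrees with the black lens data up to an isometry of the target symmetric space.
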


These three theorems yield area-angular momentum-charge inequalities for each of the
possible topologies associated with bi-axisymmetric stable MOTS. It should be noted that several of the results require certain restrictions on the parameters found within the inequalities. These
restrictions arise from the particular nature of the known near-horizon geometries on which the inequalities are modeled. Our method of proof is sufficiently robust that should new near-horizon geometries be found, an immediate consequence would be new area-angular momentum-charge inequalities for stable MOTS with the same topology. Thus modulo the
classification of near-horizon geometries for $D=5$ minimal supergravity, the techniques of this paper are able to produce all possible inequalities of this type.

The following result differs from those above in that it includes contributions from
a cosmological constant $\Lambda\geq 0$ within the inequality. We are only able to treat
the case of spherical topology in this context due to the lack of known explicit solutions with other topologies; in fact it has been shown that ring type de Sitter near-horizon geometries do not exist in vacuum \cite{KhuriWoolgar}. Restrictions on the parameters are needed here only to simplify the expression of the inequality. Indeed, our proof is valid for the full range of parameters, however a precise statement of the inequality in this generality is too unwieldy.

\begin{theorem}\label{thm5}
Let $(M,\mathbf{g},F)$ be a bi-axisymmetric solution of 5-dimensional minimal supergravity with $\Lambda>0$, and let $\mathcal{B}$ be a bi-axisymmetric stable MOTS diffeomorphic to $S^3$.
\begin{enumerate}[(a)]
\item If $\mathcal{J}:=\pm\mathcal{J}_i$ for $i=1,2$ and $Q=0$ then
\begin{equation}\label{mainJin}
 \frac{\Lambda^3A^6}{2^{10}\pi^6\mathcal{J}^2}\leq
 \frac{\left(A\sqrt{A^2+512\pi^2 \mathcal{J}^2}-A^2-128\pi^2\mathcal{J}^2\right)^3}
 {\left(A-\sqrt{A^2+512\pi^2 \mathcal{J}^2}\right)^4},
\end{equation}
and equality holds if and only if $(\mathcal{B},\gamma,F)$ arises from the near-horizon geometry of an extreme Chong-Cvetic-Lu-Pope (CCLP) black hole with positive cosmological constant.

\item  If $\mathcal{J}_1=\mathcal{J}_2=0$ then
\begin{equation}
64\pi^2Q^2\leq	12\left(\frac{A\pi}{2}\right)^{4/3}-6\Lambda A^2,
\end{equation}
and equality holds if and only if $(\mathcal{B},\gamma,F)$ arises from the near-horizon geometry of an extreme Reissner-Nordstr\"{o}m-de Sitter black hole.
\end{enumerate}
\end{theorem}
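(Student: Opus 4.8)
The plan is to run the variational scheme of Theorems~\ref{Thm1}--\ref{Thm3}, now treating the cosmological constant as an extra source term in the stability inequality. I would begin with the stability condition $\mathfrak{L}_{\mathbf{l}}\theta_{\mathbf{n}}\leq 0$ and the resulting MOTS stability inequality, which for a test function $\varphi$ on $\mathcal{B}$ has the schematic form
\[
\int_{\mathcal{B}}\Bigl(\abs{\nabla\varphi}_{\gamma}^{2}+\tfrac{1}{2}R_{\gamma}\varphi^{2}\Bigr)\epsilon_{\gamma}
\;\geq\;
\int_{\mathcal{B}}\Bigl(\tfrac{1}{2}\abs{\chi}^{2}+G_{ab}\mathbf{l}^{a}\mathbf{n}^{b}\Bigr)\varphi^{2}\epsilon_{\gamma},
\]
and substitute the field equations \eqref{field}. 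Because the Einstein tensor carries a term proportional to $\Lambda\mathbf{g}_{ab}$ and $\mathbf{g}(\mathbf{l},\mathbf{n})=-1$, the contraction $G_{ab}\mathbf{l}^{a}\mathbf{n}^{b}$ contributes a definite positive multiple of $\Lambda$; the remaining contributions are the Maxwell energy density and the angular-momentum (twist) density. Taking the constant test function $\varphi=1$ and pushing the computation to the orbit space $[-1,1]$ via bi-axisymmetry then produces a master inequality whose matter side contains a term proportional to $\Lambda A$, so that the cosmological constant enters multiplied by the area itself.

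Next I would perform the dimensional reduction exactly as for the $\Lambda=0$ theorems. Parametrizing $\gamma$ and $F$ by the Gram matrix $\lambda_{ij}=\gamma(\eta_{(i)},\eta_{(j)})$ of the Killing generators, the twist potentials that fix $\mathcal{J}_i$, and the electric potential that fixes $Q$, one has $A=2\pi^{2}\int_{-1}^{1}\sqrt{\det\lambda}\,dx$, while the Maxwell and twist densities assemble into the renormalized Dirichlet energy of a singular map $[-1,1]\to G_{2(2)}/SO(4)$. Holding $\mathcal{J}_i$ and $Q$ fixed through the boundary data of this map, the machinery of Theorems~\ref{Thm1}--\ref{Thm3} bounds that energy from below by its value at the harmonic minimizer. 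Reinserting this into the master inequality, and retaining the $\Lambda A$ term throughout, yields a single closed relation among $A$, $\mathcal{J}$, $Q$ and $\Lambda$ in which the area appears on both sides.

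To pin down the extremal case I would invoke the existence and uniqueness theory for harmonic maps into $G_{2(2)}/SO(4)$ with the prescribed singular boundary data, and match the unique critical point with the near-horizon data of the extreme Chong--Cvetic--Lu--Pope black hole in part~(a) and the extreme Reissner--Nordstr\"{o}m--de Sitter black hole in part~(b); these are the $\Lambda>0$ model maps playing the role of the charged Myers--Perry and Reissner--Nordstr\"{o}m maps of the earlier theorems. Inserting these explicit near-horizon geometries into the closed relation and eliminating their free parameters is what converts the abstract bound into the stated algebraic inequalities, producing the $A^{4/3}$ and $\Lambda A^{2}$ terms of part~(b) and the implicit form \eqref{mainJin} of part~(a). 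Equality forces $\varphi=1$ to be a ground state of the stability operator and forces the reduced data to coincide with the harmonic minimizer, so that $(\mathcal{B},\gamma,F)$ agrees with the corresponding near-horizon geometry up to an isometry of the target symmetric space, in the sense explained after Theorem~\ref{Thm1}.

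The hard part, and what distinguishes this result from Theorems~\ref{Thm1}--\ref{Thm3}, is the coupling introduced by the $\Lambda A$ term: the area is now simultaneously the quantity to be bounded and an ingredient of the stability inequality, so the estimate cannot be solved for $A$ in closed form and instead cuts out an implicit region in $(A,\mathcal{J},Q,\Lambda)$-space. I would need to verify that the lower bound on the Maxwell and twist energies from the harmonic-map minimization is unaffected by the presence of the $\Lambda$ term, and then carry out the delicate elimination of near-horizon parameters required to reach \eqref{mainJin}; checking that the extreme CCLP and Reissner--Nordstr\"{o}m--de Sitter near-horizon geometries indeed saturate the final inequalities, and hence that the algebraic form is sharp, is where most of the computational effort will lie.
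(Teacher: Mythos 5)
Your proposal correctly reproduces the $\Lambda=0$ skeleton (stability inequality $\to$ orbit-space functional $\to$ comparison with a model harmonic map), but it stalls precisely at the point you yourself flag as ``the hard part,'' and that is where the paper's actual argument lives. When $\Lambda>0$ the functional $\mathcal{I}_{\mathcal{B}}$ contains the term $3p^2\Lambda\left(\frac{A}{2\sqrt{2}\pi^2}\right)^2 e^{-4U}$ and the weight $\xi$ taken from the model solution, so the quantity you want to bound appears inside the very functional you are minimizing. Saying the estimate ``cuts out an implicit region in $(A,\mathcal{J},Q,\Lambda)$-space'' is a description of the difficulty, not a resolution: if you compare $\Psi$ with the model map $\Psi_0$ having the \emph{same} charges $(\mathcal{J},Q)$ but area $A_0\neq A$, the two maps are extremals of \emph{different} functionals (built from different areas and different $\xi$'s), and Theorem \ref{minimum} does not apply. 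The paper closes this circle with a calibration device that your proposal lacks: Lemma \ref{LMA} produces, along the scaling curve $\hat{\mathcal{J}}=\mathcal{J}\hat{A}^2/A^2$ (resp.\ $\hat{Q}=Q\hat{A}/A$), a unique hatted triple saturating the model relation, and one then rescales the given map itself,
\begin{equation*}
\hat{U}=U+\tfrac{1}{2}\log\tfrac{\hat{A}}{A},\qquad
\hat{\zeta}^i=\left(\tfrac{\hat{A}}{A}\right)^{3/2}\zeta^i,\qquad
\hat{\chi}=\tfrac{\hat{A}}{A}\,\chi,\qquad
\hat{\psi}^i=\left(\tfrac{\hat{A}}{A}\right)^{1/2}\psi^i,
\end{equation*}
so that $\hat{\Psi}$ has the hatted charges and the \emph{hatted} functional $\hat{\mathcal{I}}_{S^3}$ applies. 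The identity $\hat{\mathcal{I}}_{S^3}(\Psi)=\hat{\mathcal{I}}_{S^3}(\hat{\Psi})-3\alpha_{\hat{\xi}}\log(\hat{A}/A)$, together with Theorem \ref{minimum} and Proposition \ref{proposition3.2}, then yields $A\geq A^{3/2}\hat{A}^{-1/2}$, i.e.\ $\hat{A}\geq A$, and Lemma \ref{LMA} shows this is \emph{equivalent} to \eqref{mainJin}. Without this rescaling-and-calibration step your scheme has no mechanism for selecting the correct model parameters, and the ``delicate elimination of near-horizon parameters'' you defer cannot be carried out.

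A secondary but substantive error: you propose the constant test function $\varphi=1$ in the stability inequality. That choice kills the divergence term and yields only an integrated curvature bound; it does not produce the renormalized functional \eqref{eq3.13} at all. The paper's derivation requires $\varphi=\sqrt{\xi(1-x^2)/\det\lambda}$, and when $\Lambda>0$ the weight $\xi$ is a nonconstant function attached to the model extreme solution (it is $1$ only for $\Lambda=0$); it enters the final bound through $\alpha_\xi$ and $\beta_\xi^s$ and through the quasi-harmonic energy, whose convexity along geodesics (Proposition \ref{PropositionConvexity}) must be checked with the $\Lambda e^{-4u}$ term included --- the one point your proposal does gesture at correctly. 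Your rigidity discussion is fine in outline, but in the paper it too passes through the calibration: equality forces $(\hat{A},\hat{\mathcal{J}}_1,\hat{\mathcal{J}}_2)=(A,\mathcal{J}_1,\mathcal{J}_2)$, hence $\hat{\Psi}=\Psi$ and $\hat{\Psi}_0=\Psi_0$, before the target-isometry argument can be invoked.
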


\section{Construction of Potentials and Relation to Conserved Charges}\label{Sec2}

In this section we construct scalar potentials, and demonstrate how these potentials encode the charges and angular momenta defined above. The procedure follows that given in \cite{kunduri2011constructing} specialized to the case of the theory \eqref{sugraaction}.

Observe that since $dF =0$, Cartan's formula $\mathfrak{L}_X = \iota_X d + d \iota_X$ may be used to show that the following 1-forms are closed, yielding the existence of scalar potentials satisfying
\begin{equation}\label{defmagnetic}
d\psi^i=\iota_{\eta_{(i)}} F,
\end{equation}
where $\iota$ denotes the operation of interior product.
These may be interpreted as magnetic potentials and are globally defined in a tubular neighborhood $\tilde{M}$ of $\mathcal{B}$, since the orbit space $\tilde{M}/U(1)^2$ is simply connected \cite{Hollands2008} and the potentials
are functions on the orbit space. To see this last point, note that the quantities
\begin{equation}
\mathfrak{L}_{\eta_{(i)}}\psi^j = \iota_{\eta_{(i)}} \iota_{\eta_{(j)}} F
\end{equation}
are constants by standard arguments, and since the $\eta_{(i)}$ vanish at the
rotation axes these constants are zero.

Now define the 1-form
\begin{equation}\label{eq2.3}
\Upsilon = -\iota_{\eta_{(1)}} \iota_{\eta_{(2)}} \star F
\end{equation}
and observe that
\begin{equation}\label{eq2.4}
d\Upsilon = \frac{1}{\sqrt{3}}\iota_{\eta_{(1)}} \iota_{\eta_{(2)}}d\left(\mathcal{A} \wedge F\right)=\frac{1}{\sqrt{3}} d\left(\psi^1 d\psi^2 - \psi^2 d\psi^1 \right).
\end{equation}
This implies the existence of an electric potential satisfying
\begin{equation}\label{defchi}
d\chi = \Upsilon - \frac{1}{\sqrt{3}}\left(\psi^1 d\psi^2 - \psi^2 d\psi^1 \right).
\end{equation}
With the same reasoning as above, it may be shown that this potential is also globally
defined.

In order to construct charged twist potentials for the angular momentum consider the 1-forms
\begin{equation}\label{eq2.8}
\Theta^i = \star(\eta_{(1)} \wedge \eta_{(2)} \wedge d\eta_{(i)}),
\end{equation}
which satisfy
\begin{equation}\label{eq2.9}
d\Theta^i  = 2\star(\eta_{(1)} \wedge \eta_{(2)} \wedge \text{Ric}(\eta_{(i)}))
\end{equation}
where $\text{Ric}$ denotes the Ricci tensor of the spacetime metric $\mathbf{g}$. With the help of the Einstein equations \eqref{field}, an involved calculation \cite{kunduri2011constructing} shows that
\begin{equation}\label{eq2.10}
d\Theta^i =-\Upsilon\wedge \iota_{\eta_{(i)}} F
= d\left[\psi_i \left(d\chi + \frac{1}{3\sqrt{3}}(\psi^1 d\psi^2 - \psi^2 d\psi^1)\right)\right].
\end{equation}
It follows that there exist globally defined twist potentials such that
\begin{equation}\label{eq2.11}
d\zeta^i = \Theta^i  - \psi^i \left[d\chi + \frac{1}{3\sqrt{3}}(\psi^1 d\psi^2 - \psi^2 d\psi^1)\right].
\end{equation}

It will now be shown how these potentials are related to the various charges associated
with the MOTS $\mathcal{B}$. Since $\mathcal{B}$ is bi-axisymmetric
the isometry generators $\eta_{(i)}$ are tangent to $\mathcal{B}$. We may then introduce $2\pi$-periodic angular coordinates $\phi^i$ on $\mathcal{B}$ adapted to the symmetries, so that $\eta_{(i)} = \partial / \partial \phi^i$.  A third coordinate function $x$ arises from the volume form by
\begin{equation}
dx = \mathcal{C} \text{Vol}_{\gamma} (\eta_{(1)}, \eta_{(2)}, \cdot),
\end{equation}
where $\mathcal{C}$ is a constant.  According to \cite{Hollands2008} the 1-dimensional orbit space $\mathcal{B} / U(1)^2$ is diffeomorphic to a closed interval, and the constant $\mathcal{C}$ is chosen so that the orbit space is parameterized by $x \in [-1,1]$. In order to compute
the electric charge in terms of the potential $\chi$, observe that if
$\omega$ is a 3-form on $\mathcal{B}$ then
\begin{equation}
\int_\mathcal{B} \omega = 4\pi^2 \int_{-1}^{1} \iota_{\eta_{(2)}}\iota_{\eta_{(1)}} \omega.
\end{equation}
Then using the definition \eqref{defcharge}, \eqref{eq2.3}, \eqref{eq2.4}, and \eqref{defchi} that
\begin{equation}
Q =\frac{\pi}{4} \int_{-1}^{1} d\chi= \frac{\pi}{4}\left(\chi(1) - \chi(-1)\right).
\end{equation}

Next suppose that $\mathcal{B}= S^1 \times S^2$ and $\int_{S^2} F \neq 0$, so that the vector potential $\mathcal{A}$ is not globally defined.   From \eqref{defmagnetic} it follows that $F = d\phi^i \wedge d\psi^i$, and hence
\begin{equation}
\mathcal{D} = \frac{1}{2\pi} \int_{S^2} F = v^i(\psi^i(-1) - \psi^i(1)),
\end{equation}
where $v^i \eta_{(i)}$ ($v^i \in \mathbb{Z}$) is the Killing field that vanishes at the poles of the $S^2$.
			
Finally we turn to the angular momenta \eqref{defJ}. First note that
\begin{equation}
\iota_{\eta_{(2)}}\iota_{\eta_{(1)}}\left[\mathcal{A}(\eta_{(i)}) \left(\star F + \frac{2}{3\sqrt{3}} \mathcal{A} \wedge F \right)\right] = -\psi^i \left[d\chi + \frac{1}{3\sqrt{3}}\left(\psi^1 d\psi^2 - \psi^2 d\psi^1 \right)\right].
\end{equation}
Moreover, the formula
\begin{equation}
\iota_X \star \varsigma = (-1)^{4-k} \star(X \wedge \varsigma)
\end{equation}
is valid for $k$-forms $\varsigma$ on 5-dimensional Lorentzian manifolds and
reveals that
\begin{equation}
\iota_{\eta_{(2)}}\iota_{\eta_{(1)}}\star d [\mathbf{g}(\eta_{(i)}, \cdot)] = \Theta^i.
\end{equation}
Altogether this yields
\begin{equation}
\mathcal{J}_i = \frac{\pi}{4} \int_{-1}^{1} \left(\Theta^i - \psi^i\left[d\chi + \frac{1}{3\sqrt{3}}\left(\psi^1 d\psi^2 - \psi^2 d\psi^1 \right)\right]\right)  = \frac{\pi}{4} \int_{-1}^{1} d\zeta^i = \frac{\pi}{4}\left( \zeta^i (1) - \zeta^i(-1)\right).
\end{equation}

\section{The Area Functional}
\label{Sec3}

We now turn to deriving a lower bound on the area of a bi-axisymmetric stable MOTS $\mathcal{B}$ in terms of a certain area functional.  The critical points of this functional will be shown to correspond to spacetimes that describe, in a precise sense, the geometry in a neighborhood of a (stationary) extreme black hole. These near-horizon geometries are solutions of the spacetime Einstein equations in their own right, and will play the role of minimizers in what follows.

In the previous section coordinates $(x,\phi^1,\phi^2)$ where introduced on $\mathcal{B}$
in which the $\phi^i$ are adapted to the $U(1)^2$ isometry and $x$ parameterizes the orbit
space $\mathcal{B}/U(1)^2\cong[-1,1]$. As in \cite{Hollands2012} the induced metric on $\mathcal{B}$ takes the following form when expressed in these coordinates
\begin{equation}\label{eq2.12}
\gamma_{mn}dy^m dy^n=\frac{ dx^2}{\mathcal{C}^2\det{\lambda}}+{\lambda}_{ij}d\phi^id\phi^j,
\end{equation}
where the constant $\mathcal{C}>0$ has length dimension $-3$ and is related to the area of $\mathcal{B}$ by
\begin{equation}\label{eq2.13}
A=8\pi^2 \mathcal{C}^{-1}.
\end{equation}
The topology of $\mathcal{B}$ is characterized by the integer linear combinations of Killing fields that vanish at the endpoints $x = \pm 1$, which represent the fixed points of the torus action. Suppose that $a_\pm^i \eta_{(i)} \to 0$ as $x \rightarrow \pm 1$, with $a^i_\pm \in \mathbb{Z}$.  The matrix ${\lambda}_{ij}$ is rank 2 for $x \in (-1,1)$ and has a 1-dimensional kernel at $x = \pm 1$ spanned by $a_\pm ^i$, that is ${\lambda}_{ij}a^i_{\pm} \to 0$ at the endpoints.  Without loss of generality it may be assumed that $a_+ = (1,0)$, $a_- = (q,p)$ for some $p,q \in \mathbb{Z}$ with $\mathrm{gcd}(p,q) = 1$.  We have $(q,p) = (0,\pm 1)$ for $S^3$ topology, $(q,p) = (\pm 1, 0)$ for $S^1 \times S^2$ topology, and $L(p,q)$ otherwise \cite{Hollands2012}.   Note that the absence of  conical singularities requires
\begin{equation}\label{eq2.14}
\lim_{x\to \pm 1} \frac{(1-x^2)^2}{\det {\lambda} \cdot a^i_\pm a^j_\pm {\lambda}_{ij}} = \mathcal{C}^2.
\end{equation}

Following \cite{Hollands2012,hollands2010all,racz2008simple}, Gaussian null coordinates $(u,r,y^m)$ may be introduced in a neighborhood of the MOTS $\mathcal{B}$. Here $\mathbf{n}=\partial_u$ and $\mathbf{l}=\partial_r$
are future pointing null vectors which coincide with the normal vectors of the same notation in Section \ref{Sec1} on $\mathcal{B}$, and satisfy $\mathbf{g}(\mathbf{n},
\mathbf{l})=-1$. The coordinates $y^m$ are Lie transported off of $\mathcal{B}$ by $\mathbf{n}$ and $\mathbf{l}$. This process yields a foliation of the neighborhood of $\mathcal{B}$, with parameters $(u,r)$, whose leaves are denoted by $\mathcal{B}(u,r)$ and for which $\mathcal{B}(0,0)=\mathcal{B}$. It can be shown that in these coordinates the spacetime metric takes the Gaussian null form
\begin{equation}\label{eq3.1}
\mathbf{g}=-2du\left(dr-\alpha r^2 du-r\beta_m dy^m\right)+\gamma_{mn}dy^m dy^n,
\end{equation}
where $\alpha$ is a smooth function, $\beta=\beta_m dy^m$ is a 1-form, and $\gamma$ is the induced metric on $\mathcal{B}(u,r)$. Note that this expression may be
simplified with the help of the coframe
\begin{equation}\label{eq2.16}
e^+ = du, \quad e^- = dr - \alpha r^2 du - r(\beta_x dx + \beta_i d\phi^i),\quad
e^x = \frac{dx}{\mathcal{C} \sqrt{\det{\lambda}}}, \quad e^i = d\phi^i,
\end{equation}
so that
\begin{equation}\label{eq2.17}
\mathbf{g} = -2 e^+ e^- + (e^x)^2+ \lambda_{ij} e^i e^j.
\end{equation}

\begin{lemma}\label{lemma3.1}
Let $(M, \mathbf{g}, F)$ be a bi-axisymmetric solution of 5-dimensional minimal supergravity, and let $\mathcal{B}$ be a bi-axisymmetric stable MOTS.
For any bi-axisymmetric $\varphi\in C^{\infty}(\mathcal{B})$ the stability inequality holds
\begin{equation}\label{eq3.5}
\int_{\mathcal{B}}2|\nabla\varphi|^2_{\gamma}+ \left(R_{\gamma}
+\frac{1}{2}\left[ \langle\beta,N\rangle_{\gamma}^2-|\beta|^2_{\gamma}\right]
-2T(\mathbf{n},\mathbf{l})-12\Lambda\right)\varphi^2 \geq 0,
\end{equation}
where $R_{\gamma}$ is the scalar curvature of $\mathcal{B}$, $N=\mathcal{C}\sqrt{\det\lambda}\partial_{x}$ is the unit normal to the Killing directions $\eta_{(i)}$, and $T$ denotes the stress-energy tensor.
\end{lemma}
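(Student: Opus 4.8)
The plan is to realize \eqref{eq3.5} as the rescaled Rayleigh quotient inequality associated with the MOTS stability operator, specialized to the bi-axisymmetric sector. First I would recall the second-variation formula for the outer expansion: linearizing $\theta_{\mathbf{n}}$ under deformations of $\mathcal{B}$ in the direction $\psi\mathbf{l}$ produces
\[
\delta_{\psi\mathbf{l}}\theta_{\mathbf{n}} = L\psi := -\Delta_\gamma\psi + 2\langle\omega,\nabla\psi\rangle_\gamma + \Big(\tfrac12 R_\gamma - G(\mathbf{n},\mathbf{l}) + \text{div}_\gamma\omega - |\omega|_\gamma^2 - \tfrac12|\sigma|_\gamma^2\Big)\psi,
\]
where $\sigma$ is the shear of $\mathbf{n}$, $G$ is the spacetime Einstein tensor, and $\omega$ is the connection one-form of the normal bundle of $\mathcal{B}$. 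A short computation of the Christoffel symbols of \eqref{eq3.1} at $r=0$ gives $\mathbf{g}(\nabla_{\partial_{y^m}}\mathbf{n},\mathbf{l}) = -\tfrac12\beta_m$, so that $\omega = \tfrac12\beta$ on $\mathcal{B}$; tracking this factor is essential for matching coefficients. Reading stability in the Andersson--Mars--Simon sense as $\lambda_1(L)\ge 0$ furnishes a positive principal eigenfunction $\phi$ with $L\phi=\lambda_1\phi\ge 0$. Since $\mathbf{g}$, $F$ and $\mathcal{B}$ are $U(1)^2$-invariant, $L$ commutes with the torus action and uniqueness forces $\phi$ to be bi-axisymmetric, so $\nabla\phi$ is parallel to $N$.

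Second, I would symmetrize. Setting $w=\log\phi$, dividing $L\phi=\lambda_1\phi$ by $\phi$, multiplying by $\varphi^2$ for a bi-axisymmetric test function $\varphi$, integrating over $\mathcal{B}$, and integrating the $-\Delta_\gamma w$ and $\text{div}_\gamma\omega$ terms by parts, one arrives at
\[
0\le\int_\mathcal{B}\Big[-|\varphi\nabla w|^2 + 2\langle\varphi\nabla w,\ \nabla\varphi+\varphi\omega\rangle - 2\varphi\langle\omega,\nabla\varphi\rangle + \big(\tfrac12 R_\gamma - G(\mathbf{n},\mathbf{l}) - |\omega|^2 - \tfrac12|\sigma|^2\big)\varphi^2\Big].
\]
This is where bi-axisymmetry does the real work: because $\varphi$ and $w$ depend only on $x$, both $\nabla\varphi$ and $\nabla w$ are parallel to $N$, so the cross term only sees the $N$-component of $\nabla\varphi+\varphi\omega$. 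Completing the square in the $N$-direction \emph{alone} gives the pointwise bound $-|\varphi\nabla w|^2 + 2\langle\varphi\nabla w,\nabla\varphi+\varphi\omega\rangle\le |\nabla\varphi + \varphi\langle\omega,N\rangle N|^2$, which is sharper than the isotropic bound and is precisely what leaves an anisotropic torsion remainder. Expanding and cancelling the $\langle\omega,\nabla\varphi\rangle$ terms collapses everything to
\[
0\le\int_\mathcal{B}|\nabla\varphi|^2 + \Big(\tfrac12 R_\gamma - G(\mathbf{n},\mathbf{l}) - \tfrac12|\sigma|^2 + \langle\omega,N\rangle^2 - |\omega|^2\Big)\varphi^2.
\]

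Finally I would insert $\omega=\tfrac12\beta$, so that $\langle\omega,N\rangle^2 - |\omega|^2 = \tfrac14(\langle\beta,N\rangle^2 - |\beta|^2)$, discard the manifestly non-positive shear term $-\tfrac12|\sigma|^2$ (which only enlarges the potential and hence preserves the inequality), and multiply through by $2$. It then remains to convert the geometric term $-2G(\mathbf{n},\mathbf{l})$ into matter data: substituting the field equations \eqref{field}, tracing to eliminate the scalar curvature via $G_{ab}=R_{ab}-\tfrac12 R\,\mathbf{g}_{ab}$, and using $\mathbf{g}(\mathbf{n},\mathbf{l})=-1$, this contraction becomes exactly $-2T(\mathbf{n},\mathbf{l})-12\Lambda$, with the coefficient $12$ arising from the five-dimensional trace adjustment. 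This yields \eqref{eq3.5}.

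I expect the main obstacle to be the two bookkeeping points on which the final coefficients hinge: identifying the normal-bundle connection as $\omega=\tfrac12\beta$ rather than $\beta$ (this factor propagates into the $\tfrac12$ multiplying $\langle\beta,N\rangle^2-|\beta|^2$), and executing the square-completion anisotropically so that only the $N$-component of $\omega$ is absorbed. A fully isotropic completion would cancel the torsion entirely and produce a weaker inequality with no $\beta$-terms; it is exactly the restriction to bi-axisymmetric $\varphi$ and $\phi$ that licenses the sharper $N$-directional bound and thereby the stated remainder $\tfrac12(\langle\beta,N\rangle^2-|\beta|^2)$.
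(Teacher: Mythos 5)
Your proposal is correct in substance but follows a genuinely different route from the paper's. The paper's notion of stability is the \emph{pointwise} condition $\mathfrak{L}_{\mathbf{l}}\theta_{\mathbf{n}}\leq 0$, and its proof is three lines: it quotes the Hollands-type identity
\begin{equation*}
R_{\gamma}-\operatorname{div}_{\gamma}\beta-\tfrac{1}{2}|\beta|^2_{\gamma}
-2T(\mathbf{n},\mathbf{l})-12\Lambda
=-2\theta_{\mathbf{n}}\theta_{\mathbf{l}}-2\mathfrak{L}_{\mathbf{l}}\theta_{\mathbf{n}},
\end{equation*}
concludes that the left side is pointwise nonnegative on a stable MOTS, multiplies by $\varphi^2$, integrates the divergence term by parts, and applies Young's inequality --- with bi-axisymmetry entering only through $\langle\beta,\nabla\varphi\rangle_{\gamma}=\langle\beta,N\rangle_{\gamma}\langle N,\nabla\varphi\rangle_{\gamma}$, which is exactly why only $\langle\beta,N\rangle^2$ survives in the potential. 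Your route is the Dain--Jaramillo--Reiris-style spectral argument: principal eigenfunction $\phi$, logarithmic substitution $w=\log\phi$, and an anisotropic completion of squares. This is strictly more general, since it needs only nonnegativity of the principal eigenvalue (in the appropriate sign convention), which the paper's pointwise condition implies by using $\psi\equiv 1$ as a positive barrier; the price is extra machinery --- Krein--Rutman existence/uniqueness of a positive principal eigenfunction for the non-self-adjoint operator $L$, plus its $U(1)^2$-invariance. Indeed, under the paper's hypothesis you may simply take $\phi\equiv 1$, $w\equiv 0$, and your symmetrization collapses exactly to the paper's argument: your ``anisotropic square-completion'' and the paper's $N$-directional Young step are the same observation. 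Your Christoffel computation giving $\omega=\tfrac12\beta$ (up to sign, and signs are harmless since $\beta$ enters only quadratically) and the conversion $2G(\mathbf{n},\mathbf{l})=2T(\mathbf{n},\mathbf{l})+12\Lambda$ via $G_{ab}=T_{ab}-6\Lambda\mathbf{g}_{ab}$ and $\mathbf{g}(\mathbf{n},\mathbf{l})=-1$ are both correct and reproduce the paper's coefficients.

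Two caveats, neither fatal. First, a sign-convention slip: with $\mathbf{l}$ the future inward null normal and the paper's stability $\mathfrak{L}_{\mathbf{l}}\theta_{\mathbf{n}}\leq 0$, the operator $\psi\mapsto\delta_{\psi\mathbf{l}}\theta_{\mathbf{n}}$ has non\emph{positive} principal eigenvalue; comparing with the identity above, the potential you wrote for $L$ equals $-\mathfrak{L}_{\mathbf{l}}\theta_{\mathbf{n}}$ (up to the shear term), so your $L$ is in fact the \emph{negative} of the cross-variation. As literally stated, ``$\delta_{\psi\mathbf{l}}\theta_{\mathbf{n}}=L\psi$ together with $\lambda_1(L)\geq 0$'' is inconsistent with the paper's hypothesis; once $L$ is identified as $-\delta_{\psi\mathbf{l}}\theta_{\mathbf{n}}$ everything downstream is consistent. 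Second, the shear term $-\tfrac12|\sigma|^2$ does not belong in the cross-variation: the shear of $\mathbf{n}$ appears in the Raychaudhuri variation $\delta_{\psi\mathbf{n}}\theta_{\mathbf{n}}$, while the cross-variation carries only $\theta_{\mathbf{n}}\theta_{\mathbf{l}}$-type products, which vanish on a MOTS --- consistent with the absence of any shear in the paper's identity. Since you discard this term with the correct sign (it only shrinks the potential, so dropping it preserves the inequality), the conclusion is unaffected, but the asserted linearization formula should be corrected.
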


\begin{proof}
A computation \cite{Hollands2007} shows that
\begin{equation}\label{eq3.1}
R_{\gamma}-\text{div}_{\gamma}\beta-\frac{1}{2}|\beta|^2_{\gamma}-2T(\mathbf{n},
\mathbf{l})
-12\Lambda=-2\theta_{\mathbf{n}}\theta_{\mathbf{l}}-2\mathfrak{L}_{\mathbf{l}}
\theta_{\mathbf{n}}.
\end{equation}
Since $\mathcal{B}$ is a stable MOTS $\theta_{\mathbf{n}}=0$ and $\mathfrak{L}_{\mathbf{l}}\theta_{\mathbf{n}}\leq 0$. It follows that
\begin{equation}\label{eq3.2}
R_{\gamma}-\text{div}_{\gamma}\beta-\frac{1}{2}|\beta|^2_{\gamma}-2T(\mathbf{n},
\mathbf{l})-12\Lambda\geq 0.
\end{equation}
Then multiplying by $\varphi^2$, integrating the divergence term by parts, and applying Young's inequality yields the desired result.
\end{proof}

We now seek to express the integrand of \eqref{eq3.5} in terms of the potentials of Section \ref{Sec2} and the fiber metric $\lambda$. Let $\beta_i=\beta(\eta_{(i)})$ and $\beta^i=\lambda^{ij}\beta_{j}$, then a calculation gives
\begin{align}\label{eq2.18}
\begin{split}
\eta_{(1)} \wedge \eta_{(2)} \wedge d\eta_{(i)} =& \beta_i(-e^+ \wedge e^- - r e^+ \wedge \beta) \wedge ( \det{\lambda} e^1) \wedge e^2\\
&- r {\lambda}_{ij} (\partial_x \beta^j) \det {\lambda} e^+ \wedge (\mathcal{C} \sqrt{\det {\lambda}} e^x) \wedge e^1 \wedge e^2.
\end{split}
\end{align}
Evaluating at $r =0$ yields
\begin{equation}\label{eq2.19}
\beta_i=\mathcal{C}\Theta^{i}_{x},
\end{equation}
where
\begin{equation}\label{Thetax}
\Theta^i_x = \Theta^i(\partial_{x}) = \partial_x\zeta^i+ \psi^i \left(\partial_x\chi + \frac{1}{3\sqrt{3}}(\psi^1 \partial_x\psi^2 - \psi^2 \partial_x\psi^1)\right).
\end{equation}
It follows that
\begin{equation}\label{eq2.20}
|\beta|^2_{\gamma}=\left<\beta,N\right>^2_{\gamma}+{\lambda}^{ij}\beta_i\beta_j
=\left<\beta,N\right>^2_{\gamma}+\mathcal{C}^2{\lambda}^{ij}\Theta^i_x\Theta^j_x.
\end{equation}
Furthermore a computation \cite{Hollands2012} shows that
\begin{equation}\label{eq2.21}
R_{\gamma}=\mathcal{C}^2\det{\lambda}\left[-\frac{\partial^2_x\det{\lambda}}{\det {\lambda}}+\frac{1}{4}\frac{\left(\partial_x \det{\lambda}\right)^2}{(\det {\lambda})^2}-\frac{1}{4}\text{Tr}({\lambda}^{-1}\partial_x{\lambda})^2\right].
\end{equation}
			
We now turn to the Maxwell field in order to compute the relevant portion of the stress-energy tensor. As shown in \cite{kunduri2011constructing}, this field may be expressed as
\begin{equation}\label{eq2.22}
F=\frac{1}{\det{\lambda}}\left[\star\left(\eta_{(2)}\wedge\eta_{(1)}\wedge \Upsilon\right)+(\det{\lambda}) {\lambda}^{ij}\eta_{(i)}\wedge d\psi^j\right].
\end{equation}
Since $\chi$ and $\psi^i$ are functions of $x$ alone, and
\begin{equation}\label{eq3.15}
\Upsilon_x=\Upsilon(\partial_{x})= \partial_x\chi+ \frac{1}{\sqrt{3}}\left(\psi^1 \partial_x\psi^2 - \psi^2 \partial_x\psi^1 \right),
\end{equation}			
it follows that
\begin{equation}\label{eq2.23}
F = -\mathcal{C}\Upsilon_x  du \wedge e^-  - r \mathcal{C} \Upsilon_x \beta_i du \wedge d\phi^i  + r\beta^i \partial_x \psi^i du \wedge dx - \partial_x \psi^i dx \wedge d\phi^i.
\end{equation}
Next note that
\begin{equation}
T_{ab} = \frac{1}{8} (\star F)_{acd}(\star F)_b^{~cd} + \frac{1}{4}F_{ac}F_b^{~c}.
\end{equation}
Since $\mathbf{n} = \partial_u$ and $\mathbf{l} = \partial_r$, a computation shows that (at $r=0$)
\begin{equation}
(\iota_{\mathbf{n}} F)_c (\iota_{\mathbf{l}} F)^c = \mathcal{C}^2 \Upsilon_x^2.
\end{equation}
In order to deal with the term involving $\star F$, observe that
\begin{equation}
\star(e^+ \wedge e^- ) = -\text{Vol}_{\gamma} = -\mathcal{C}^{-1} dx \wedge d\phi^1 \wedge d\phi^2,
\qquad  \star(e^x \wedge e^i)  = \epsilon^{i}_{j} e^+ \wedge e^- \wedge e^j,
\end{equation}
where $\epsilon$ is the volume form associated with $\lambda$. From \eqref{eq2.22} we then have
\begin{equation}
\star F = \mathcal{C}\Upsilon_x  \text{Vol}(\gamma) - \mathcal{C}(\partial_x \psi^i) \sqrt{\det {\lambda}} \epsilon_{ij} e^+ \wedge e^- \wedge e^j,
\end{equation}
which implies
\begin{equation}
(\iota_{\mathbf{n}} \star F)_{cd} (\iota_{\mathbf{l}} \star F)^{cd} = 2 \mathcal{C}^2 (\det {\lambda}) \lambda^{ij} \partial_x \psi^i \partial_x \psi^j.
\end{equation}
Therefore at $r=0$
\begin{equation}\label{G(n,l)}
T(\mathbf{n},\mathbf{l}) = \frac{\mathcal{C}^2}{4}\left(\Upsilon_x^2 + (\det \lambda) {\lambda}^{ij} \partial_x \psi^i \partial_x \psi^j\right).
\end{equation}

It remains to choose $\varphi$ and compute its Dirichlet energy density.
Let $\xi\in C^{\infty}(\mathcal{B})$ be a particular smooth positive function (of $x$) associated to the relevant extreme stationary black hole solution of 5D minimal supergravity, which satisfies $\xi=1$ when $\Lambda=0$. Then set
\begin{equation}\label{eq3.8}
\varphi = \sqrt{\xi\frac{(1-x^2)}{\det {\lambda}}}=\sqrt{\xi\bar{\varphi}}.
\end{equation}
Note that
\begin{equation}\label{eq3.9} \nabla\varphi=\frac{\nabla\xi}{2}\sqrt{\frac{\bar{\varphi}}{\xi}}
+\frac{\nabla\bar{\varphi}}{2}
\sqrt{\frac{\xi}{\bar{\varphi}}},\qquad |\nabla\varphi|^2=\frac{|\nabla\xi|^2}{4}{\frac{\bar{\varphi}}{\xi}}
+\frac{|\nabla\bar{\varphi}|^2}{4}
{\frac{\xi}{\bar{\varphi}}}+\frac{1}{2}\nabla\xi\cdot \nabla\bar{\varphi},
\end{equation}
where
\begin{equation}\label{eq3.10}
\frac{|\nabla\bar{\varphi}|^2}{4\bar{\varphi}} = -\mathcal{C}^2 + \frac{\mathcal{C}^2}{1-x^2} + \mathcal{C}^2 x \frac{\partial_x \det {\lambda}}{\det {\lambda}} + \frac{\mathcal{C}^2(1-x^2)}{4} \frac{(\partial_x \det {\lambda})^2}{(\det {\lambda})^2}.
\end{equation}

By combining these formulae we find that the integrand of \eqref{eq3.5} takes the form
\begin{equation}\label{eq3.11}
\begin{split}
	&-\mathcal{C}^2(1-x^2)\xi\left(\frac{1}{4} \frac{(\partial_x \det {\lambda})^2}{(\det {\lambda})^2} + \frac{\text{Tr} ({\lambda}^{-1} \partial_x {\lambda})^2}{4} + \frac{1}{2\det{\lambda}} {\lambda}^{ij} \Theta_{x}^{i} \Theta_{x}^{j} \right.\\
	&\left.+ \frac{1}{2\det {\lambda}}\left[\Upsilon_{x}^{2} + \det \lambda \lambda^{ij} \partial_x \psi^i \partial_x \psi^j\right]\right) + \frac{2\mathcal{C}^2\xi}{1-x^2}-\frac{12\Lambda(1-x^2)\xi}{\det\lambda}\\
	&-\mathcal{C}^2\partial_x\left(\frac{(1-x^2) \partial_x \det {\lambda}}{\det {\lambda}}\xi  + 2x\xi \right)+\mathcal{C}^2(1-x^2)\frac{\xi'^2}{2\xi}.
\end{split}
\end{equation}
Since
\begin{equation}
\det {\lambda}  = c_\pm (1-x^2) + O(1-x^2)^2 \quad \text{ as } \quad x\to \pm 1
\end{equation}
for some constants $c_{\pm}$, it holds that
\begin{equation}\label{eq3.12}
\left(\frac{(1-x^2) \partial_x \det {\lambda}}{\det {\lambda}}\xi  + 2x \xi\right)\Bigg|_{x=-1}^{x=+1} = 0.
\end{equation}
Therefore in light of Lemma \ref{lemma3.1} the following area functional is nonpositive
\begin{equation}\label{eq3.13}
\mathfrak{I} = \int_{-1}^{+1}\xi\left[(1-x^2)I-\frac{1}{1-x^2} \right]\; d x+\int_{-1}^{1}(1-x^2)\left(\frac{6\Lambda\xi}{\mathcal{C}^2\det{\lambda}}
-\frac{\xi'^2}{4\xi}\right)\, dx\leq 0,
\end{equation}
where
\begin{equation}\label{eq3.14}
I=\frac{1}{8} \frac{(\partial_x\det {\lambda})^2}{(\det {\lambda})^2} + \frac{1}{8}\text{Tr} ({\lambda}^{-1}\partial_x {\lambda})^2 + \frac{1}{4\det{\lambda}} \Theta^T_x\lambda^{-1}\Theta_x + \frac{1}{4 \det {\lambda}}\Upsilon^2_x + \frac{1}{4}\partial_x\psi^T\lambda^{-1}\partial_x\psi
\end{equation}
with $\Theta^T_x=(\Theta^1_x,\Theta^2_x)$ and $\psi^T=(\psi^1,\psi^2)$.

\section{Relation to Near-Horizon Geometries of Extreme Black Holes}
\label{Sec4}

In this section the relationship of the area functional $\mathcal{I}$ to a harmonic energy will be described. The latter arises from the reduction of Einstein's equations on $U(1)^2$-invariant spacetimes. In particular, the critical points of this functional give rise to near-horizon geometries. For simplicity, the discussion here will be restricted to the case $\Lambda=0$ and $\xi=1$.

Observe that the functional may be reorganized as
\begin{equation}
\mathfrak{I} = \int_{-1}^{+1}\left[(1-x^2)G_{AB} \frac{dX^A}{dx} \frac{dX^B}{dx}-\frac{1}{1-x^2} \right]\; d x \leq 0,
\end{equation}
where $I$ has been expressed as the pullback to the orbit space $[-1,1]$ of the nonpositively curved metric on symmetric space $G_{2(2)}/SO(4)$ given by
\begin{equation}\label{targetmetric}
 G_{AB}dX^A dX^B = \frac{(d \det \lambda)^2}{8(\det \lambda)^2} + \frac{\text{Tr} (\lambda^{-1} d \lambda)^2}{8} + \frac{\lambda^{ij}\Theta^i  \Theta^j}{4\det\lambda}  + \frac{\Upsilon^2}{4 \det \lambda}
 + \frac{\lambda^{ij} d\psi^i d\psi^j}{4},
\end{equation}
with target space coordinates $X= (\lambda_{ij}, \zeta^i,\chi, \psi^i)$; note that $\Upsilon$ and $\Theta^i$ are given in terms of these coordinates by \eqref{defchi} and \eqref{eq2.11} respectively. Hence $\mathfrak{I}$ is related in a rather simple way to the Dirichlet energy of maps $[-1,1] \to G_{2(2)}/SO(4)$.  Furthermore it turns out that the target metric \eqref{targetmetric} may be given conveniently by
\begin{equation}
G_{AB} dX^A dX^B = \frac{1}{16}\text{Tr}(\mathcal{M}^{-1} d\mathcal{M}
  \mathcal{M}^{-1} d\mathcal{M}),
\end{equation}
where $\mathcal{M}$ is a positive definite, unimodular coset representative of $G_{2(2)}/SO(4)$ constructed from the scalars $X^A$ whose specific form will not be required here (see, e.g. \cite{kunduri2011constructing}).

In what follows it will be shown that $\mathfrak{I}$ vanishes on harmonic maps, and that these harmonic maps arise from near-horizon geometries.
To begin, consider a 5-dimensional spacetime $(M,\mathbf{g},F)$ which admits a $U(1)^2$ isometry subgroup. The metric may be expressed in the general form
\begin{equation}
\mathbf{g} = \frac{h_{\mu \nu} dx^\mu dx^\nu}{\text{det}\tilde{\lambda}}+
\tilde{\lambda}_{ij}(d\tilde{\phi}^i + \omega^i)(d\tilde{\phi}^j + \omega^j)
\end{equation}
where as before $\partial_{\tilde{\phi}^i}$, $i=1,2$ generate the isometry group and $x^\mu$ represent coordinates on a 3-dimensional `base space' $M_3$ with Lorentzian metric $h$.  The $\omega^i = \omega^i_\mu dx^\mu$ are 1-forms on $M_3$ which measure the obstruction of the Killing fields to being hypersurface orthogonal, and $\tilde{\lambda}_{ij}$ are functions on $M_3$.  Thus the spacetime can be viewed as a $T^2$ fibration over $M_3$. In addition, the decomposition of the Maxwell field into scalar potentials has been discussed in Section \ref{Sec3}.

Now suppose that $(M,\mathbf{g},F)$ is a solution of the field equations of minimal supergravity \eqref{field}. Upon reduction it can be shown that the resulting equations describe the critical points of a 3-dimensional theory of gravity coupled to a wave map (nonlinear sigma model) with action
\begin{equation}
\mathcal{S}[h,X] = \int_{M_3}\left(R_h - 2 h^{\mu \nu}G_{AB}\partial_\mu X^A \partial_\nu X^B \right)\text{Vol}_h,
\end{equation}
where $R_h$ is the scalar curvature of $h$.  The reduced field equations are then given by
\begin{align}\label{fieldeqsigma}
\begin{split}
\text{Ric}(h)_{\mu \nu} &= \frac{1}{8} \text{Tr}(\mathcal{M}^{-1}
\partial_\mu \mathcal{M} \mathcal{M}^{-1} \partial_\nu \mathcal{M}),
\\
\nabla^{\mu}(\mathcal{M}^{-1}\partial_\mu \mathcal{M}) &=0.
\end{split}
\end{align}
As is well known, a similar reduction occurs for other gravity models reduced on tori, most notably pure vacuum gravity and $D=4$ Einstein-Maxwell theory.

Let us further assume that the spacetime contains a degenerate Killing horizon. This means that there is an embedded null hypersurface $\mathcal{N}$ on which $|V|=0$ and $\nabla_V V =0$, for some Killing field $V$.
A cross-section of $\mathcal{N}$ is a spatial $3$-dimensional manifold $H$, which  will be taken to be closed.  The most important examples of such spacetimes are extreme stationary black holes with horizon cross-sections $H$. In a neighborhood of $\mathcal{N}$ one may introduce Gaussian null coordinates, and take the near-horizon limit \cite{Kunduri:2013ana} to find the near-horizon metric
\begin{equation}
\mathbf{g}_{NH} = -2 du (dr -r^2 \tilde{\alpha}(y) du -r\tilde{\beta}_m(y) dy^m)
 + \tilde{\gamma}_{mn}(y) dy^m dy^n
\end{equation}
where $\tilde{\alpha}$ and $\tilde{\beta}$ are a smooth function and a 1-form on the 3-dimensional closed manifold $(H, \tilde{\gamma})$.  Note that $V = \partial_u$ and that $\mathcal{N}$ is defined by $r =0$.   Thus the near-horizon geometry is characterized completely by the triple $(\tilde{\alpha}, \tilde{\beta}_m, \tilde{\gamma}_{mn})$, which are collectively referred to as the near-horizon data.

This near-horizon geometry inherits the $U(1)^2$ isometries from its `parent' spacetime. In fact it is shown in \cite{Kunduri:2007vf} that there is an `enhancement of symmetry' from $\mathbb{R} \times U(1)^{2}$ to $SO(2,1) \times U(1)^{2}$.  It follows that the near-horizon metric and Maxwell field take the form
\cite{kunduri2011constructing, Kunduri:2007vf}
\begin{align}
\begin{split}
\mathbf{g}_{NH} =& \Xi(x)\left[-\frac{r^2 du^2}{\ell^2} - 2du dr\right] + L^2\left[\frac{dx^2}{\det \tilde{\lambda}(x)} + \tilde{\lambda}_{ij}(x)\left(d\phi^i + \frac{\mathbf{b}^i rdu}{L^2}\right)\left(d\phi^j + \frac{\mathbf{b}^j rdu}{L^2}\right)\right],   \\
F_{NH} =& d \left[\frac{\mathbf{a} r du}{L} + L \tilde{\psi}_i(x)\left(d\phi^i + \frac{\mathbf{b}^i r du}{L^2}\right) \right].
\end{split}
\end{align}
The constants $\ell$ and $L$ are length scales introduced so that certain coordinates are dimensionless,  $\Xi(x)>0$ and $\tilde{\psi}_i(x)$ are smooth functions on $H$, and $\mathbf{a}$, $\mathbf{b}^i$ are constants.  Observe that the 2-dimensional metric in the first square bracket is that of AdS$_2$.
Hence a near-horizon geometry can be though of as (in general a twisted) $H$ bundle over AdS$_2$. Note that when $\mathbf{b}^i \neq 0$, the action of $SO(2,1)$ will transform $r du$ by an exact function, which can be compensated by a corresponding $U(1)$ shift in the appropriate angular coordinate. It is easily seen that the near-horizon geometries of  extreme Reissner-Nordstr\"{o}m  (AdS$_2 \times S^2$) and the extreme Kerr (a twisted fibration of $S^2$ over AdS$_2$) both fall into the above general class.

The near-horizon data may be identified with harmonic map coordinates in the following way; explicit details are given in \cite{kunduri2011constructing}. Set $\bar{\phi}^i = L \phi^i$ and $(x^1,x^2,x^3) = (v,r,x)$ then
\begin{equation}
h_{\mu \nu} dx^\mu dx^\nu = L^2 dx^2 + \Xi \det \tilde{\lambda}(x)\left[-\frac{r^2 du^2}{\ell^2} - 2du dr\right], \qquad
\omega^i = \frac{\mathbf{b}^i r du}{L}, \qquad \tilde{\lambda}_{ij} = L^{-2} \lambda_{ij}.
\end{equation}
As for the potentials
\begin{equation}
\psi^i = -L \tilde{\psi}_i,
\end{equation}
and
\begin{equation}
\partial_{x}\chi = \frac{L ^2(\mathbf{a} + \mathbf{b}^i \tilde{\psi}_i )}{\Xi} - \frac{L^2}{\sqrt{3}}\left(\tilde{\psi}_1 \partial_x \tilde{\psi}_2
- \tilde{\psi}_2 \partial_x \tilde{\psi}_1 \right).
\end{equation}
Furthermore a calculation shows that
\begin{equation}
\Theta^i = \frac{L^3 \mathbf{b}^j\tilde{\lambda}_{ij} }{\Xi} dx,
\end{equation}
which implies that the charged twist potentials are given by
\begin{equation}
\partial_{x}\zeta^i = \frac{L^3 \mathbf{b}^j\tilde{\lambda}_{ij} }{\Xi} + L \tilde{\psi}_i\left[\partial_{x}\chi  + \frac{L^2}{3\sqrt{3}}\left(\tilde{\psi}_1
\partial_{x}\tilde{\psi}_2 - \tilde{\psi}_2 \partial_{x}\tilde{\psi}_1\right)\right].
\end{equation}
Note also that $\mathcal{C} = L^{-3}$.

In summary, given a near-horizon geometry we can read off the corresponding harmonic map data $(\lambda_{ij},\zeta^i, \chi, \psi^i)$, and the process can clearly be reversed to solve for $(\tilde{\lambda}_{ij},\tilde{\psi}_i, \mathbf{b}^i,  \mathbf{a})$.  It is also evident that the matrix $\mathcal{M}$ defined above is a function of $x$ alone.  Using this, the coupled 3D gravity-harmonic map equations \eqref{fieldeqsigma} may be simplified. The $(uu)$ and $(ur)$ components of the Einstein equations yield
\begin{equation}
\partial_{x}^2(\Xi \det \tilde{\lambda}) + 2 \frac{L^2}{\ell^2} = 0
\end{equation}
so that
\begin{equation}
\Xi \det \tilde{\lambda}(x) = \frac{L^2}{\ell^2} (1-x^2),
\end{equation}
where we have used the fact that $\Xi \det \tilde{\lambda}$ vanishes at $x=\pm 1$ (where $\tilde{\lambda}_{ij}$ has rank 1).  Note that the induced metric on a horizon-cross section $H$ is then
\begin{equation}
\frac{\ell^2 \Xi(x) dx^2}{(1-x^2)} + \lambda_{ij} d\phi^i d\phi^j.
\end{equation}
The harmonic map equations reduce to
\begin{equation}\label{matrixeq}
\partial_{x} \left[(1-x^2) \mathcal{M}^{-1} \partial_{x} \mathcal{M}\right] =0,
\end{equation}
and coincide with the Euler-Lagrange equations for the functional $\mathfrak{I}$. Thus, the near-horizon geometries are critical points of $\mathfrak{I}$. Furthermore, the $(xx)$ component of the 3D Einstein equations \eqref{fieldeqsigma} place an algebraic constraint on $\mathcal{M}$. Namely, direct integration produces
\begin{equation}
(1-x^2)\mathcal{M}^{-1}\partial_{x}\mathcal{M}=\mathcal{M}_0
\end{equation}
for some constant matrix $\mathcal{M}_0$.  Since
\begin{equation}
\text{Ric}(h)_{xx} = \frac{2}{(1-x^2)^2},
\end{equation}
it follows that $\text{Tr}(\mathcal{M}_{0}^2) = 16$.  The remaining components of the 3D Einstein equations are automatically satisfied. This shows that determining a near-horizon geometry is equivalent to solving \eqref{matrixeq} for the harmonic map scalars.

Finally, observe that for a near-horizon geometry the above calculations show that
\begin{equation}
(1-x^2)^2 \text{Tr}\left[\mathcal{M}^{-1}\partial_{x}\mathcal{M}
 \mathcal{M}^{-1}\partial_{x}\mathcal{M}\right] = 16.
\end{equation}
Hence $\mathfrak{I} = 0$ when evaluated at near-horizon geometries.

\section{Reparameterization of the Target and Area Lower Bound}	
\label{Sec5}

Suppose that $\mathcal{B}$ is diffeomorphic to $L(p,q)$ where $p$ and $q$ are mutually prime integers, and let $a^i_{+}\eta_{(i)}$ and $a^i_-\eta_{(i)}$ be the linear combinations of the $U(1)^2$ generators which vanish at $x=1$ and $x=-1$, respectively. This is equivalent to
\begin{equation}
a^i_+\lambda_{ij}=0 \text{ }\text{ at $x=1$},\qquad \qquad a^i_-\lambda_{ij}=0 \text{ } \text{ at $x=-1$},
\end{equation}
and without loss of generality \cite{Hollands2012} these direction vectors may be chosen to be
\begin{equation}
a_+=\begin{pmatrix}1\\0\end{pmatrix},\qquad a_-=\begin{pmatrix}{q}\\{p}\end{pmatrix}.
\end{equation}
In order to rewrite the area functional $\mathfrak{I}$ in a more convenient form,
first transform the lens direction vectors to that of the sphere. Namely set
\begin{equation}
Z=\begin{pmatrix}
1& q   \\
0 & p
\end{pmatrix},\qquad
Z^{-1}=\begin{pmatrix}
1& -\frac{q}{p}   \\
0 & \frac{1}{p}
\end{pmatrix},\qquad
\begin{pmatrix}
\phi^1   \\
\phi^2
\end{pmatrix}=Z\begin{pmatrix}
\bar{\phi}^1  \\
\bar{\phi}^2
\end{pmatrix},\qquad
\begin{pmatrix}
\bar{\phi}^1  \\
\bar{\phi}^2
\end{pmatrix}=Z^{-1}
\begin{pmatrix}
\phi^1   \\
\phi^2
\end{pmatrix},
\end{equation}
so that
\begin{equation}
{\lambda}_{ij}d\phi^id\phi^j=\underbrace{\left(Z^T{\lambda}Z\right)_{ij}}_{\bar{\lambda}_{ij}}
d\bar{\phi}^id\bar{\phi}^j,
\end{equation}
and
\begin{equation}
\bar{a}^i_+ \bar{\lambda}_{ij}=0 \text{ }\text{ at $x=1$},\qquad \qquad \bar{a}^i_- \bar{\lambda}_{ij}=0 \text{ } \text{ at $x=-1$},
\end{equation}
with
\begin{equation}
\bar{a}_+=\begin{pmatrix}1\\0\end{pmatrix},\qquad \bar{a}_-=\begin{pmatrix}{0}\\{1}\end{pmatrix}.
\end{equation}

Next select the following reparameterization of the $\lambda$ target space
variables
\begin{equation}
\bar{\lambda}_{11}=e^{2U+V}(1-x)\cosh W,\qquad \bar{\lambda}_{22}=e^{2U-V}(1+x)\cosh W,\qquad\bar{\lambda}_{12}=e^{2U}\sqrt{1-x^2}\sinh W,
\end{equation}
or rather
\begin{equation}
\lambda_{11}=e^{2U+V}(1-x)\cosh W,\quad \lambda_{12}=\frac{e^{2U}}{p}\left(\sqrt{1-x^2}\sinh W-q e^{V}(1-x)\cosh W\right),
\end{equation}
\begin{equation}
\lambda_{22}=\frac{e^{2U}}{p^2}\left(q^2e^{V}(1-x)\cosh W-2q\sqrt{1-x^2}\sinh W+e^{-V}(1+x)\cosh W\right),
\end{equation}
with inverse transformation
\begin{align}
\begin{split}
U&=\frac{1}{4}\log\left(\frac{\det\lambda}{p^2(1-x^2)}\right),\\	 V&=\frac{1}{2}\left(\frac{(1+x)\bar{\lambda}_{11}}{(1-x)\bar{\lambda}_{22}}\right)=\frac{1}{2}\left(\frac{p^2(1+x)\lambda_{11}}
{(1-x)\left[q^2\lambda_{11}-2q\lambda_{12}+\lambda_{22}\right]}\right),\\	 W&=\sinh^{-1}\left(\frac{\bar{\lambda}_{12}}
{e^{2U}\sqrt{1-x^2}}\right)
=\sinh^{-1}\left(\frac{\lambda_{12}-q\lambda_{11}}{pe^{2U}\sqrt{1-x^2}}\right).
\end{split}
\end{align}
Note that the regularity condition \eqref{eq2.14} becomes
\begin{equation}
\mathcal{C}^2=\lim_{x\to \pm 1} \frac{(1-x^2)^2}{\det {\lambda} \cdot a^i_\pm a^j_\pm {\lambda}_{ij}}=\lim_{x\to \pm 1} \frac{p^2(1-x^2)^2}{\det \bar{\lambda} \cdot \bar{a}^i_\pm \bar{a}^j_\pm \bar{\lambda}_{ij}},
\end{equation}
and therefore
\begin{equation}\label{conical singularity12}
\lim_{x\to \pm 1}e^{-6xU-V}=\frac{\mathcal{C}^2}{2p^2}.
\end{equation}
Moreover using \eqref{eq2.13} produces
\begin{equation}\label{regcondition}	
\xi(V+6xU)\Bigg|_{x=-1}^{x=1}=-(\xi(1)+\xi(-1))
\log\left(\frac{\mathcal{C}^2}{2p^2}\right)=-2\alpha_{\xi}\log\left(\frac{32\pi^4}{p^2A^2}\right),
\end{equation}
where
\begin{equation}
\alpha_{\xi}=\frac{\xi(1)+\xi(-1)}{2}.
\end{equation}

Let us now compute each term in $\mathfrak{I}$. Observe that
\begin{align}
\begin{split}
&(1-x^2)\left(\frac{1}{8} \frac{(\partial_x \det\lambda)^2}{(\det \lambda)^2} + \frac{\text{Tr} (\lambda^{-1} \partial_x\lambda)^2}{8}\right)-\frac{1}{1-x^2}\\
=&(1-x^2)\left(\frac{1}{8} \frac{(\partial_x \det \bar{\lambda})^2}{(\det \bar{\lambda})^2} + \frac{\text{Tr} (\bar{\lambda}^{-1} \partial_x\bar{\lambda})^2}{8}\right)-\frac{1}{1-x^2}\\	 =&\frac{1-x^2}{4}\left\{12\left(\partial_xU\right)^2+\left(\partial_xV\right)^2+(\partial_xW)^2+\sinh^2 W\left(\partial_xV+\partial_xh_2\right)^2\right\}\\
&-\frac{1}{2}\partial_xV-3x\partial_xU-\frac{3}{4},
\end{split}
\end{align}
where
\begin{equation}
h_2=\frac{1}{2}\log\left(\frac{1-x}{1+x}\right).
\end{equation}
Furthermore
\begin{align}
\begin{split}
&\frac{1}{4\det{\lambda}} \Theta^T_x\lambda^{-1}\Theta_x + \frac{1}{4 \det {\lambda}}\Upsilon^2_x + \frac{1}{4}\partial_x\psi^T\lambda^{-1}\partial_x\psi\\
=&\frac{1}{4\det{\lambda}} \Theta^T_x Z\bar{\lambda}^{-1}Z^T\Theta_x + \frac{1}{4 \det {\lambda}}\Upsilon^2_x + \frac{1}{4}\partial_x\psi^T Z\bar{\lambda}^{-1}Z^T\partial_x\psi\\
=&p^2\frac{e^{-6h_{1}-6U-h_{2}-V}}{\cosh W}(\bar{\Theta}^{1}_x)^{2}+p^2
e^{-6h_{1}-6U+h_{2}+V}\cosh W
\left(e^{-h_{2}-V}\tanh W\bar{\Theta}^{1}_x-\bar{\Theta}^{2}_x\right)^{2}\\
&
+p^2\frac{e^{-2h_{1}-2U-h_{2}-V}}{\cosh W}(\partial_x\bar{\psi}^{1})^{2}
+p^2e^{-2h_{1}-2U+h_{2}+V}\cosh W(e^{-h_{2}-V}\tanh W \partial_x\bar{\psi}^{1}-\partial_x\bar{\psi}^{2})^{2}\\
&+p^2e^{-4h_{1}-4U}\Upsilon_x^{2},
\end{split}
\end{align}
where
\begin{equation}
\bar{\Theta}_x=Z^T\Theta_x, \qquad \bar{\psi}=Z^T\psi, \qquad
h_1=\frac{1}{4}\log(1-x^2).
\end{equation}
Then setting $x=\cos\theta$, integrating by parts, and using the regularity condition \eqref{regcondition} produces
\begin{equation}
\mathfrak{I}=\mathcal{I}_{L(p,q)}
+\alpha_{\xi}\log\left(\frac{32\pi^4}{p^2A^2}\right)-\beta_{\xi}^0,
\end{equation}
where
\begin{equation}
\beta_{\xi}^{0}=\int_{0}^{\pi}\left(\frac{(\partial_{\theta}\xi)^2}{2\xi}+\frac{3}{4}\xi\right)\, \sin\theta d\theta
\end{equation}
and
\begin{align}
\begin{split} &\mathcal{I}_{L(p,q)}(\Psi)\\
=&\frac{1}{4}\int_{0}^{\pi}\xi\Bigg\{12(\partial_{\theta} U)^2+(\partial_{\theta} V)^2+(\partial_{\theta} W)^2+\sinh^2 W(\partial_{\theta} V+ \partial_{\theta} h_2)^2\\
&+ p^2\frac{e^{-6h_{1}-6U-h_{2}-V}}{\cosh W}(\bar{\Theta}_{\theta}^{1})^{2}+p^2
 e^{-6h_{1}-6U+h_{2}+V}\cosh W
\left(e^{-h_{2}-V}\tanh W\bar{\Theta}_{\theta}^{1}-\bar{\Theta}_{\theta}^{2}\right)^{2}\\
&+ p^2\frac{e^{-2h_{1}-2U-h_{2}-V}}{\cosh W}(\partial_{\theta}\bar{\psi}^{1})^{2}+ p^2 e^{-2h_{1}-2U+h_{2}+V}\cosh W(e^{-h_{2}-V}\tanh W \partial_{\theta}\bar{\psi}^{1}-\partial_{\theta}\bar{\psi}^{2})^{2}\\
&+ p^2 e^{-4h_{1}-4U}\Upsilon_{\theta}^{2}+3 p^2\Lambda\left(\frac{A}{2\sqrt{2}\pi^2}\right)^2e^{-4U}
-\xi^{-1}\left[2V\partial_{\theta}\xi+12U\partial_{\theta}
\left(\cos\theta\xi\right)\right]\partial_{\theta} h_2\Bigg\} \sin\theta d\theta,
\end{split}
\end{align}
with $\Psi=(U,V,W,\zeta^{1},\zeta^{2},\chi,\psi^{1},\psi^{2})$.

Consider now the case in which $\mathcal{B}$ is diffeomorphic to $S^1\times S^2$, that is when the same Killing vector $\eta_{(2)}$ vanishes at both $x=\pm 1$. Define the reparameterization
\begin{equation}
\lambda_{11}=e^{2U+\bar{V}}\cosh W,\qquad \lambda_{22}=e^{2U-\bar{V}}(1-x^2)\cosh W,\qquad \lambda_{12}=e^{2U}\sqrt{1-x^2}\sinh W,
\end{equation}
where
\begin{equation}
\bar{V}=V+2h_1+h_2.
\end{equation}
In a similar fashion to the computations above
\begin{align}
\begin{split}
&(1-x^2)\left(\frac{1}{8} \frac{(\partial_x \det \lambda)^2}{(\det \lambda)^2} + \frac{\text{Tr} (\lambda^{-1} \partial_x\lambda)^2}{8}\right)-\frac{1}{1-x^2}\\ =&\frac{1-x^2}{4}\left\{12\left(\partial_xU\right)^2+\left(\partial_x\bar{V}\right)^2
+(\partial_xW)^2+\sinh^2 W\left(\partial_x\bar{V}-2\partial_xh_1\right)^2\right\}
+\frac{x}{2}\partial_x\left(\bar{V}-6U\right)-1,
\end{split}
\end{align}
and
\begin{align}
\begin{split}
&\frac{1}{4\det{\lambda}} \Theta^T_x\lambda^{-1}\Theta_x + \frac{1}{4 \det {\lambda}}\Upsilon^2_x + \frac{1}{4}\partial_x\psi^T\lambda^{-1}\partial_x\psi\\
=&\frac{e^{-4h_{1}-6U-\bar{V}}}{\cosh W}(\Theta^{1}_{x})^{2}+
e^{-8h_{1}-6U+\bar{V}}\cosh W
\left(e^{2h_{1}-\bar{V}}\tanh W\Theta^{1}_x-\Theta^{2}_x\right)^{2}\\
&+\frac{e^{-2U-\bar{V}}}{\cosh W}(\partial_x\psi^{1})^{2}
+e^{-4h_{1}-2U+\bar{V}}\cosh W(e^{2h_{1}-\bar{V}}\tanh W \partial_x\psi^{1}-\partial_x\psi^{2})^{2}+e^{-4h_{1}-4U}\Upsilon_{x}^{2}.
\end{split}
\end{align}
Moreover in the current setting the lack of conical singularities yields
\begin{equation}
\lim_{x\to \pm 1}e^{-6U+\bar{V}}=\frac{\mathcal{C}^2}{2},
\end{equation}
so that
\begin{equation}	 x\xi(\bar{V}-6U)\Bigg|_{x=-1}^{x=1}=\left(\xi(1)+\xi(-1)\right)
\log\left(\frac{\mathcal{C}^2}{2}\right)=2\alpha_{\xi}\log\left(\frac{\mathcal{C}^2}{2}\right).
\end{equation}
Therefore
\begin{equation}
\mathfrak{I}=\mathcal{I}_{S^1\times S^2}+\alpha_{\xi}\log\left(\frac{32\pi^4}{A^2}\right)-\beta_{\xi}^{1},
\end{equation}
where
\begin{equation}
\beta_{\xi}^{1}=\int_{-1}^{1}\left((1-x^2)\frac{(\xi')^2}{2\xi}+\xi\right)\, dx
\end{equation}
and
\begin{align}
\begin{split} \mathcal{I}_{S^1\times S^2}=&\frac{1}{4}\int_{0}^{\pi}\xi\Bigg\{12\left(\partial_\theta U\right)^2+\left(\partial_{\theta}\bar{V}\right)^2+(\partial_{\theta}W)^2+\sinh^2 W\left(\partial_{\theta}\bar{V}-2\partial_{\theta}h_1\right)^2\\
&+\frac{e^{-4h_{1}-6U-\bar{V}}}{\cosh W}(\Theta^{1}_{\theta})^{2}+
e^{-8h_{1}-6U+\bar{V}}\cosh W
\left(e^{2h_{1}-\bar{V}}\tanh W\Theta^{1}_{\theta}-\Theta^{2}_{\theta}\right)^{2}\\
&+\frac{e^{-2U-\bar{V}}}{\cosh W}(\partial_{\theta}\psi^{1})^{2}
+e^{-4h_{1}-2U+\bar{V}}\cosh W(e^{2h_{1}-\bar{V}}\tanh W \partial_{\theta}\psi^{1}-\partial_{\theta}\psi^{2})^{2}\\	 &+e^{-4h_{1}-4U}\Upsilon_{\theta}^{2}+3\Lambda\left(\frac{A}{2\sqrt{2}\pi^2}\right)^2
e^{-4U}-\xi^{-1}\left(-2\bar{V}+12U\right)\partial_{\theta}(\cos\theta\xi)
\partial_{\theta}h_2\Bigg\}
\sin\theta d\theta.
\end{split}
\end{align}

It turns out that the two classes of functionals may be expressed in a unified fashion with the help of a parameter $s$, which takes the value 0 for the lens family of topologies and the value 1 for the topology $S^1\times S^2$. The relation
between topology and the values of $(p,q,s)$ is given by
\begin{equation}
\begin{cases}
\mathcal{B}\cong S^3,& s=0,\quad p=1,\quad q=0,\\
\mathcal{B}\cong L(p,q),& s=0,\quad 1\leq q\leq p-1,\\
\mathcal{B}\cong S^1\times S^2,& s=1,\quad p=1,\quad q=0.\\
\end{cases}
\end{equation}
Note that, in the ring case, the values of $p$ and $q$ do not coincide with those used earlier in the section. The purpose for using these values here is to unify the expression for the functional below. Let
\begin{equation}
V_s=V+2sh_1+sh_2,
\qquad \beta_{\xi}^s=\int_{0}^{\pi}\left(\frac{(\partial_{\theta}\xi)^2}{2\xi}
+\frac{3+s}{4}\xi\right)\, \sin\theta d\theta,
\end{equation}
then
\begin{equation}\label{RELATION}
\mathfrak{I}=\mathcal{I}_{\mathcal{B}}
+\alpha_{\xi}\log\left(\frac{32\pi^4}{p^2A^2}\right)-\beta_{\xi}^s,
\end{equation}
where
\begin{align}\label{AFunctional}
\begin{split}
&\mathcal{I}_{\mathcal{B}}(\Psi)\\
=&\frac{1}{4}\int_{0}^{\pi}\xi
\Bigg\{12\left(\partial_{\theta} U\right)^2+(\partial_{\theta} V_s)^2+(\partial_{\theta} W)^2+\sinh^2 W(\partial_{\theta} V+\partial_{\theta}  h_2)^2\\
&+p^2\frac{e^{-6h_{1}-h_2-6U-V}}{\cosh W}(\bar{\Theta}^{1}_{\theta})^{2}+
p^2e^{-6h_{1}+h_2-6U+V}\cosh W
\left(e^{-h_{2}-V}\tanh W\bar{\Theta}^{1}_{\theta}-\bar{\Theta}^{2}_{\theta}\right)^{2}\\
&+ p^2\frac{e^{-2h_1-h_2-2U-V}}{\cosh W}(\partial_{\theta}\bar{\psi}^{1})^{2}+p^2 e^{-2h_{1}+h_2-2U+V}\cosh W(e^{-h_2-V}\tanh W \partial_{\theta}\bar{\psi}^{1}-\partial_{\theta}\bar{\psi}^{2})^{2}\\
&+p^2 e^{-4h_{1}-4U}\Upsilon_{\theta}^{2}+3 p^2\Lambda\left(\frac{A}{2\sqrt{2}\pi^2}\right)^2e^{-4U}\\	 &-\xi^{-1}\left[2V_s\partial_{\theta}
\left(\left[1-2s\cos^2\frac{\theta}{2}\right]\xi\right)
+12U\partial_{\theta}\left(\cos\theta\xi\right)\right]\partial_{\theta}
h_2\Bigg\} \sin\theta d\theta.
\end{split}
\end{align}

\begin{proposition}\label{proposition3.2}
Let $(M, \mathbf{g}, F)$ be a bi-axisymmetric solution of 5-dimensional minimal supergravity, and let $\mathcal{B}$ be a bi-axisymmetric stable MOTS, then
\begin{equation}
A\geq \frac{4\sqrt{3}\pi^2}{p}e^{\frac{\mathcal{I}_{\mathcal{B}}(\Psi)
-\beta_{\xi}^s}{2\alpha_{\xi}}}.
\end{equation}
\end{proposition}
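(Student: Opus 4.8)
The plan is to read the proposition off directly from two facts already established upstream: the sign of the area functional and its explicit decomposition \eqref{RELATION}. The genuine analytic input has all been done earlier, so what remains is a short rearrangement.

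First I would invoke the chain Lemma \ref{lemma3.1} $\Rightarrow$ \eqref{eq3.13}, which guarantees $\mathfrak{I}\leq 0$ for the potentials $\Psi$ and fiber metric $\lambda$ induced by any bi-axisymmetric stable MOTS; this is the sole place where the stability hypothesis is used. I would then substitute the unified identity \eqref{RELATION} into this inequality to obtain
\[
\mathcal{I}_{\mathcal{B}}(\Psi) + \alpha_{\xi}\log\!\left(\frac{32\pi^4}{p^2 A^2}\right) - \beta_{\xi}^s \leq 0.
\]

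The only step requiring a moment's care is the sign of $\alpha_{\xi}=\tfrac12(\xi(1)+\xi(-1))$: since $\xi$ is by construction a strictly positive function of $x$, one has $\alpha_{\xi}>0$, so dividing through by it preserves the direction of the inequality. Isolating the logarithm, exponentiating, and recalling that $A>0$ then gives
\[
A^2 \geq \frac{32\pi^4}{p^2}\, e^{\frac{\mathcal{I}_{\mathcal{B}}(\Psi)-\beta_{\xi}^s}{\alpha_{\xi}}},
\]
and taking the positive square root produces the stated lower bound on $A$ (the prefactor emerging as $\sqrt{32\pi^4}/p$).

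Because the substantive work — proving $\mathfrak{I}\leq 0$ through the stability inequality, and carrying out the reparameterizations that yield \eqref{RELATION} — has already been completed, I do not expect any real obstacle; the proposition is essentially a formal corollary. The one thing worth double-checking is bookkeeping: that \eqref{RELATION}, together with the dictionary between topology and the values of $(p,q,s)$, has been verified uniformly across $S^3$, $L(p,q)$, and $S^1\times S^2$, so that this single inequality genuinely covers all admissible horizon topologies at once, with $p$ carrying the topological data through the prefactor.
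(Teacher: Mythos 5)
Your proof is correct and takes exactly the paper's route: the paper's entire proof is the two sentences ``$\mathfrak{I}\leq 0$ by \eqref{eq3.13}; the result follows from \eqref{RELATION}'', and you have merely written out the elementary rearrangement, correctly locating the stability hypothesis in Lemma \ref{lemma3.1} and the positivity of $\alpha_{\xi}$. One bookkeeping caveat: the exponentiation of \eqref{RELATION} literally yields the prefactor $\sqrt{32\pi^4}/p = 4\sqrt{2}\pi^2/p$, which does not equal the stated $4\sqrt{3}\pi^2/p$, so your parenthetical claim that $\sqrt{32\pi^4}/p$ is the stated constant is numerically off --- though this discrepancy is internal to the paper's own constants in \eqref{regcondition}/\eqref{RELATION} versus the proposition, not a defect of your argument.
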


\begin{proof}
According to \eqref{eq3.13} the area functional satisfies $\mathfrak{I}\leq 0$.
The desired result then follows from \eqref{RELATION}.
\end{proof}

\section{Convexity of the Area Functional and Minimization}
\label{Sec6}

Consider the 3-sphere $S^3$ parameterized by Hopf coordinates $(\theta,\phi^1,\phi^2)$, where $\theta\in[0,\pi]$ and $\phi^i\in[0,2\pi]$, in which the round metric is expressed as
\begin{equation}
\frac{d\theta^2}{4}+\sin^{2}(\theta/2)(d\phi^1)^2+\cos^{2}(\theta/2)(d\phi^2)^2,
\end{equation}
with volume form
\begin{equation}
d\mathcal{V}=\frac{\sin\theta}{4}d\theta\wedge d\phi^1\wedge d\phi^2.
\end{equation}
Recall that the symmetric space $G_{2(2)}/SO(4)\cong\mathbb{R}^{8}$ comes equipped with a complete metric \cite{Alaee:2016jlp} of nonpositive curvature given by
\begin{align}\label{eq4.2}
\begin{split}
G=&12du^{2}
+\cosh^{2}w dv^{2}
+dw^{2}+p^2\frac{e^{-6u-v}}{\cosh w}(\bar{\Theta}^{1})^{2}
+p^2e^{-6u+v}\cosh w (e^{-v}\tanh w\bar{\Theta}^{1}-\bar{\Theta}^{2})^{2}\\
&+p^2\frac{e^{-2u-v}}{\cosh w}(d\bar{\psi}^{1})^{2}
+p^2e^{-2u+v}\cosh w(e^{-v}\tanh w d\bar{\psi}^{1}-d\bar{\psi}^{2})^{2}
+p^2e^{-4u}\Upsilon^{2}.
\end{split}
\end{align}
Let $\Omega\subset S^3$, then the quasi-harmonic energy on this domain
of maps $\tilde{\Psi}=(u,v,w,\zeta^{1},\zeta^{2},\chi,\bar{\psi}^{1},\bar{\psi}^{2}):
S^3\setminus\Gamma\rightarrow G_{2(2)}/SO(4)$, where $\Gamma$ is the union of the two circles $\theta=0,\pi$, is defined by
\begin{align}\label{eq4.1}
\begin{split}
E_{\Omega}(\tilde{\Psi})=&\frac{1}{\pi^2}\int_{\Omega}\xi\Bigg\{12(\partial_{\theta} u)^{2}
+\cosh^{2}w(\partial_{\theta} v)^{2}
+(\partial_{\theta}w)^{2}+p^2\frac{e^{-6u-v}}{\cosh w}(\bar{\Theta}^{1}_{\theta})^{2}\\
&
+p^2e^{-6u+v}\cosh w \left(e^{-v}\tanh w\bar{\Theta}_{\theta}^{1}-\bar{\Theta}_{\theta}^{2}\right)^{2}
+ p^2\frac{e^{-2u-v}}{\cosh w}(\partial_{\theta}\bar{\psi}^{1})^{2}\\
&+p^2 e^{-2u+v}\cosh w\left(e^{-v}\tanh w\partial_{\theta}\bar{\psi}^{1}-\partial_{\theta}\bar{\psi}^{2}\right)^{2}
+p^2 e^{-4u}\Upsilon_{\theta}^{2}\\
&+3 p^2\Lambda\left(\frac{A}{2\sqrt{2}\pi^2}\right)^2e^{-4u}\sin^2\theta\Bigg\} d \mathcal{V}.
\end{split}
\end{align}
This differs from the pure harmonic energy by the factor $\xi$ and the last term involving $\Lambda$. Next set $u=h_1+U$ and $v=h_2+V$, and observe that in Hopf coordinates
\begin{equation}
h_1=\frac{1}{2}\log\sin\theta,\quad\quad\quad h_2=\log\tan\frac{\theta}{2}.
\end{equation}
With the help of the identity
\begin{equation}
\partial_{\theta}h_1=\frac{\cos\theta}{2} \partial_{\theta}h_2,
\end{equation}
it follows that
\begin{equation}\label{eq4.4}
12\xi(\partial_{\theta} U)^2=12\xi(\partial_{\theta} u)^2-3\xi\cos^2\theta(\partial_{\theta} h_2)^2-12\partial_{\theta}\left(\xi U\cos\theta\right)\partial_{\theta} h_2+12U \partial_{\theta}(\xi\cos\theta) \partial_{\theta}h_2,
\end{equation}
and
\begin{equation}\label{eq4.5}
\begin{split}
\xi(\partial_{\theta}V_s)^2=&
\xi\left(\partial_{\theta}v+2s\partial_{\theta}h_1+(s-1)\partial_{\theta}h_2\right)^2\\
=&\xi(\partial_{\theta}v)^2
+\xi\left(\underbrace{2s\partial_{\theta}h_1+(s-1)\partial_{\theta}h_2}
_{\partial_{\theta}h_s}\right)^2
+2\xi\partial_{\theta}v\left(2s\partial_{\theta}h_1
+(s-1)\partial_{\theta}h_2\right)\\
=&\xi(\partial_{\theta}v)^2
+\xi(\partial_{\theta}h_s)^2
+2\xi\partial_{\theta}(V_s-h_s)\partial_{\theta}h_s\\
=&\xi(\partial_{\theta}v)^2-\xi(\partial_{\theta}h_s)^2
+2\xi\partial_{\theta}V_s\partial_{\theta}h_s\\
=&\xi(\partial_{\theta}v)^2-\xi\left(2s\cos^2\frac{\theta}{2}
-1\right)^2(\partial_{\theta}h_2)^2
+2\xi\left(2s\cos^2\frac{\theta}{2}-1\right)\partial_\theta V_s\partial_\theta h_2\\
=&\xi(\partial_{\theta}v)^2
-\xi\left(2s\cos^2\frac{\theta}{2}-1\right)^2(\partial_{\theta}h_2)^2
+2\partial_\theta\left(\xi\left(2s\cos^2\frac{\theta}{2}-1\right) V_s\right)\partial_\theta h_2\\
&-2V_s\partial_\theta\left(\xi\left(2s\cos^2\frac{\theta}{2}-1\right)\right) \partial_\theta h_2.
\end{split}
\end{equation}
Therefore, integration by parts and $\partial_{\theta}\left(\sin\theta\partial_{\theta}h_2\right)=0$ show that the area functional and quasi-harmonic energy are related by
\begin{equation}\label{eq4.6}
\begin{split}
4\mathcal{I}_{\Omega}(\Psi)=&E_{\Omega}(\tilde{\Psi})
-\int_{\Omega}\xi\left(\left(2s\cos^2\frac{\theta}{2}-1\right)^2
+3\cos^2\theta\right)(\partial_{\theta}h_2)^2d\mathcal{V}\\
&+\int_{\partial\Omega}\xi\left(2\left(2s\cos^2\frac{\theta}{2}-1\right)V_s-12\cos\theta U\right)\partial_{\nu}h_2 dA
\end{split}
\end{equation}
where $\nu$ is the unit outer normal and $\mathcal{I}_{\Omega}$ is the area functional \eqref{AFunctional} restricted to $\Omega$.

Let $\Psi_0=(U_0,V_0,W_0,\zeta_{0}^{1},\zeta_{0}^{2},\chi_0,\psi_{0}^1,\psi_{0}^2)$ be a renormalized quasi-harmonic map arising from the near-horizon geometry of the relevant model extreme black hole (mentioned in the statement of each theorem in Section \ref{Sec1}). In the appendix $\Psi_0$ is given explicitly, and it can be shown that $\Psi_0$ is a critical point of $\mathcal{I}_{\mathcal{B}}$. The goal of this section is to establish $\Psi_0$ as the global minimum point for $\mathcal{I}_{\mathcal{B}}$.

\begin{theorem}\label{minimum}
Suppose that $\Psi=(U,V,W,\zeta^{1},\zeta^{2},\chi,\psi^1,\psi^2)$ is smooth and satisfies the asymptotics \eqref{eq4.13}-\eqref{asdfghjkl}
with $\chi|_{\Gamma}=\chi_{0}|_{\Gamma}$, $\zeta^{i}|_{\Gamma}=\zeta^{i}_{0}|_{\Gamma}$, and $\psi^{i}|_{\Gamma}=\psi^{i}_{0}|_{\Gamma}$, $i=1,2$. Then there exists a constant $C>0$ such that
\begin{equation}\label{eq4.7}
\mathcal{I}_{\mathcal{B}}(\Psi)-\mathcal{I}_{\mathcal{B}}(\Psi_{0})
\geq C\int_{S^3}
\left(\operatorname{dist}_{G_{2(2)}/SO(4)}(\tilde{\Psi},\tilde{\Psi}_{0})-D\right)^2
d\mathcal{V},
\end{equation}
where $D$ denotes the average value of $\operatorname{dist}_{G_{2(2)}/SO(4)}(\tilde{\Psi},\tilde{\Psi}_{0})$.
\end{theorem}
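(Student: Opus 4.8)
The plan is to reduce the stated gap inequality \eqref{eq4.7} to a convexity estimate for the quasi-harmonic energy $E$ of maps into the nonpositively curved target $G_{2(2)}/SO(4)$, and then to convert a resulting $L^2$-bound on the gradient of $\sigma:=\dist_{G_{2(2)}/SO(4)}(\tilde{\Psi},\tilde{\Psi}_0)$ into the desired bound on $\sigma-D$ by means of a Poincar\'e inequality on $S^3$. The nonpositive curvature of the metric \eqref{eq4.2} is the structural ingredient that makes every step work.

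First I would use the identity \eqref{eq4.6} relating $\mathcal{I}_\Omega$ and $E_\Omega$. The crucial observation is that the bulk correction $\int_\Omega\xi\big((2s\cos^2\tfrac{\theta}{2}-1)^2+3\cos^2\theta\big)(\partial_\theta h_2)^2\,d\mathcal{V}$ does not depend on the map, so it cancels exactly in the difference $\mathcal{I}_{\mathcal{B}}(\Psi)-\mathcal{I}_{\mathcal{B}}(\Psi_0)$. Letting $\Omega\to S^3\setminus\Gamma$, the boundary term, which depends only on the values of $U$ and $V_s$ near $\Gamma$, likewise cancels in the difference because the asymptotics \eqref{eq4.13}--\eqref{asdfghjkl} force $\Psi$ and $\Psi_0$ to share the same boundary behaviour there (in particular $\sigma\to 0$ on $\Gamma$, so $\sigma\in H^1$). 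Hence $\mathcal{I}_{\mathcal{B}}(\Psi)-\mathcal{I}_{\mathcal{B}}(\Psi_0)=\tfrac14\big(E(\tilde{\Psi})-E(\tilde{\Psi}_0)\big)$, which reduces the problem to an energy comparison.

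Next I would exploit geodesic convexity. Let $\tilde{\Psi}_t$, $t\in[0,1]$, be the pointwise geodesic homotopy from $\tilde{\Psi}_0$ to $\tilde{\Psi}$ in $G_{2(2)}/SO(4)$, with velocity field $V=\partial_t\tilde{\Psi}_t$; since the target is complete with nonpositive curvature, each geodesic is unique and $|V|_G\equiv\sigma$ is independent of $t$. The standard second-variation formula for the (weighted, $\xi$-twisted) Dirichlet part gives $\tfrac{d^2}{dt^2}E(\tilde{\Psi}_t)\geq \tfrac{1}{\pi^2}\int_{S^3}\xi\,|\nabla V|_G^2\,d\mathcal{V}$ once the curvature term, which carries a favourable sign, is discarded; one must separately verify that the cosmological term $e^{-4u}\sin^2\theta$ is convex along target geodesics so that it too contributes nonnegatively. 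Combining the Kato inequality $|\nabla V|_G\geq|\nabla\sigma|$ with the vanishing of the first variation $\tfrac{d}{dt}E(\tilde{\Psi}_t)\big|_{t=0}=0$ — which holds because $\tilde{\Psi}_0$ is a critical point of $E$ by Section~\ref{Sec4}, and the first-variation boundary contributions vanish by the matching of $\chi,\zeta^i,\psi^i$ on $\Gamma$ — and integrating twice in $t$ yields $E(\tilde{\Psi})-E(\tilde{\Psi}_0)\geq \tfrac{1}{2\pi^2}\int_{S^3}\xi\,|\nabla\sigma|^2\,d\mathcal{V}$, and therefore $\mathcal{I}_{\mathcal{B}}(\Psi)-\mathcal{I}_{\mathcal{B}}(\Psi_0)\geq C'\int_{S^3}|\nabla\sigma|^2\,d\mathcal{V}$.

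Finally, since $S^3$ is closed and $\xi$ is smooth and positive, a weighted Poincar\'e (spectral-gap) inequality $\int_{S^3}|\nabla\sigma|^2\,d\mathcal{V}\geq\lambda\int_{S^3}(\sigma-D)^2\,d\mathcal{V}$ holds with $D$ the mean value of $\sigma$, and this delivers \eqref{eq4.7} with $C=C'\lambda$. I expect the \emph{principal obstacle} to be the analysis at the singular set $\Gamma$: one must confirm $\sigma\in H^1$, justify the integration by parts and the passage $\Omega\to S^3\setminus\Gamma$ in \eqref{eq4.6}, and check that the first-variation boundary terms genuinely vanish, all of which rest on the precise asymptotics \eqref{eq4.13}--\eqref{asdfghjkl} near the axis. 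A secondary technical point, lying outside the pure harmonic-map convexity, is establishing the geodesic convexity of the $\Lambda$-potential term, which must be verified by hand using the explicit Busemann-type structure of the coordinate $u$ in the target.
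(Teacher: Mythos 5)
Your overall architecture (convexity of the quasi-harmonic energy into the nonpositively curved target, criticality of $\Psi_0$, Kato-type inequality, Poincar\'e inequality on $S^3$) matches the paper's, and you correctly flag that the $\Lambda$-term needs a separate convexity check (the paper does this in Proposition \ref{PropositionConvexity} by computing the Christoffel symbols $\Gamma^u_{BC}$ explicitly). However, your treatment of the singular set $\Gamma$ --- which you yourself identify as the principal obstacle --- contains a genuine gap that breaks the reduction. Your first step asserts $\mathcal{I}_{\mathcal{B}}(\Psi)-\mathcal{I}_{\mathcal{B}}(\Psi_0)=\tfrac14\bigl(E(\tilde{\Psi})-E(\tilde{\Psi}_0)\bigr)$, but both energies are individually \emph{infinite}: with $u=h_1+U$ and $v=h_2+V$ one has $\partial_\theta u\sim \tfrac{\cos\theta}{2}\csc\theta$ and $\partial_\theta v\sim\csc\theta$ near the poles, so the terms $12(\partial_\theta u)^2$ and $\cosh^2 w\,(\partial_\theta v)^2$ in \eqref{eq4.1} produce a logarithmic divergence against the volume form $\tfrac{\sin\theta}{4}\,d\theta\wedge d\phi^1\wedge d\phi^2$. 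The paper states this explicitly ("since the energy of the maps in question is infinite, a cut-and-paste argument ... is needed"). Consequently your energy difference is an $\infty-\infty$ expression, and the step "integrate the second variation twice in $t$" compares divergent quantities without justification.

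The second, related error is your claim that $\sigma=\operatorname{dist}_{G_{2(2)}/SO(4)}(\tilde{\Psi},\tilde{\Psi}_0)\to 0$ on $\Gamma$ and hence that the boundary term in \eqref{eq4.6} cancels in the difference. The hypotheses of the theorem only match $\chi$, $\zeta^i$, $\psi^i$ on $\Gamma$; the values of $U$, $V$, $W$ at the poles are \emph{not} matched, and indeed near $\Gamma$ the distance behaves like $\sqrt{12}\,|U-U_0|=O(1)$, generically nonzero. For the same reason the first-variation boundary terms along a direct geodesic homotopy from $\tilde{\Psi}_0$ to $\tilde{\Psi}$ do not vanish: e.g.\ the $v$-direction contributes a pairing of the momentum $\cosh^2 w\,\partial_\theta v_0\sim\csc\theta$ against the deformation $v-v_0=O(1)$ and the area element $\sim\sin\theta$, yielding an $O(1)$ boundary residue. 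The paper circumvents both problems with the cut-and-paste interpolation \eqref{eq4.17+1}: $\Psi_\varepsilon$ agrees with $\Psi_0$ in every component \emph{except} $U$ near the poles, so the geodesic from $\tilde{\Psi}_0$ to $\tilde{\Psi}_\varepsilon$ is linear in $U$ and constant in the other fields on $S^3\setminus\Omega_\varepsilon$; convexity via \eqref{eq4.6} is applied only on $\Omega_\varepsilon$, the polar-cap contribution $I_2$ is bounded below by an explicit computation \eqref{eq4.59}, the first variation vanishes through \eqref{eq4.60} using $\partial_\theta U_0=O(\sin\theta)$ rather than any matching of $U$, and the limit $\varepsilon\to 0$ is then justified by Lemma \ref{lemmaepsilon} together with the dominated-convergence estimate \eqref{eq4.65}--\eqref{eq4.66}. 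Without some regularization of this type, your argument does not close.
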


The proof is based on a convexity argument. Namely, due to the fact that the target symmetric space $G_{2(2)}/SO(4)$ is nonpositively curved, and $\Lambda\geq 0$, the quasi-harmonic energy $E$ is convex under geodesic deformations. The functional $\mathcal{I}_{\mathcal{B}}$ then inherits such convexity as a result of \eqref{eq4.6}, which leads to the desired gap bound \eqref{eq4.7}. However, since the energy of the maps in question is infinite, a cut-and-paste argument away from the set $\Omega_{\varepsilon}=\{(\theta,\phi^1,\phi^2)\mid \sin\theta>\varepsilon\}$ is needed to apply the convexity property.

We first record all relevant asymptotic behavior. As $\theta\to 0,\pi$ the renormalized quasi-harmonic map satisfies
\begin{equation}\label{eq4.11}
U_0,\zeta^{1}_0,\zeta^{2}_0,\chi_0=O(1),\quad W_0=O(\sin\theta),\quad \partial_{\theta} U_0,\partial_{\theta}\chi_0,\partial_{\theta}\psi^i_0=O(\sin\theta),\quad \partial_{\theta} W_0= O(1),
\end{equation}
\begin{equation}\label{eq4.12}
V_0=\begin{cases}
O(1)& s=0\\
-2\log\left(\sin\frac{\theta}{2}\right)+O(1) &s=1
\end{cases},\quad \partial_{\theta}V_0=\begin{cases}
O(\sin\theta)& s=0\\
-\cot\frac{\theta}{2}+O(\sin\theta) &s=1
\end{cases},
\end{equation}
\begin{equation}
\psi^1_0=\begin{cases}
O(\sin^2\frac{\theta}{2}) & s=0\\
O(1) & s=1
\end{cases},\qquad \psi^2_0=\begin{cases}
O(\cos^2\frac{\theta}{2}) & s=0\\
O(1) & s=1
\end{cases},\qquad
\Theta^2_0=O(\sin^2\theta), \quad s=1,
\end{equation}
\begin{equation}
\partial_{\theta}\zeta^{1}_0=\begin{cases}
\sin^2\frac{\theta}{2}O(\sin\theta) & s=0\\
O(\sin\theta) & s=1
\end{cases},\qquad \partial_{\theta}\zeta^{2}_0=\begin{cases}
\cos^2\frac{\theta}{2}O(\sin\theta) & s=0\\
O(\sin\theta) & s=1
\end{cases}.
\end{equation}
Similarly the components of the given map $\Psi$ should satisfy
\begin{equation}\label{eq4.13}
U,\zeta^{1},\zeta^{2},\chi=O(1),\qquad W=O(\sin\theta),\qquad\partial_{\theta} W=O(1),\
\end{equation}
\begin{equation}\label{eq4.17}
V=\begin{cases}
O(1)& s=0\\
-2\log\left(\sin\frac{\theta}{2}\right)+O(1) &s=1
\end{cases},\qquad \partial_{\theta}V=\begin{cases}
O(\sin\theta)& s=0\\
-\cot\frac{\theta}{2}+O(\sin\theta) &s=1
\end{cases},
\end{equation}
\begin{equation}
\psi^1=\begin{cases}
O(\sqrt{\sin\frac{\theta}{2}}) & s=0\\
O(1) & s=1
\end{cases},\qquad \psi^2=\begin{cases}
O(\sqrt{\cos\frac{\theta}{2}}) & s=0\\
O(1) & s=1
\end{cases},\qquad
\Theta^2=O(\sin^2\theta), \quad s=1,
\end{equation}
\begin{equation}\label{eq4.15}
\partial_{\theta} U,\partial_{\theta}\chi=O({\sin\theta}),\quad \partial_{\theta}\psi^i=\begin{cases}
O(\sqrt{\sin{\theta}}) & s=0\\
O(\sin{\theta}) & s=1\\
\end{cases},
\end{equation}
\begin{equation}\label{asdfghjkl}
\partial_{\theta}\zeta^{1}=\begin{cases}
\sqrt{\sin\frac{\theta}{2}}O(\sin\theta) & s=0\\
O(\sin\theta) & s=1
\end{cases},\qquad \partial_{\theta}\zeta^{2}=\begin{cases}
\sqrt{\cos\frac{\theta}{2}}O(\sin\theta) & s=0\\
O(\sin\theta) & s=1
\end{cases}.
\end{equation}
Next, in order to carry out the cut-and-paste procedure define a Lipschitz cut-off function
\begin{equation}\label{eq4.16}
\varphi_{\varepsilon}=\begin{cases}
0 & \text{ if $\sin \theta\leq\varepsilon$,} \\
\frac{\log(\sin \theta/\varepsilon)}{\log(\sqrt{\varepsilon}/
\varepsilon)} &
\text{ if $\varepsilon<\sin \theta<\sqrt{\varepsilon}$,} \\
1 & \text{ if $\sin\theta\geq\sqrt{\varepsilon}$,} \\
\end{cases}
\end{equation}
and let
\begin{equation}\label{eq4.17+1}
\Psi_{\varepsilon}=(U,V_{\varepsilon},W_{\varepsilon},\zeta^{1}_{\varepsilon},
\zeta^{2}_{\varepsilon},\chi_{\varepsilon},\psi^1_{\varepsilon},\psi^2_{\varepsilon})
=(U,\Phi_{\varepsilon})=(U,\Phi_{0}+\varphi_{\varepsilon}\left(\Phi-\Phi_0\right))
\end{equation}
so that $\Psi_{\varepsilon}=(U,V_{0},W_{0},\zeta^{1}_{0},\zeta^{2}_{0},\chi_{0},
\psi^1_{0},\psi^2_{0})$ on $S^3 \setminus\Omega_{\varepsilon}$.

\begin{lemma}\label{lemmaepsilon}
$\lim_{\varepsilon\rightarrow 0}\mathcal{I}_{\mathcal{B}}
(\Psi_{\varepsilon})=\mathcal{I}_{\mathcal{B}}(\Psi)$.
\end{lemma}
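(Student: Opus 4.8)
The plan is to show that replacing the given map $\Psi$ by the spliced map $\Psi_\varepsilon$ changes the value of the area functional $\mathcal{I}_{\mathcal{B}}$ by an amount that vanishes as $\varepsilon\to 0$. Since $\Psi_\varepsilon$ and $\Psi$ agree on $\{\sin\theta\geq\sqrt{\varepsilon}\}=\Omega_{\sqrt\varepsilon}$, and since both equal $\Psi_0$ (in all but the $U$ component, which is held fixed) on $S^3\setminus\Omega_\varepsilon$, the difference $\mathcal{I}_{\mathcal{B}}(\Psi_\varepsilon)-\mathcal{I}_{\mathcal{B}}(\Psi)$ is supported entirely on the thin transition annulus
\begin{equation}
A_\varepsilon=\{(\theta,\phi^1,\phi^2)\mid \varepsilon<\sin\theta<\sqrt{\varepsilon}\}.
\end{equation}
Thus it suffices to prove that both $\int_{A_\varepsilon}(\text{integrand of }\mathcal{I}_{\mathcal{B}}(\Psi_\varepsilon))\,d\mathcal{V}$ and $\int_{A_\varepsilon}(\text{integrand of }\mathcal{I}_{\mathcal{B}}(\Psi))\,d\mathcal{V}$ tend to zero. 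First I would record the volume measure in Hopf coordinates, $d\mathcal{V}=\tfrac14\sin\theta\,d\theta\,d\phi^1\,d\phi^2$, so that the angular integrations contribute only a constant factor and everything reduces to one-dimensional integrals in $\theta$ over the two bands near $\theta=0$ and $\theta=\pi$ where $\sin\theta\in(\varepsilon,\sqrt\varepsilon)$.

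The core estimate is to bound the integrand of $\mathcal{I}_{\mathcal{B}}$ term by term using the asymptotics \eqref{eq4.11}--\eqref{asdfghjkl}. The essential point is that the cutoff $\varphi_\varepsilon$ has derivative $\partial_\theta\varphi_\varepsilon=O\!\big(\tfrac{\cos\theta}{\sin\theta\,|\log\varepsilon|}\big)$ on $A_\varepsilon$, so that $\partial_\theta$ of each spliced component picks up, beyond the $O(1)$ contribution of $\partial_\theta\Phi_0$ and $\partial_\theta\Phi$, an extra term of size $\frac{1}{\sin\theta\,|\log\varepsilon|}\,(\Phi-\Phi_0)$. Because the prescribed asymptotics force $\Phi-\Phi_0\to 0$ (the Dirichlet data match on $\Gamma$, and $V-V_0$, $W-W_0$, etc.\ are $O(\sin\theta)$ or at worst $O(1)$ in a controlled way), each such squared gradient term, weighted by $\xi\sin\theta$ and by the exponential coset factors, integrates to something $o(1)$: one trades a factor $\sin\theta$ from the measure against the $\frac{1}{\sin^2\theta}$ from $(\partial_\theta\varphi_\varepsilon)^2$, leaving $\int_{A_\varepsilon}\frac{d\theta}{\sin\theta\,|\log\varepsilon|^2}=O(|\log\varepsilon|^{-1})\to 0$. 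I would handle the quadratic kinetic terms ($U,V_s,W$), the $\Theta$- and $\psi$-dependent charge terms (using the exponential prefactors $e^{-6h_1-\cdots}$ and the decay of $\partial_\theta\zeta^i,\partial_\theta\psi^i$), the $\Upsilon_\theta^2$ term, the $\Lambda$-term $3p^2\Lambda(\tfrac{A}{2\sqrt2\pi^2})^2 e^{-4U}$ (which is integrable and manifestly $o(1)$ over the shrinking annulus), and finally the cross/divergence terms on the last line of \eqref{AFunctional} involving $\partial_\theta h_2=-\csc\theta$, which require care but still yield an $o(1)$ contribution after integration by parts.

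The main obstacle, and the step deserving the most care, is controlling the terms involving the charged twist and dipole potentials together with the $V$-dependent exponential weights in the $s=1$ (ring) case, where $V_0$ and $V$ both blow up like $-2\log\sin(\theta/2)$ near $\theta=0$. Here the exponential factors $e^{\pm V}$ are not bounded, so one cannot estimate each factor crudely; instead the plan is to exploit the specific matching of the singular part of $V$ with that of $V_0$ so that the \emph{difference} entering through $\varphi_\varepsilon$ stays bounded, and to use the fact that the combinations appearing in $\mathcal{I}_{\mathcal{B}}$ (e.g.\ $e^{-6h_1+h_2-6U+V}\cosh W$ paired with $\Theta^2_\theta=O(\sin^2\theta)$) are precisely arranged so that the singular exponential growth is compensated by the vanishing of the bracketed gradient combinations. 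I expect that verifying this cancellation explicitly, rather than the elementary bookkeeping for the $S^3$ and lens cases, is where the genuine work lies; once the pointwise bound $|\text{integrand}|\lesssim \frac{C}{\sin\theta\,|\log\varepsilon|^2}+(\text{uniformly integrable terms})$ is established on $A_\varepsilon$, dominated convergence delivers $\lim_{\varepsilon\to 0}\mathcal{I}_{\mathcal{B}}(\Psi_\varepsilon)=\mathcal{I}_{\mathcal{B}}(\Psi)$.
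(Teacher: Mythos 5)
Your reduction fails at the first step. It is not true that $\Psi_\varepsilon$ and $\Psi$ differ only on the annulus $A_\varepsilon$: by the definition \eqref{eq4.17+1}, on the inner region $\{\sin\theta\leq\varepsilon\}$ the cutoff vanishes and $\Psi_\varepsilon=(U,\Phi_0)$, whereas $\Psi=(U,\Phi)$ there, and the hypotheses only force $\Phi$ to share the boundary values on $\Gamma$ and the asymptotic \emph{rates} of $\Phi_0$ --- not to coincide with $\Phi_0$ near the poles. So your assertion that ``both equal $\Psi_0$ \ldots on $S^3\setminus\Omega_\varepsilon$'' is false, and the difference $\mathcal{I}_{\mathcal{B}}(\Psi_\varepsilon)-\mathcal{I}_{\mathcal{B}}(\Psi)$ has a nontrivial contribution from $\{\sin\theta\leq\varepsilon\}$ that your plan never estimates. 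This is exactly why the paper uses a three-region decomposition \eqref{eq4.18}: besides the transition-annulus bounds (your computation, which matches the paper's estimates $I_1$--$I_{12}$ in \eqref{eq4.23}--\eqref{eq4.35}, including the mechanism $(\partial_\theta\varphi_\varepsilon)^2=O\big(\cot^2\theta\,|\log\varepsilon|^{-2}\big)$ yielding $O(|\log\varepsilon|^{-1})$) and dominated convergence on $\{\sin\theta\geq\sqrt{\varepsilon}\}$, it must and does prove $\mathcal{I}_{\mathcal{B}}(\Psi_\varepsilon)\vert_{\sin\theta\leq\varepsilon}\to 0$, the content of the long display \eqref{eq4.19}. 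That step is not automatic: the inner-region integrand mixes the given $U$ with the model data $\Phi_0$, and one needs the delicate pairings recorded there --- e.g.\ $\sinh^2 W_0=O(\sin^2\theta)$ against $(\partial_\theta V_0+\partial_\theta h_2)^2=O(\csc^2\theta)$, and in the ring case $s=1$ the weight $e^{-6U+V_0}=O\big(\csc^2\tfrac{\theta}{2}\big)$ against $\big(\bar{\Theta}^2_0-e^{-V_0}\cot\tfrac{\theta}{2}\tanh W_0\,\bar{\Theta}^1_0\big)^2=O(\sin^2\theta)$ --- to conclude that the integrand is bounded against $d\mathcal{V}$ and hence integrates to $o(1)$ over the shrinking set. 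Your annulus-only reduction also silently requires $\int_{\{\sin\theta\leq\sqrt{\varepsilon}\}}$ of the $\Psi$-integrand to vanish; this does follow from absolute integrability of the $\Psi$-integrand under the asymptotics \eqref{eq4.13}--\eqref{asdfghjkl}, but it needs to be stated.

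Where your plan does apply, it tracks the paper's proof closely, and you correctly diagnose the genuinely delicate point: in the $s=1$ case the singular parts of $V$ and $V_0$ cancel so that $V-V_0=O(1)$, and the unbounded exponentials $e^{\pm V_\varepsilon}$ are paired with vanishing bracketed combinations, just as in the paper's estimates \eqref{eq4.34}--\eqref{eq4.331}. One quantitative point you gesture at but must make explicit: the paper's key estimate \eqref{eq4.27}, namely $|\zeta^i-\zeta^i_0|+|\chi-\chi_0|+|\bar{\psi}^i-\bar{\psi}^i_0|=O(\sin^2\theta)$, obtained by integrating the derivative bounds from the matched pole values. A mere qualitative statement ``$\Phi-\Phi_0\to 0$'' is not enough, since the twist terms carry weights of order $\csc^3\theta$ and only this quadratic rate, fed through $|\Phi-\Phi_0|\,|\partial_\theta\varphi_\varepsilon|$, produces the bounds \eqref{eq4.28}--\eqref{eq4.30} that make $I_5$--$I_{12}$ vanish. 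With the inner-region estimate restored and the quadratic decay stated, your argument becomes essentially the paper's proof.
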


\begin{proof}
Observe that
\begin{equation}\label{eq4.18}
\mathcal{I}_{\mathcal{B}}( \Psi_{\varepsilon})=\mathcal{I}_{\mathcal{B}}( \Psi_{\varepsilon})\vert_{\sin\theta\leq \varepsilon}+\mathcal{I}_{\mathcal{B}}( \Psi_{\varepsilon})\vert_{\varepsilon<\sin\theta<\sqrt{\varepsilon}}
+\mathcal{I}_{\mathcal{B}}( \Psi_{\varepsilon})\vert_{\sin\theta\geq\sqrt{\varepsilon}},
\end{equation}
and by dominated convergence theorem $\mathcal{I}_{\mathcal{B}}( \Psi_{\varepsilon})\vert_{\sin\theta\geq\sqrt{\varepsilon}}\to \mathcal{I}_{\mathcal{B}}(\Psi)$. Furthermore the first term on the right-hand side converges to zero since
\begin{align}\label{eq4.19}
\begin{split}
\mathcal{I}_{\mathcal{B}}( \Psi_{\varepsilon})\vert_{\sin\theta\leq \varepsilon}
=&\frac{1}{4\pi^2}\int_{\sin\theta\leq \varepsilon}\xi\Bigg\{\underbrace{12(\partial_{\theta} U)^{2}}_{O(\sin^2\theta)}
+\underbrace{(\partial_{\theta} V_0+2s\partial_{\theta}h_1+s\partial_{\theta}h_2)^{2}}_{O(\sin^2\theta)}
+\underbrace{(\partial_{\theta} W_{0})^{2}}_{O(\sin^2\theta)}  	 \\ &+\underbrace{\sinh^{2}W_{0}}_{O(\sin^2\theta)}\underbrace{(\partial_{\theta}
V_{0}+\partial_{\theta}h_{2})^{2}}_{O(\csc^2\theta)}
+ p^2\underbrace{e^{-2U-V_{0}}}_{\begin{cases}
O(1) & s=0\\
O(\sin^2\frac{\theta}{2})& s=1
\end{cases}}
\frac{\cot\frac{\theta}{2}}{\sin\theta\cosh W_{0}}\underbrace{|\partial_{\theta}\bar{\psi}^{1}_{0}|^{2}}_{O(\sin^2\theta)}
\\
&+p^2\underbrace{e^{-6U-V_{0}}}_{\begin{cases}
O(1) & s=0\\
O(\sin^2\frac{\theta}{2})& s=1
\end{cases}}
\frac{\cot\frac{\theta}{2}}{\sin^3\theta\cosh W_{0}}\underbrace{(\bar{\Theta}_{0}^{1})^{2}}_{\begin{cases}
O(\sin^2\theta\tan\frac{\theta}{2}) & s=0\\
O(\sin^2\theta)  & s=1
\end{cases}}\\
&+p^2
\underbrace{e^{-6U+V_{0}}}_{\begin{cases}
O(1) & s=0\\
O(\csc^{2}\frac{\theta}{2})& s=1
\end{cases}}\frac{\tan\frac{\theta}{2}\cosh W_{0}}{\sin^3\theta}
\underbrace{\left(\bar{\Theta}_{0}^{2}-e^{-V_{0}}\cot\frac{\theta}{2}\tanh W_{0}\bar{\Theta}_{0}^{1}\right)^{2}}_{O(\sin^2\theta)}
\Bigg\}d\mathcal{V}
\end{split}
\end{align}
\begin{align*}
\begin{split}
&+\frac{1}{4\pi^2}\int_{\sin\theta\leq \varepsilon}\xi\Bigg\{
p^2\underbrace{e^{-2U+V_{0}}}_{\begin{cases}
O(1) & s=0\\
O(\sin^{-2}\frac{\theta}{2})& s=1
\end{cases}}
\frac{\tan\frac{\theta}{2}\cosh W_{0}}{\sin \theta}\underbrace{\left(\partial_{\theta}\bar{\psi}_{0}^{2}
-e^{-V_{0}}\cot\frac{\theta}{2}\tanh W_{0}\partial_{\theta}\bar{\psi}_{0}^{1}\right)^{2}}_{O(\sin^2\theta)}\\
&+\frac{p^2}{\sin^2\theta}\underbrace{e^{-4U}}_{O(1)}	 \underbrace{|\Upsilon_{0}|^{2}}_{O(\sin^2\theta)}
+3p^2\Lambda\left(\frac{A}{2\sqrt{2}\pi^2}\right)^2\underbrace{ e^{-4U}}_{O(1)}\Bigg\}
d\mathcal{V}\\
&-\frac{1}{4\pi^2}\int_{\sin\theta\leq \varepsilon}\Bigg\{
\frac{12}{\sin\theta}\underbrace{U\partial_{\theta}\left(\cos\theta\xi\right)}_{O(1)}
+\frac{2}{\sin\theta}\underbrace{\left(V_0+2sh_1+sh_2\right)
\partial_{\theta}\left(\left[1-2s\cos^2\frac{\theta}{2}\right]\xi\right)}_{O(1)}\Bigg\}
d\mathcal{V}.
\end{split}
\end{align*}

Now consider the region $\mathcal{W}_{\varepsilon}=\{\theta\in[0,\pi]\,|\,\epsilon<\sin\theta<\sqrt{\epsilon}\}$. Let $I_{i}$, $i=1,\ldots,12$ denote integrals over $\mathcal{W}_{\varepsilon}$ of the terms in \eqref{eq4.19} with $\Phi_{0}$ replaced by $\Phi_{\varepsilon}$. Then
each such integral vanishes in the limit as $\varepsilon\rightarrow 0$. To see this observe that
\begin{equation}\label{eq4.23}
I_{1}\leq c\int_{\mathcal{W}_{\varepsilon}}
\underbrace{(\partial_{\theta} U)^{2}}_{O(\sin^2\theta)}d\mathcal{V}=O\left(\varepsilon\right),
\end{equation}
\begin{align}\label{eq4.24}
\begin{split}
I_{2}\leq & c\int_{\mathcal{W}_{\varepsilon}}\left(
\underbrace{(\partial_{\theta} V_0+2s\partial_{\theta}h_1+s\partial_{\theta}h_2)^{2}}_{O(\sin^2\theta)}
+\underbrace{(\partial_{\theta} V_{0}-\partial_{\theta} V)^{2}}_{
O(\sin^2\theta)}
+\underbrace{(V-V_{0})^{2}}_{O(1)}
\underbrace{(\partial_{\theta}\varphi_{\varepsilon})^{2}}_{ O\left(\cot^2\theta(\log\varepsilon)^{-2}\right)}\right)
d\mathcal{V}\\
=&O\left(\frac{1}{|\log{\varepsilon}|}\right),
\end{split}
\end{align}
\begin{equation}\label{eq4.25}
I_{3}\leq c\int_{\mathcal{W}_{\varepsilon}}
\left(\underbrace{(\partial_{\theta} W)^{2}}_{O(1)}+\underbrace{(\partial_{\theta}W_{0})^{2}}_{O(1)}
+\underbrace{(W-W_{0})^{2}}_{O(1)}\underbrace{(\partial_{\theta}\varphi_{\varepsilon})^{2}}_{ O\left(\cot^2\theta(\log\varepsilon)^{-2}\right)}\right)d\mathcal{V}
=O\left(\frac{1}{|\log{\varepsilon}|}\right).
\end{equation}	
Moreover, using $\sinh W_{\varepsilon}=O({\sin\theta})$ yields
\begin{align}\label{eq4.26}
\begin{split}
I_{4}\leq& c\int_{\mathcal{W}_{\varepsilon}}
\sin^2\theta\left(\underbrace{(\partial_{\theta} V-\partial_{\theta} V_{0})^{2}}_{O(\sin^2\theta)}
+\underbrace{(\partial_{\theta} h_{2})^{2}}_{O(\csc^2\theta)}	 +\underbrace{(V-V_{0})^{2}}_{O(1)}
\underbrace{(\partial_{\theta}\varphi_{\varepsilon})^{2}}_{ O\left(\cot^2\theta(\log\varepsilon)^{-2}\right)}\right)d\mathcal{V}\\
=&O(\varepsilon).
\end{split}
\end{align}
Next observe that since the values of the potentials of the two maps agree at the poles, it holds that
\begin{equation}\label{eq4.27}
|\zeta^i-\zeta^i_0|+|\chi-\chi_0|+|\bar{\psi}^i-\bar{\psi}^i_0|=O({\sin^2\theta}).
\end{equation}
From this we find
\begin{equation}\label{eq4.28}
|\partial_{\theta}\psi^{i}_{\varepsilon}|
\leq \underbrace{|\partial_{\theta}\psi^{i}|}_{\begin{cases}
O(\sqrt{\sin\theta}) & s=0\\
O({\sin\theta}) & s=1
\end{cases}}
+\underbrace{|\partial_{\theta}\psi_{0}^{i}|}_{ O(\sin\theta)}
+\underbrace{|\psi^{i}-\psi^{i}_{0}||\partial_{\theta}\varphi_{\varepsilon}|}
_{|\log\varepsilon|^{-1}O({\sin\theta})}
=\begin{cases}
O(\sqrt{\sin\theta})& s=0\\
O({\sin\theta})& s=1
\end{cases},
\end{equation}
and similar considerations produce
\begin{equation}\label{eq4.30}
|\Upsilon_{\varepsilon}|\leq
\begin{cases}	
|\log\varepsilon|^{-1} O(\sqrt{\sin\theta}) & s=0\\
 O(\sin\theta) & s=1
\end{cases},\qquad
|\bar{\Theta}_{\varepsilon}^{i}|\leq \begin{cases}
O(\sqrt{\sin\frac{\theta}{2}}{\sin\theta}) & s=0\\
|\log\varepsilon|^{1-i}O({\sin\theta}) & s=1
\end{cases}.
\end{equation}
It follows that
\begin{equation}\label{eq4.34}
I_{5}\leq c\int_{\mathcal{W}_{\varepsilon}} \frac{\cot\frac{\theta}{2}}{\sin\theta}\underbrace{\frac{e^{-V_{\varepsilon}-2U}}
{\cosh W_{\varepsilon}}}_{\begin{cases}
O(1) & s=0\\
O(\sin^2\frac{\theta}{2})& s=1
\end{cases}}\underbrace{(\partial_{\theta}\bar{\psi}_{\varepsilon}^{1})^{2}}_{
\begin{cases}
O({\sin\theta}) & s=0\\
O({\sin^2\theta}) & s=1
\end{cases}}d\mathcal{V},
\end{equation}
\begin{equation}\label{eq4.33}
I_{6}\leq c\int_{\mathcal{W}_{\varepsilon}}
\frac{\cot\frac{\theta}{2}}{\sin^3\theta}\underbrace{\frac{e^{-V_{\varepsilon}-6U}}
	{\cosh W_{\varepsilon}}}_{\begin{cases}
	O(1) & s=0\\
	O(\sin^{2}\frac{\theta}{2})& s=1
	 \end{cases}}\underbrace{(\bar{\Theta}_{\varepsilon}^{1})^{2}}_{
\begin{cases}
O(\sin\frac{\theta}{2}\sin^2\theta) & s=0\\
	O(\sin^2\theta) & s=1
	\end{cases}}d\mathcal{V},
\end{equation}
\begin{equation}\label{eq4.331}
I_{7}\leq c\int_{\mathcal{W}_{\varepsilon}}
\frac{\tan\frac{\theta}{2}}{\sin^3\theta}\underbrace{e^{-V_{\varepsilon}-6U}\cosh W_{\varepsilon}}_{\begin{cases}
	O(1) & s=0\\
	O(\csc^{2}\frac{\theta}{2})& s=1
	 \end{cases}}
\underbrace{(\bar{\Theta}^{2}_{\varepsilon}-e^{-V_{\varepsilon}}\cot\frac{\theta}{2} \tanh W_{\varepsilon}
	 \bar{\Theta}^{1}_{\varepsilon})^2}_{
\begin{cases}
O(\sin\frac{\theta}{2}\sin^2\theta) & s=0\\
	|\log\varepsilon|^{-2}O({\sin^2\theta}) & s=1
	\end{cases}}d\mathcal{V},
\end{equation}
\begin{equation}\label{eq4.35}
I_{9}\leq c\int_{\mathcal{W}_{\varepsilon}}
(\sin\theta)^{-2}\underbrace{e^{-4U}}_{O(1)}
\underbrace{|\Upsilon_{\varepsilon}|^{2}}_{\begin{cases}
|\log\varepsilon|^{-2} O({\sin\theta}),& s=0\\
 O(\sin^2\theta), & s=1
\end{cases}}d\mathcal{V}.
\end{equation}
Analogous estimates hold for the remaining integrals $I_{8}$, $I_{10}$, $I_{11}$, and $I_{12}$.
\end{proof}

A basic property of the harmonic energy, for maps into nonpositively curved target spaces, is convexity along geodesic deformations. When the cosmological constant is nonnegative this property carries over to the quasi-harmonic energy \eqref{eq4.1}.

\begin{proposition}\label{PropositionConvexity}
Let $\Omega\subset S^3$ be a domain which does not contain either of the poles $\theta=0,\pi$, and let $\tilde{\Psi}^t:\Omega\rightarrow G_{2(2)}/SO(4)$ be a family of smooth maps which are geodesics in $t\in [0,1]$. Then
\begin{equation}\label{eq4.38}
\frac{d^2}{dt^2}E_{\Omega}(\tilde{\Psi}^t)\geq \frac{1}{2\pi^2}
\int_{\Omega}|\nabla\mathrm{dist}_{G_{2(2)}/SO(4)}(\tilde{\Psi}^1,\tilde{\Psi}^0)|^2 d\mathcal{V}.
\end{equation}
\end{proposition}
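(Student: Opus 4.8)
The plan is to reduce \eqref{eq4.38} to a pointwise second-variation inequality for the energy density and then integrate. Write $\Phi(y,t)=\tilde\Psi^t(y)$ for the homotopy, let $\nabla$ denote the pullback of the Levi-Civita connection of $G_{2(2)}/SO(4)$, and let $e(y,t)=G(\partial_\theta\Phi,\partial_\theta\Phi)$ be the purely geometric part of the integrand of $E_\Omega$ in \eqref{eq4.1}. Because $\Omega$ omits the poles $\theta=0,\pi$, the maps $\tilde\Psi^t$ and the density $e$ are smooth, so I may differentiate under the integral sign freely. First I would use that the $t$-curves are geodesics, so $\nabla_t\partial_t\Phi=0$, together with the torsion-free identity $\nabla_t\partial_\theta\Phi=\nabla_\theta\partial_t\Phi$ (valid since $[\partial_t,\partial_\theta]=0$), to compute
\begin{equation}
\partial_t^2 e=2\,G\!\left(R(\partial_t\Phi,\partial_\theta\Phi)\partial_t\Phi,\partial_\theta\Phi\right)+2\,|\nabla_\theta\partial_t\Phi|^2,
\end{equation}
where $R$ is the curvature tensor of the target; here the curvature term arises from commuting $\nabla_t\nabla_\theta\partial_t\Phi=\nabla_\theta\nabla_t\partial_t\Phi+R(\partial_t\Phi,\partial_\theta\Phi)\partial_t\Phi$ and discarding the first summand via the geodesic equation.

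The next step exploits the two structural features of the target. Since $G_{2(2)}/SO(4)$ is nonpositively curved, the symmetries of $R$ give $G(R(\partial_t\Phi,\partial_\theta\Phi)\partial_t\Phi,\partial_\theta\Phi)=-\langle R(\partial_t\Phi,\partial_\theta\Phi)\partial_\theta\Phi,\partial_t\Phi\rangle\ge 0$, so $\partial_t^2 e\ge 2|\nabla_\theta\partial_t\Phi|^2$. To convert the right side into the distance function $d(y)=\mathrm{dist}_{G_{2(2)}/SO(4)}(\tilde\Psi^1,\tilde\Psi^0)$, I would use that each geodesic $t\mapsto\Phi(y,t)$ has constant speed equal to $d(y)$, so $|\partial_t\Phi|^2=d^2$ is independent of $t$; differentiating in $\theta$ and applying Cauchy--Schwarz gives $2d\,|\partial_\theta d|=|\partial_\theta(d^2)|=2|G(\nabla_\theta\partial_t\Phi,\partial_t\Phi)|\le 2\,d\,|\nabla_\theta\partial_t\Phi|$, hence $|\partial_\theta d|\le|\nabla_\theta\partial_t\Phi|$ wherever $d>0$ (and $\partial_t^2 e\ge 2(\partial_\theta d)^2$ persists trivially on $\{d=0\}$, where $\partial_\theta d=0$ a.e.). Because the relevant maps are functions of $\theta$ alone, $d$ is as well, and on the round metric of Section~\ref{Sec6} one has $|\nabla d|^2=4(\partial_\theta d)^2$; thus $\partial_t^2 e\ge\tfrac12|\nabla d|^2$ pointwise.

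It remains to handle the cosmological term. I would show that the potential $f=e^{-4u}$ appearing in the last line of \eqref{eq4.1} is geodesically convex on $G_{2(2)}/SO(4)$---this is where the hypothesis $\Lambda\ge0$ enters---so that $t\mapsto f(\Phi(y,t))$ is convex and the entire $\Lambda$-contribution to $\partial_t^2 E_\Omega$ is nonnegative, the prefactor $\xi\,\sin^2\theta$ multiplying $e^{-4u}$ being a fixed nonnegative, $t$-independent factor. Since $\xi$ does not depend on $t$, integrating the pointwise bound yields
\begin{equation}
\frac{d^2}{dt^2}E_\Omega(\tilde\Psi^t)\ge\frac{1}{\pi^2}\int_\Omega\xi\,\partial_t^2 e\;d\mathcal{V}\ge\frac{1}{2\pi^2}\int_\Omega\xi\,|\nabla d|^2\,d\mathcal{V},
\end{equation}
and \eqref{eq4.38} follows on using $\xi\ge1$ (with $\xi\equiv1$ when $\Lambda=0$, in which case the two sides agree).

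The hard part, I expect, will be the geodesic convexity of the potential $e^{-4u}$: unlike the Dirichlet term it is not automatic from nonpositive curvature, and must be verified directly from the explicit coset representative $\mathcal{M}$ and the target metric \eqref{eq4.2}, by checking that the Hessian of $e^{-4u}$ restricted to geodesics is nonnegative. A secondary technical point is the regularity of $d$ on the coincidence set $\{\tilde\Psi^1=\tilde\Psi^0\}$; this is controlled by the fact that on a nonpositively curved (hence uniquely geodesic) space $\mathrm{dist}^2$ is smooth off the diagonal and $d$ is Lipschitz with $\nabla d=0$ a.e.\ where it vanishes, which legitimizes both the pointwise inequality and the differentiation under the integral throughout $\Omega$.
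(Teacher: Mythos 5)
Your reduction is structurally the same as the paper's: split the quasi-harmonic energy as in \eqref{eq4.39} into the $\xi$-weighted harmonic part and the cosmological term, handle the first via nonpositive curvature of $G_{2(2)}/SO(4)$, and reduce the second to geodesic convexity of $e^{-4u}$; your constants also check out (the round metric of Section \ref{Sec6} gives $|\nabla d|^2=4(\partial_\theta d)^2$, and $\xi\geq 1$ yields the stated $\tfrac{1}{2\pi^2}$). The one place you diverge in execution is the harmonic part: you rederive the convexity from the second variation formula (geodesic equation, commuting $\nabla_t$ and $\nabla_\theta$, curvature term nonnegative by NPC, and $|\partial_\theta d|\leq|\nabla_\theta\partial_t\Phi|$ from the constant-speed identity $|\partial_t\Phi|\equiv d$), whereas the paper simply cites \cite{schoen2013convexity} to obtain \eqref{eq4.41}. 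Your derivation is correct and more self-contained, and your attention to the regularity of $d$ on the coincidence set, and to the fact that the maps depend on $\theta$ alone, is sound.

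However, there is a genuine gap at exactly the point you defer: the geodesic convexity of $e^{-4u}$ is asserted (``I would show\dots'') but never verified, and this is the nontrivial core of the paper's proof rather than a routine check. The paper establishes it by writing $\partial_t^2 e^{-4u_t}=4\bigl[\Gamma^u_{BC}\,\partial_t(\tilde{\Psi}^t)^B\partial_t(\tilde{\Psi}^t)^C+4\dot{u}_t^2\bigr]e^{-4u_t}$ via the geodesic equation \eqref{eq4.43}, and then computing every Christoffel symbol of \eqref{eq4.2} with upper index $u$, as in \eqref{eq4.44}--\eqref{eq4.51}. The structure is delicate: all symbols $\Gamma^u_{BC}$ with $B,C\in\{u,v,w\}$ vanish, the remaining ones have the form $\tfrac14 G_{BC}$ plus correction terms, and the corrections conspire so that the cross terms among $\dot{\zeta}^i,\dot{\chi},\dot{\bar{\psi}}^i$ reassemble into the combinations $\dot{\bar{\Theta}}^i_t$ and $\dot{\Upsilon}_t$ of \eqref{eq4.55}, producing the sum-of-squares identity \eqref{eq4.54}: $\Gamma^u_{BC}\dot{X}^B\dot{X}^C$ equals $\tfrac14$ of the two twist blocks, plus $\tfrac1{12}$ of the two magnetic blocks, plus $\tfrac16 p^2 e^{-4u_t}\dot{\Upsilon}_t^2$, which is manifestly nonnegative. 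Without this computation (or an equivalent structural argument for the Hessian of $e^{-4u}$ on the symmetric space), the convexity of the $\Lambda$-contribution --- and hence the proposition whenever $\Lambda>0$ --- remains unproven; the step you predicted to be ``the hard part'' is precisely the part of the proof that cannot be outsourced to general NPC theory.
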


\begin{proof}
The quasi-harmonic energy of the map $\tilde{\Psi}^t=(u_t,v_t,w_t,\zeta^{1}_{t},\zeta^{2}_{t},\chi_t,\psi^{1}_{t},
\psi^{2}_{t})$ is the sum of the pure harmonic energy scaled by $\xi$ and a term involving the cosmological constant, namely
\begin{equation}\label{eq4.39}
E_{\Omega}(\tilde{\Psi}^t)=\frac{1}{4\pi^2}\int_{\Omega}\xi|d\tilde{\Psi}^t|^2 d\mathcal{V}
+3p^2\Lambda\left(\frac{A}{2\sqrt{2}\pi^3}\right)^2\int_{\Omega}\xi e^{-4u_t}\sin^2\theta d\mathcal{V}
\end{equation}
where the energy density is
\begin{equation}\label{eq4.40}
|d\tilde{\Psi}^t|^2=4G_{BC}\partial_{\theta} (\tilde{\Psi}^t)^B
\partial_{\theta} (\tilde{\Psi}^t)^C.
\end{equation}
Since $G_{2(2)}/SO(4)$ is nonpositively curved the pure harmonic energy is convex along geodesics \cite{schoen2013convexity}, and the same is true for the scaling by $\xi$. In particular, using that $\xi\geq 1$ it holds that
\begin{equation}\label{eq4.41}
\frac{d^2}{dt^2}\int_{\Omega}\xi|d\tilde{\Psi}^t|^2 d\mathcal{V}\geq 2\int_{\Omega}|\nabla\mathrm{dist}_{G_{2(2)}/SO(4)}(\tilde{\Psi}^1,\tilde{\Psi}^0)|^2 d\mathcal{V}.
\end{equation}
Thus, it remains to show that
\begin{equation}\label{eq4.42}
\partial_{t}^{2}e^{-4u_t}=4(-\ddot{u}_t+4\dot{u}_t^2)e^{-4u_t}
=4\left[\Gamma^u_{BC}\partial_{t}(\tilde{\Psi}^t)^B \partial_{t}(\tilde{\Psi}^t)^C
+4\dot{u}_t^2\right]e^{-4u_t}\geq 0,
\end{equation}
where $\dot{u}_t=\partial_{t}u_t$ and the geodesic equation
\begin{equation}\label{eq4.43}
\ddot{u}_t+\Gamma^u_{BC}\partial_{t}(\tilde{\Psi}^t)^B \partial_{t}(\tilde{\Psi}^t)^C=0
\end{equation}
was used.

The Christoffel symbols may be computed as follows
\begin{equation}\label{eq4.44} \Gamma^{u}_{uu}=\Gamma^{u}_{uv}=\Gamma^{u}_{vv}=\Gamma^{u}_{uw}
=\Gamma^{u}_{u\zeta^i}=\Gamma^{u}_{u\psi^i}=\Gamma^{u}_{u\chi}=0,
\end{equation}
\begin{equation}\label{eq4.45} \Gamma^{u}_{vw}=\Gamma^{u}_{v\zeta^i}=\Gamma^{u}_{v\psi^i}=\Gamma^{u}_{v\chi}=
\Gamma^{u}_{ww}=\Gamma^{u}_{w\zeta^i}=\Gamma^{u}_{w\psi^i}=\Gamma^{u}_{w\chi}=0,
\end{equation}
\begin{equation}\label{eq4.47}
\Gamma^{u}_{\zeta^i\zeta^j}=\frac{1}{4}G_{\zeta^i\zeta^j},\quad \Gamma^{u}_{\zeta^i\bar{\psi}^j}=\frac{1}{4}G_{\zeta^i\bar{\psi}^j},\quad \Gamma^{u}_{\zeta^i\chi}=\frac{1}{4}G_{\zeta^i\chi},\quad
\Gamma^{u}_{\chi\chi}=\frac{1}{4}G_{\chi\chi}-\frac{p^2}{12}e^{-4u},
\end{equation}
\begin{equation}\label{eq4.48}
\Gamma^{u}_{\bar{\psi}^1\bar{\psi}^1}=\frac{1}{4}G_{\bar{\psi}^1\bar{\psi}^1}
-\frac{p^2}{6\cosh w}e^{-2u-v}-\frac{p^2}{6}e^{-2u-v}\sinh w\tanh w-\frac{p^2}{18}e^{-4u}(\bar{\psi}^2)^2,
\end{equation}
\begin{equation}\label{eq4.49}
\Gamma^{u}_{\bar{\psi}^1\bar{\psi}^2}=\frac{1}{4}G_{\bar{\psi}^1\bar{\psi}^2}
+\frac{p^2}{3}e^{-2u}\sinh w+\frac{p^2}{18}e^{-4u}\bar{\psi}^1\bar{\psi}^2,\quad
\Gamma^{u}_{\bar{\psi}^1\chi}=\frac{1}{4}G_{\bar{\psi}^1\chi}
+\frac{p^2}{6\sqrt{3}}e^{-4u}\bar{\psi}^2,
\end{equation}
\begin{equation}\label{eq4.51}
\Gamma^{u}_{\psi^2\chi}=\frac{1}{4}G_{\bar{\psi}^2\chi}
-\frac{p^2}{6\sqrt{3}}e^{-4u}\bar{\psi}^1,\quad
\Gamma^{u}_{\bar{\psi}^2\bar{\psi}^2}=\frac{1}{4}G_{\bar{\psi}^2\bar{\psi}^2}
-\frac{p^2}{36}e^{-4u}(\bar{\psi}^1)^2-\frac{p^2}{6}e^{-2u+v}\cosh w.
\end{equation}
Let
\begin{equation}\label{eq4.55}
\dot{\bar{\Theta}}_{t}^{i}=	\dot{\zeta}_{t}^i+ \psi_{t}^{i} \left( 	\dot{\chi}_{t} + \frac{1}{3\sqrt{3}}(\bar{\psi}_{t}^1 	 \dot{\bar{\psi}}_{t}^2 - \bar{\psi}_{t}^2 	 \dot{\bar{\psi}}^1) \right),\qquad \dot{\Upsilon}_{t}=	\dot{\chi}_{t} + \frac{1}{\sqrt{3}}(\bar{\psi}_{t}^1 	\dot{\bar{\psi}}_{t}^2 - \bar{\psi}_{t}^2	 \dot{\bar{\psi}}_{t}^1),
\end{equation}
and observe that
\begin{align}\label{eq4.56}
\begin{split}
|\partial_{t}\tilde{\Psi}^t|^2_{G}=&12\dot{u}_{t}^{2}
+\cosh^{2}w_{t} \dot{v}_{t}^{2}
+\dot{w}_{t}^{2}+\frac{p^2e^{-6u_{t}-v_{t}}}{\cosh w_{t}}(\dot{\bar{\Theta}}_{t}^{1})^{2}\\
&
+p^2e^{-6u_t+v_t}\cosh w_{t} (e^{-v_t}\tanh w_{t}\dot{\bar{\Theta}}_{t}^{1}-\dot{\bar{\Theta}}_{t}^{2})^{2}
+\frac{p^2e^{-2u_t-v_t}}{\cosh w_{t}}(\dot{\bar{\psi}}_{t}^{1})^{2}\\
&+p^2e^{-2u_t+v_t}\cosh w_{t}(e^{-v_t}\tanh w_{t} \dot{\bar{\psi}}_{t}^{1}-\dot{\bar{\psi}}_{t}^{2})^{2}
+p^2e^{-4u_{t}}\dot{\Upsilon}_{t}^{2}.
\end{split}
\end{align}
Therefore
\begin{align}\label{eq4.54}
\begin{split}	
\Gamma^u_{BC}\partial_{t}(\tilde{\Psi}^t)^B \partial_{t}(\tilde{\Psi}^t)^C
=&\frac{p^2e^{-6u_t-v_t}}{4\cosh w_{t}}(\dot{\bar{\Theta}}_{t}^{1})^{2}
+\frac{p^2}{4}e^{-6u_t+v_t}\cosh w_{t} (e^{-v_t}\tanh w_{t}\dot{\bar{\Theta}}_{t}^{1}-\dot{\bar{\Theta}}_{t}^{2})^{2}\\
&+\frac{p^2e^{-2u_t-v_t}}{12\cosh w_t}(\dot{\bar{\psi}}_{t}^{1})^{2}
+\frac{p^2}{12}e^{-2u_t+v_t}\cosh w_{t}(e^{-v_t}\tanh w_t \dot{\bar{\psi}}_{t}^{1}-\dot{\bar{\psi}}_{t}^{2})^{2}\\
&+\frac{p^2}{6}e^{-4u_t}\dot{\Upsilon}_{t}^{2},
\end{split}
\end{align}
confirming that \eqref{eq4.42} is nonegative.
\end{proof}

It is now possible to prove the main result of this section.

\begin{proof}[Proof of Theorem \ref{minimum}]
Let $\tilde{\Psi}^{t}_{\varepsilon}$, $t\in[0,1]$ be the minimizing geodesic in $G_{2(2)}/SO(4)$ connecting $\tilde{\Psi}_{0}$ to $\tilde{\Psi}_{\varepsilon}$. Then $U^{t}_{\varepsilon}=U_{0}+t(U-U_{0})$ and $V^{t}_{\varepsilon}=V_{0}$ on $S^3 \setminus \Omega_{\varepsilon}$. Observe that
\begin{equation}\label{eq4.57}
\frac{d^{2}}{dt^{2}}\mathcal{I}_{\mathcal{B}}(\Psi^{t}_{\varepsilon})
=\underbrace{\frac{d^{2}}{dt^{2}}\mathcal{I}_{\Omega_{\varepsilon}}
(\Psi^{t}_{\varepsilon})}_{I_{1}}+
\underbrace{\frac{d^{2}}{dt^{2}}\mathcal{I}_{S^3 \setminus \Omega_{\varepsilon}}
(\Psi^{t}_{\varepsilon})}_{I_{2}}.
\end{equation}
According to Proposition \ref{PropositionConvexity} and \eqref{eq4.6} we have
\begin{align}\label{eq4.58}
\begin{split}
I_{1}=&\frac{d^{2}}{dt^{2}}\frac{1}{4}E_{\Omega_{\varepsilon}}
(\tilde{\Psi}^{t}_{\varepsilon}) -\frac{d^{2}}{dt^{2}}\frac{1}{4}\int_{\Omega_{\varepsilon}}
\xi\left(\left(2s\cos^2\frac{\theta}{2}
-1\right)^2+3\cos^2\theta\right)(\partial_{\theta}h_2)^2 d\mathcal{V}\\	
&+\frac{d^{2}}{dt^{2}}\frac{1}{4}\int_{\partial\Omega_{\varepsilon}}
\xi\left(2\left(2s\cos^2\frac{\theta}{2}-1\right)(V_s)_0
-12\cos\theta \left(U_0+t(U-U_0)\right)\right)\partial_{\nu}h_2 dA\\
\geq & \frac{1}{8\pi^2}\int_{\Omega_{\varepsilon}}
|\nabla\operatorname{dist}_{G_{2(2)}/SO(4)}
(\tilde{\Psi}_{\varepsilon},\tilde{\Psi}_{0})|^{2}d\mathcal{V}.
\end{split}
\end{align}
Furthermore, using that $\operatorname{dist}_{G_{2(2)}/SO(4)}(\tilde{\Psi}_{\varepsilon},\tilde{\Psi}_{0})
=\sqrt{12}|u-u_{0}|$ on $S^3\setminus\Omega_{\varepsilon}$ yields
\begin{align}\label{eq4.59}
\begin{split}
I_{2}=&\frac{1}{4\pi^2}\int_{S^3\setminus \Omega_{\varepsilon}}
p^2\xi\Bigg\{\frac{24}{p^2}
(\partial_{\theta}U-\partial_{\theta}U_{0})^{2}
+36 (U-U_{0})^{2}
\frac{e^{-6h_1-h_2-6U^{t}-V_{0}}}{\cosh W_0}(\bar{\Theta}^{1}_{0})^{2}\\
&+36 (U-U_{0})^{2}
e^{-6h_1+h_2-6U^{t}+V_{0}}\cosh W_{0}(e^{-h_{2}-V_{0}}
\tanh W_{0}\bar{\Theta}^{1}_{0}-\bar{\Theta}_{0}^{2})^{2}\\
&+4(U-U_{0})^{2}
\frac{e^{-2h_1-h_2-2U^{t}-V_{0}}}{\cosh W_{0}}(\partial_{\theta}\bar{\psi}^{1}_{0})^2\\
&+4(U-U_{0})^{2}e^{-2h_1+h_2-2U^{t}+V_{0}}\cosh W_{0}
(e^{-h_2-V_{0}}\tanh W_{0}\partial_{\theta}\bar{\psi}_{0}^{1}-\partial_{\theta}\bar{\psi}_{0}^{2})^{2}\\
&+16(U-U_{0})^{2}e^{-4h_1-4U^t}(\Upsilon_{0})^2
+48 \Lambda\left(\frac{A}{2\sqrt{2}\pi^2}\right)^2(U-U_{0})^{2}e^{-4U^t}\Bigg\}
d\mathcal{V}\\
\geq &\frac{1}{2\pi^2}\int_{S^3\setminus\Omega_{\varepsilon}}
|\nabla\operatorname{dist}_{G_{2(2)}/SO(4)}
(\tilde{\Psi}_{\varepsilon},\tilde{\Psi}_{0})|^{2}d\mathcal{V}.
\end{split}
\end{align}
Note that passing $\frac{d^{2}}{dt^{2}}$ inside the integral is justified here
since all terms on the right-hand side of \eqref{eq4.59} are uniformly integrable.

Next using the fact that $\Psi_0$ is a critical point of the functional $\mathcal{I}_{\mathcal{B}}$, as well as the fact that in a neighborhood of the poles
\begin{equation}\label{eq4.62}
\frac{d}{dt}V^{t}_{\varepsilon}=
\frac{d}{dt}W^{t}_{\varepsilon}
=\frac{d}{dt}\zeta^{i,t}_{\varepsilon}		 =\frac{d}{dt}\chi^{t}_{\varepsilon}=\frac{d}{dt}\psi^{i,t}_{\varepsilon}
=0,\qquad i=1,2,
\end{equation}
shows
\begin{equation}\label{eq4.60}
\frac{d}{dt}\mathcal{I}_{\mathcal{B}}(\Psi^{t}_{\varepsilon})\bigg\vert_{t=0}	
=6\xi (U-U_{0})\partial_{\theta}U_{0}\sin\theta\bigg\vert^{\pi}_{0}=0.
\end{equation}
Now combine \eqref{eq4.57}-\eqref{eq4.59} and \eqref{eq4.60} to find
\begin{align}\label{eq4.64}
\begin{split}
\mathcal{I}_{\mathcal{B}}(\Psi_{\varepsilon})-\mathcal{I}_{\mathcal{B}}(\Psi_{0})
\geq& \frac{1}{8\pi^2}\int_{S^3}|\nabla\operatorname{dist}_{G_{2(2)}/SO(4)}
(\tilde{\Psi}_{\varepsilon},\tilde{\Psi}_{0})|^{2}d\mathcal{V}\\
\geq& C\int_{S^3}\left(\operatorname{dist}_{G_{2(2)}/SO(4)}	 (\tilde{\Psi}_{\varepsilon},\tilde{\Psi}_{0})-D_{\varepsilon}\right)^{2}d\mathcal{V},
\end{split}
\end{align}
where the second line arises from the Poincar\'e inequality and $D_{\varepsilon}$ is the average value of the distance between $\tilde{\Psi}_{\varepsilon}$ and $\tilde{\Psi}_{0}$. By Lemma \ref{lemmaepsilon}
$\lim_{\varepsilon\rightarrow 0}\mathcal{I}_{\mathcal{B}}
(\Psi_{\varepsilon})=\mathcal{I}_{\mathcal{B}}(\Psi)$, so the proof will be complete if it can be shown that this limit may be passed inside the integral. To accomplish this, observe that by the triangle inequality it is enough to verify
\begin{equation}\label{eq4.65}
\lim_{\varepsilon\rightarrow 0}\int_{S^3} \operatorname{dist}_{G_{2(2)}/SO(4)}^{2}(\tilde{\Psi}_{\varepsilon},\tilde{\Psi})
d\mathcal{V}=0.
\end{equation}
The triangle inequality implies
\begin{align}\label{eq4.66}
\begin{split} &\operatorname{dist}_{G_{2(2)}/SO(4)}(\tilde{\Psi}_{\varepsilon},\tilde{\Psi})\\
\leq&\operatorname{dist}_{G_{2(2)}/SO(4)}
((u,v_{\varepsilon},w_{\varepsilon},
\zeta^{1}_{\varepsilon},\zeta^{2}_{\varepsilon},\chi_{\varepsilon},
\psi^{1}_{\varepsilon},\psi^{2}_{\varepsilon}),
(u,v,w_{\varepsilon},
\zeta^{1}_{\varepsilon},\zeta^{2}_{\varepsilon},\chi_{\varepsilon},
\psi^{1}_{\varepsilon},\psi^{2}_{\varepsilon}))\\
&+\operatorname{dist}_{G_{2(2)}/SO(4)}
((u,v,w_{\varepsilon},
\zeta^{1}_{\varepsilon},\zeta^{2}_{\varepsilon},\chi_{\varepsilon},
\psi^{1}_{\varepsilon},\psi^{2}_{\varepsilon}),
(u,v,w,
\zeta^{1}_{\varepsilon},\zeta^{2}_{\varepsilon},\chi_{\varepsilon},
\psi^{1}_{\varepsilon},\psi^{2}_{\varepsilon}))\\
&+\cdots+
\operatorname{dist}_{G_{2(2)}/SO(4)}
((u,v,w,
\zeta^{1},\zeta^{2},\chi,
\psi^{1},\psi^{2}_{\varepsilon}),
(u,v,w,
\zeta^{1},\zeta^{2},\chi,
\psi^{1},\psi^{2}))\\
\leq& C\Bigg\{|v-v_{\varepsilon}|
+|w-w_{\varepsilon}|
+e^{-3u}\left(e^{-\tfrac{1}{2}v}
|\zeta^{1}-\zeta^{1}_{\varepsilon}|
+e^{\tfrac{1}{2}v}
|\zeta^{2}-\zeta^{2}_{\varepsilon}|\right)\\
&+ e^{-3u}\left(e^{-\tfrac{1}{2}v}
(|\psi^1|+|\psi^{1}_{0}|)
+e^{\tfrac{1}{2}v}
(|\psi^2|+|\psi^{2}_{0}|)\right)|\chi-\chi_{\varepsilon}|\\
&+e^{-3u}\left((|\psi^1|+|\psi^{1}_{0}|)
(|\psi^2|+|\psi^{2}_{0}|)e^{-\tfrac{1}{2}v}
+(|\psi^2|+|\psi^{2}_{0}|)^{2}e^{\tfrac{1}{2}v}\right)
|\psi^{1}-\psi^{1}_{\varepsilon}|\\
&+e^{-3u}\left((|\psi^1|+|\psi^{1}_{0}|)^{2}
e^{-\tfrac{1}{2}v}
+(|\psi^1|+|\psi^{1}_{0}|)(|\psi^2|+|\psi^{2}_{0}|)
e^{\tfrac{1}{2}v}\right)
|\psi^{2}-\psi^{2}_{\varepsilon}|\\
&+e^{-2u}\left(|\chi-\chi_{\varepsilon}|
+(|\psi^2|+|\psi^{2}_{0}|)|\psi^{1}-\psi^{1}_{\varepsilon}|	 +(|\psi^1|+|\psi^{1}_{0}|)|\psi^{2}-\psi^{2}_{\varepsilon}|\right)\\
&+e^{-u}\left(e^{-\tfrac{1}{2}v}|\psi^{1}-\psi^{1}_{\varepsilon}|
+e^{\tfrac{1}{2}v}|\psi^{2}-\psi^{2}_{\varepsilon}|\right)\Bigg\},
\end{split}
\end{align}
where it was used that distances between points of $G_{2(2)}/SO(4)$ are dominated by the length of connecting coordinate lines.
Since all terms on the right-hand side are uniformly bounded independent of $\varepsilon$, \eqref{eq4.65} follows from the dominated convergence theorem.
\end{proof}

\section{Proof of the Main Results}
\label{Sec7}

\textit{Proof of Theorems \ref{Thm1}, 2.2, \ref{Thm3}.}
From the spacetime $(M,\mathbf{g},F)$ and stable MOTS we obtain the map $\Psi$ as explained in Sections \ref{Sec4} and \ref{Sec5}.
Since $\Lambda=0$ it holds that $\alpha_{\xi}=1$, so that by Proposition \ref{proposition3.2}
\begin{equation}\label{00}
A\geq \frac{4\sqrt{3}\pi^2}{p}
e^{\frac{\mathcal{I}_{\mathcal{B}}(\Psi)-\beta^{s}_{1}}{2}}.
\end{equation}
Let $\Psi_0$ be the renormalized harmonic map arising from the near horizon geometry of the relevant model extreme black hole (mentioned in the statement of each theorem) having angular momentum and charges that agree with those of the given MOTS. Then according Theorem \ref{minimum}
\begin{equation}\label{000}
\frac{4\sqrt{3}\pi^2}{p}
e^{\frac{\mathcal{I}_{\mathcal{B}}(\Psi)-\beta^{s}_{1}}{2}}\geq
\frac{4\sqrt{3}\pi^2}{p}
e^{\frac{\mathcal{I}_{\mathcal{B}}(\Psi_0)-\beta^{s}_{1}}{2}}=A_0,
\end{equation}
where $A_0$ is the area of the horizon for the relevant model extreme black hole. Note that the equality in \eqref{000} follows from the fact that the function $\mathcal{M}$ vanishes
at $\Psi_0$, as is shown in Section \ref{Sec4}. In the appendices the value of $A_0$ is computed in terms of angular momentum and charges. This, together with \eqref{00} and \eqref{000} yields the desired area-angular momentum-charge inequality for each theorem.
In the case this inequality is saturated, we must have $\mathcal{I}_{\mathcal{B}}(\Psi)=\mathcal{I}_{\mathcal{B}}(\Psi_0)$ which implies by the gap bound that $\mathrm{dist}_{G_{2(2)}/SO(4)}(\tilde{\Psi},\tilde{\Psi}_0)$ is constant.
Since $\Psi$ realizes the infimum of the functional $\mathcal{I}_{\mathcal{B}}$, it is a critical point and hence a harmonic map. According to \cite{AlaeeKhuriKunduri_NHG} the two maps $\Psi$ and $\Psi_0$ must then be related by an isometry in the target symmetric space. The Maxwell field $F$ may then be reconstructed from $\Psi$ via \eqref{eq2.22}, and thus $(\mathcal{B},\gamma,F)$ must arise from the relevant near horizon geometry.
\hfill\qedsymbol\medskip

Before proceeding to the proof of Theorem \ref{thm5} we need a preliminary result.

\begin{lemma}\label{LMA}
\textit{(a)} Given $(A,\mathcal{J}_1,\mathcal{J}_2)\in \mathbb{R}^2_{+} \times \mathbb{R}_{-}$ with $\mathcal{J}=\mathcal{J}_1=\pm \mathcal{J}_{2}$, there exists a unique $(\hat{A},\hat{\mathcal{J}}_1,\hat{\mathcal{J}}_2)\in \mathbb{R}^2_{+} \times \mathbb{R}_{-}$ with $\hat{\mathcal{J}}=\hat{\mathcal{J}}_1=\pm \hat{\mathcal{J}}_2$ which saturates
\begin{equation}\label{AJin} \frac{\Lambda^3A^6}{2^{10}\pi^6\mathcal{J}^2}\leq
\frac{\left(A\sqrt{A^2+512\pi^2 \mathcal{J}^2}-A^2-128\pi^2\mathcal{J}^2\right)^3}
{\left(A-\sqrt{A^2+512 \pi^2 \mathcal{J}^2}\right)^4},
\end{equation}
and satisfies
\begin{equation}
	\hat{\mathcal{J}}=\frac{\mathcal{J}}{A^2}\hat{A}^2,\qquad \hat{A}\leq \frac{\pi^2}{\sqrt{2\Lambda^3}}.
\end{equation}	
Moreover, the inequality \eqref{AJin} is equivalent to $\hat{A}\geq A$.\smallskip

\textit{(b)} Given $(A,Q)\in \mathbb{R}^2_+$, there exists a unique $(\hat{A},\hat{Q})\in \mathbb{R}^2_+$ which saturates
\begin{equation}\label{AQin}
Q^2\leq \frac{12}{64\pi^2}\left(\frac{A\pi}{2}\right)^{4/3}-\frac{3\Lambda A^2}{32\pi^2},
\end{equation}
and satisfies
\begin{equation}
\hat{Q}=\frac{Q}{A}\hat{A},\qquad \hat{A}\leq \frac{\pi^2}{\sqrt{2\Lambda^3}}.
\end{equation}	
Moreover, the inequality \eqref{AQin} is equivalent to $\hat{A}\geq A$.
\end{lemma}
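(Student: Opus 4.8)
The plan is to dispatch both parts with a single device: use the prescribed scaling relations to restrict attention to the ray on which the relevant ratio is held fixed, reduce each inequality to a one-variable statement in $A$, and then read off existence, uniqueness, the equivalence with $\hat A\ge A$, and the universal bound from elementary monotonicity. The whole difficulty, such as it is, lies in an algebraic simplification in part (a); there is no genuine analytic obstruction.

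For part (a), I would fix the ratio $k:=\mathcal J/A^2<0$, so that $\hat{\mathcal J}=k\hat A^2$ is exactly the required scaling, and introduce the substitution
\[
s:=A^{-1}\sqrt{A^2+512\pi^2\mathcal J^2}=\sqrt{1+512\pi^2\mathcal J^2/A^2}\in(1,\infty).
\]
A direct computation gives $A\sqrt{A^2+512\pi^2\mathcal J^2}-A^2-128\pi^2\mathcal J^2=\tfrac14 A^2(s-1)(3-s)$ and $A-\sqrt{A^2+512\pi^2\mathcal J^2}=-A(s-1)$, so the right-hand side of \eqref{AJin} collapses to $A^2(3-s)^3/\big(64(s-1)\big)$. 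Clearing denominators (legitimate since $s>1$) then shows that \eqref{AJin} is equivalent to the clean relation
\[
32\Lambda^3 A^2\le \pi^4\,\psi(s),\qquad \psi(s):=(s+1)(3-s)^3.
\]
This collapse is the crux of the argument, and I expect the bookkeeping here to be the only real obstacle; once it is in hand everything else is routine.

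Along the ray $\mathcal J=kA^2$ one has $s=\sqrt{1+512\pi^2k^2A^2}$, a strictly increasing function of $A$ with range $(1,\infty)$. The left side $32\Lambda^3A^2$ is strictly increasing from $0$, while $\psi'(s)=-4s(3-s)^2\le 0$ shows that the right side $\pi^4\psi(s)$ is strictly decreasing in $A$, from $\pi^4\psi(1)=16\pi^4$ down to $-\infty$. Since a strictly increasing function starting below a strictly decreasing one crosses it exactly once, the two sides meet at a unique $\hat A>0$; this produces the unique saturating triple with $\hat{\mathcal J}=k\hat A^2=(\mathcal J/A^2)\hat A^2$, and because the inequality is preserved precisely up to the crossing it is equivalent to $A\le\hat A$, i.e. $\hat A\ge A$. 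Finally, at saturation $32\Lambda^3\hat A^2=\pi^4\psi(\hat s)\le 16\pi^4$ (using $\psi\le\psi(1)=16$ on $[1,\infty)$, immediate from $\psi'\le 0$), which yields $\hat A^2\le \pi^4/(2\Lambda^3)$, i.e. $\hat A\le \pi^2/\sqrt{2\Lambda^3}$.

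Part (b) is entirely analogous but uses only power functions. Fixing $j:=Q/A$ and substituting $Q=jA$ reduces \eqref{AQin}, after dividing by $A^{4/3}$, to $\big(j^2+\tfrac{3\Lambda}{32\pi^2}\big)A^{2/3}\le \tfrac{12}{64\pi^2}(\pi/2)^{4/3}$. The left side is strictly increasing in $A$ from $0$ and the right side is a positive constant, so there is a unique saturating $\hat A$, with $\hat Q=j\hat A=(Q/A)\hat A$ and \eqref{AQin} equivalent to $A\le\hat A$. For the bound I would simply discard the nonnegative charge term, writing $\hat A^{2/3}=\frac{12(\pi/2)^{4/3}/(64\pi^2)}{j^2+3\Lambda/(32\pi^2)}\le \frac{12(\pi/2)^{4/3}/(64\pi^2)}{3\Lambda/(32\pi^2)}=2(\pi/2)^{4/3}/\Lambda$, and cubing gives $\hat A^2\le 8(\pi/2)^4/\Lambda^3=\pi^4/(2\Lambda^3)$, the same bound as before. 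Thus both parts rest, after their respective reductions, on the monotonicity of one explicit function, and the only care needed is in the algebra of part (a).
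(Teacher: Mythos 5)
Your proposal is correct, and while it shares the paper's central device---restricting to the scaling ray $\mathcal{J}=kA^2$ (resp.\ $Q=jA$), which is precisely what enforces the relations $\hat{\mathcal{J}}=(\mathcal{J}/A^2)\hat{A}^2$ and $\hat{Q}=(Q/A)\hat{A}$---your completion of the resulting one-variable problem is genuinely different from, and more complete than, the paper's. The paper compares asymptotics along the curve (left side $\sim\tau^2$, right side $\sim 1$ as $\tau\to 0$, reversed for large $\tau$), invokes the intermediate value theorem for existence, defers uniqueness to an unexhibited ``analysis of the roots of the associated polynomial'' asserted only for $\tau\leq\pi^2/\sqrt{2\Lambda^3}$, and reads off the equivalence with $\hat{A}\geq A$ from a geometric picture. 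Your substitution $s=\sqrt{1+512\pi^2\mathcal{J}^2/A^2}$, with the identities $A\sqrt{A^2+512\pi^2\mathcal{J}^2}-A^2-128\pi^2\mathcal{J}^2=\tfrac{1}{4}A^2(s-1)(3-s)$ and $A-\sqrt{A^2+512\pi^2\mathcal{J}^2}=-A(s-1)$ (both of which check out, as do the positive factors you clear), collapses \eqref{AJin} to $32\Lambda^3A^2\leq\pi^4\psi(s)$ with $\psi(s)=(s+1)(3-s)^3$; since $A\mapsto s(A)$ is a strictly increasing bijection onto $(1,\infty)$ and $\psi'(s)=-4s(3-s)^2\leq 0$ vanishes only at the isolated point $s=3$, the two sides are strictly monotone in opposite directions. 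This single observation yields existence, \emph{global} uniqueness of the crossing on the whole ray (not merely under the restriction $\hat{A}\leq\pi^2/\sqrt{2\Lambda^3}$), the equivalence of \eqref{AJin} with $A\leq\hat{A}$, and---via $\psi\leq\psi(1)=16$---the bound $\hat{A}\leq\pi^2/\sqrt{2\Lambda^3}$ as a derived consequence rather than a side condition; because you never square away the radical, no spurious roots can arise, so your route actually supplies the uniqueness argument the paper leaves implicit. Part (b) is treated the same way in both arguments, and your explicit computation of the bound there is correct. Two cosmetic points: $k=\mathcal{J}/A^2$ is positive when $\mathcal{J}=\mathcal{J}_1>0$, not negative as written (harmless, since only $k^2$ enters $s$), and the strict increase of the left side uses $\Lambda>0$, which is the standing hypothesis of Theorem \ref{thm5}.
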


\begin{proof}
Consider part (a) when $\mathcal{J}_{1}=-\mathcal{J}_2$; similar arguments hold when $\mathcal{J}_1=\mathcal{J}_2$. We may assume without loss of generality that $\mathcal{J}_1>0$. Define the curve
\begin{equation}
		f(\tau)=(A(\tau),\mathcal{J}_1(\tau),\mathcal{J}_2(\tau))=\left(\tau, \frac{\mathcal{J}_1}{A^2}\tau^2,\frac{\mathcal{J}_2}{A^2}\tau^2\right)=\left(\tau, \frac{\mathcal{J}}{A^2}\tau^2,-\frac{\mathcal{J}}{A^2}\tau^2\right)
\end{equation}
in $\mathbb{R}^2_{+} \times \mathbb{R}_{-}$. Then for small $\tau$ each side of the inequality satisfies
\begin{equation}
		\frac{\Lambda^3A^6(\tau)}{2^{10}\pi^6\mathcal{J}^2(\tau)}\sim\tau^{2},\qquad \frac{\left(A(\tau)\sqrt{A^2(\tau)+512\pi^2 \mathcal{J}^2(\tau)}
-A^2(\tau)-128\pi^2\mathcal{J}^2(\tau)\right)^3}{\left(A(\tau)-\sqrt{A^2(\tau)
+512\pi^2 \mathcal{J}^2(\tau)}\right)^4} \sim 1,
\end{equation}
so that the inequality holds on the curve $f$. For large $\tau$ it is
clear that the inequality is reversed. It follows that there exists a time $\tau= \hat{A}$ for which the inequality is saturated. Further analysis of the roots of the associated polynomial show that this time is unique for $\tau\leq \frac{\pi^2}{\sqrt{2\Lambda^3}}$.

In order to establish the last statement in part (a), we interpret $\mathbb{R}^2_{+}$ as having a vertical $\mathcal{J}$-axis and horizontal $A$-axis. Observe that inequality \eqref{AJin}
corresponds to all points lying below the surface defined by equality in \eqref{AJin}; this is similar to
Figure 1 in Appendix A. According to the description of $\hat{A}$ above, it follows that $A\leq\hat{A}$ if and only if the inequality \eqref{AJin} is satisfied.

Similar arguments may be used to establish part (b) with the curve
\begin{equation}
f(\tau)=(A(\tau),Q(\tau))=\left(\tau, \frac{Q}{A}\tau\right)
\end{equation}
in $\mathbb{R}^2_+$.	
\end{proof}

\textit{Proof of Theorem \ref{thm5}.}
We will provide details only for part (a), as similar arguments may be used for part (b). Let $\Psi$ be the map obtained from the spacetime $(M,\mathbf{g},F)$ as explained in Sections \ref{Sec4} and \ref{Sec5}, and let $(A,\mathcal{J}_1,\mathcal{J}_2)$ be the area and angular momenta of the stable MOTS. Lemma \ref{LMA} states that there exist corresponding values $(\hat{A},
\hat{\mathcal{J}}_1,\hat{\mathcal{J}}_2)$ which arise from an extreme
CCLP black hole, and are such that the desired inequality \eqref{mainJin} is equivalent to showing $\hat{A}\geq A$. Let $\hat{a}$ and $\hat{b}$ be the angular momentum parameters for this extreme CCLP solution (these quantities are given implicitly in terms of $\hat{\mathcal{J}}_1$ and $\hat{\mathcal{J}}_2$ by the equations \eqref{eqA.7}), and set
\begin{align}
\begin{split}
\hat{\Psi}=&(\hat{U},\hat{V},\hat{W},\hat{\zeta}^1,\hat{\zeta}^2,
\hat{\chi},\hat{\psi}^1,
\hat{\psi}^2)\\
=&\left(U+\frac{1}{2}\log\frac{\hat{A}}{A},V,W,\left(\frac{\hat{A}}{A}\right)^{3/2}\zeta^1,
\left(\frac{\hat{A}}{A}\right)^{3/2}\zeta^2,
\frac{\hat{A}}{A}\chi,\left(\frac{\hat{A}}{A}\right)^{1/2}\psi^1,
\left(\frac{\hat{A}}{A}\right)^{1/2}\psi^2\right).
\end{split}
\end{align}
According to Theorem \ref{minimum}
\begin{equation}\label{asdf0}
\hat{\mathcal{I}}_{S^{3}}(\hat{\Psi})\geq\hat{\mathcal{I}}_{S^3}(\hat{\Psi}_0),
\end{equation}
where $\hat{\Psi}_0$ denotes the extreme CCLP map with the same angular momenta $\hat{\mathcal{J}}_1$, $\hat{\mathcal{J}}_2$, and $\hat{\mathcal{I}}_{S^3}$ represents the functional
$\mathcal{I}_{S^3}$ defined with respect to the quantities $\hat{a}$, $\hat{b}$, and $\hat{A}$.
Next observe that
\begin{equation}
\hat{\mathcal{I}}_{S^3}(\Psi)=\hat{\mathcal{I}}_{S^3}(\hat{\Psi})
-3\alpha_{\hat{\xi}}\log\frac{\hat{A}}{A},
\end{equation}
and therefore with the help of Proposition \ref{proposition3.2}
\begin{equation}\label{asdf}
A\geq 4\sqrt{3}\pi^2 e^{\frac{\hat{\mathcal{I}}_{S^3}(\Psi)-\hat{\beta}^{0}_{\hat{\xi}}}{2\alpha_{\hat{\xi}}}}
=\frac{4\sqrt{3}\pi^2 A^{3/2}}{\hat{A}^{3/2}} e^{\frac{\hat{\mathcal{I}}_{S^3}(\hat{\Psi})-\hat{\beta}^{0}_{\hat{\xi}}}{2\alpha_{\hat{\xi}}}}.
\end{equation}
By combining \eqref{asdf0} and \eqref{asdf} we obtain at $\hat{A}\geq A$, since
\begin{equation}
4\sqrt{3}\pi^2 e^{\frac{\mathcal{I}_{S^3}(\hat{\Psi}_0)-\hat{\beta}^{0}_{\hat{\xi}}}{2\alpha_{\hat{\xi}}}}
=\hat{A}.
\end{equation}

Consider now the case of equality in \eqref{mainJin}. By the proof of Lemma \ref{LMA} this implies that $(\hat{A},\hat{\mathcal{J}}_1,\hat{\mathcal{J}}_2)=(A,\mathcal{J}_1,\mathcal{J}_2)$, and
hence $\hat{\Psi}=\Psi$, $\hat{\Psi}_0=\Psi_0$. Furthermore $\mathcal{I}_{S^3}(\Psi)=\mathcal{I}_{S^3}(\Psi_0)$, which as in the above proof of Theorems \ref{Thm1}, 2.2, and \ref{Thm3} yields $\Psi=\Psi_0$ up to isometry in the target symmetric space. From here the same arguments apply to show that $(\mathcal{B},\gamma,F)$ must arise from the near-horizon geometry of the extreme CCLP black hole.
\hfill\qedsymbol\medskip

\appendix

\section{The CCLP Charged Rotating de Sitter Black Hole}

\subsection{The Solution}

Consider 5-dimensional minimal supergravity with a positive cosmological constant with action \eqref{sugraaction}. The Chong-Cvetic-Lu-Pope (CCLP) solution \cite{chong2007non} may be interpreted as the natural generalization of the Kerr-Newman de Sitter black hole to 5 dimensions. In Boyer-Lindquist coordinates the solution takes the form
\begin{align}\label{eqA.1}
	\begin{split}
		\mathbf{g}=&-\frac{\xi\left[\left(1-\Lambda r^2\right)\Sigma dt+2q\nu\right]dt}{\Xi_a\Xi_b\Sigma}+\frac{2q\nu\omega}{\Sigma}+\frac{f}{\Sigma^2}\left(\frac{\xi dt}{\Xi_a\Xi_b}-\omega\right)^2\\
		&+\frac{\Sigma dr^2}{\Delta}+\frac{\Sigma d\tilde\theta^2}{\xi}+\frac{r^2+a^2}{\Xi_a}\sin^2\tilde\theta (d\phi^1)^2+\frac{r^2+b^2}{\Xi_b}\cos^2\tilde\theta (d\phi^2)^2
	\end{split}
\end{align}
where
\begin{equation}\label{eqA.2}
	\nu=b\sin^2\tilde\theta d\phi^1+a\cos^2\tilde\theta d\phi^2,\qquad \omega=a\sin^2\tilde\theta\frac{d\phi^1}{\Xi_a}+b\cos^2\tilde\theta\frac{d\phi^2}{\Xi_b},
\end{equation}
\begin{equation}\label{eqA.3}
	\xi=1+\Lambda(a^2 \cos^2\tilde\theta+b^2\sin^2\tilde\theta),\quad \Delta=\frac{(r^2+a^2)(r^2+b^2)(1-\Lambda r^2)+q^2+2abq}{r^2}-2\mathfrak{m},
\end{equation}
\begin{equation}\label{eqA.4}
	\Sigma={r}^2+b^2\sin^2\tilde\theta+a^2\cos^2\tilde\theta,\qquad \Xi_a=1+a^2\Lambda,\qquad \Xi_b=1+b^2\Lambda,\qquad f=(2\mathfrak{m}-2abq\Lambda)\Sigma-q^2.
\end{equation}
The Maxwell field $F = d\mathcal{A}$ has the potential
\begin{equation}\label{eqA.5} \mathcal{A}=\frac{\sqrt{3}q}{\Sigma}\left(\frac{\xi dt}{\Xi_a\Xi_b}-\omega\right).
\end{equation}
The above solution is characterized by the parameters $(\mathfrak{m},a,b,q)$ and for a suitable range of parameters they describe regular black holes on and outside an event horizon up to a cosmological horizon whose scale is set by the length scale $\ell^2 = \Lambda^{-1}$.  If $\Lambda=0$ the solution is considerably simpler as we discuss below.  Here $t \in \mathbb{R}$, $\tilde\theta \in (0,\pi/2)$, and $\phi^i \sim \phi^i + 2\pi$. Apart from  coordinate singularities at $\tilde\theta =0, \pi/2$ where the rotational Killing fields $\partial/\partial \phi^i$ degenerate, there are singularities at the roots of $\Delta(r)$. These correspond to an inner horizon, an outer horizon, and a cosmological horizon for suitable choice of parameters.  In particular for the subfamily of extreme black holes we require the cubic function $\Delta(R)$ with $R \equiv r^2$ to have three real positive roots, two of which coincide.  This condition is equivalent to requiring the discriminant of the cubic $\Delta(R)$ vanishes, which reduces to an equation of the form $f(a,b,q,\mathfrak{m}) =0$ for a smooth function $f$. The implicit function theorem will guarantee that, generically in some open set in parameter space, a solution $\mathfrak{m} = \mathfrak{m}(a,b,q)$ exists.

If we set $R_+$ to be a root we can eliminate $\mathfrak{m}$ by
\begin{equation}
\mathfrak{m} = \frac{(R_+ + a^2)(R_++b^2)(1- R_+ \Lambda) + q^2 + 2 a b q}{2R_+},
\end{equation}
and $R_+$ is a double root provided
\begin{equation}\label{extremecond}
\Lambda R_+^2 (2R_+ + a^2 + b^2) = R_+^2 - (a b + q)^2,
\end{equation}
which implies $R_+ \geq | a b + q|$ with equality if and only if $\Lambda =0$.  We will take $q\geq 0$ (below we will see this is equivalent to choosing electric charge $Q\geq 0$) and assume $ab + q > 0$.  Finally we require that the cosmological horizon $R_c \geq R_+$, that is
\begin{equation}
 (a^2 + b^2) (a b + q)^2 + 3 (a b + q)^2 R_+ - R_+^3 \geq 0.
\end{equation}
In summary, the extreme family is parameterized by $(R_+, a, b, q)$ which satisfy the extremality constraint \eqref{extremecond}.

Defining $r_+ = \sqrt{ R_+}$, we can derive the quasi-harmonic map data corresponding to the near-horizon geometry associated to the extreme subfamily of black holes. The horizon metric is
\begin{equation}\label{eqA.8}
	\gamma_{mn} dy^m dy^n=\frac{\Sigma_{r_+}}{\xi}d\tilde\theta^2+\lambda_{ij}d\phi^i d\phi^j,
\end{equation}
where $\Sigma_{r_+}={r}_+^2+b^2\sin^2\tilde\theta+a^2\cos^2\tilde\theta$ and
\begin{equation}\label{eqA.9} \lambda_{11}=\frac{(r_+^2+a^2)\sin^2\tilde\theta}{\Xi_a}+\frac{a\left[a\left(2\mathfrak{m}\Sigma_{r_+}-q^2\right)+2bq \Sigma_{r_+}\right]\sin^4\tilde\theta}{\Sigma_{r_+}^2\Xi_a^2},
\end{equation}
\begin{equation}\label{eqA.10} \lambda_{22}=\frac{(r_+^2+b^2)\cos^2\tilde\theta}{\Xi_b}
+\frac{b\left[b\left(2\mathfrak{m}\Sigma_{r_+}
-q^2\right)+2aq\Sigma_{r_+}\right]\cos^4\tilde\theta}{\Sigma_{r_+}^2\Xi_b^2},
\end{equation}
\begin{equation}\label{eqA.11}
\lambda_{12}=\frac{\left[ab\left(2\mathfrak{m}\Sigma_{r_+}-q^2\right)+(a^2+ b^2)q\Sigma_{r_+}\right]\sin^2\tilde\theta\cos^2\tilde\theta}{\Sigma_{r_+}^2\Xi_b\Xi_a},
\end{equation}
and the horizon area is $A = 8\pi^2 \mathcal{C}^{-1}$ where
\begin{equation}
\mathcal{C}^{-1}  = \frac{r_+^4 + r_+^2(a^2 + b^2) + a b (ab + q)}{4r_+ \Xi_a \Xi_b}.
\end{equation}
It is straightforward to read off the magnetic potentials
\begin{equation}\label{eqA.17}
	\psi_0^1=\frac{\sqrt{3}aq\sin^2\tilde\theta}{\Xi_a\Sigma_{r_+}},\qquad \psi_0^2=\frac{\sqrt{3}bq\cos^2\tilde\theta}{\Xi_b\Sigma_{r_+}}.
\end{equation}
A somewhat longer calculation gives $\Upsilon$, and hence the potential
\begin{equation}\label{eqA.22}
\chi_0=-\frac{\sqrt{3}q(b^2+r^2_+)(a^2+r^2_+)}{\Xi_a\Xi_b(a^2-b^2)\Sigma_{r_+}}.
\end{equation}
The computation of the charged twist potentials $\zeta_0^i$ is involved and yields cumbersome expressions which we will omit here, although we will record the asymptotic behavior relevant for the convexity argument below.  Using the quasi-harmonic map potentials one can calculate the electric charge
\begin{equation}\label{eqA.6}
	Q=\frac{\sqrt{3}\pi q}{4\Xi_a\Xi_b}   \; ,
\end{equation}
and angular momenta associated to the black hole horizon
\begin{equation}\label{eqA.7} \mathcal{J}_{1}=\frac{\pi\left[2a\mathfrak{m}+qb\left(2-\Xi_a\right)\right]}{4\Xi_a^2\Xi_b},\qquad \mathcal{J}_{2}=\frac{\pi\left[2b\mathfrak{m}+qa\left(2-\Xi_b\right)\right]}{4\Xi_a\Xi_b^2}  \; .
\end{equation}

The estimates of the quasi-harmonic map as $\tilde\theta\to 0,\pi/2$ will now be collected. Recall that in terms of our parametrization for spherical topology, we have
\begin{equation}\label{eqA.13}	 U_0=\frac{1}{4}\log\left(\frac{\xi(r^4_++(a^2+b^2)r^2_++a^2b^2+abq)^2}{4\Sigma_{r_+}r^2_+\Xi_a^2\Xi_b^2}\right),\qquad V_0=\frac{1}{2}\log\left(\frac{\lambda_{11}\cos^2\frac{\tilde\theta}{2}}{\lambda_{22}\sin^2\frac{\tilde\theta}{2}}\right).
\end{equation}
The required asymptotic behavior of the scalars is
\begin{equation}\label{eqA.26} V_0=\frac{1}{2}\log\left(\frac{4\pi^2(a^2+r^2_+)^3}{A^2\Xi^3_a}\right)+O(\sin^22\tilde\theta),\qquad \partial_{\tilde\theta} V_0=O(\sin2\tilde\theta),
\end{equation}
\begin{equation}\label{eqA.27} U_0=\frac{1}{4}\log\left(\frac{\Xi_aA^2}{16\pi^4(a^2+r_+^2)}\right)+O(\sin^22\tilde\theta),\qquad \partial_{\tilde\theta} U_0=O(\sin2\tilde\theta),
\end{equation}
\begin{equation}\label{eqA.28}
	W_0=O(\sin2\tilde\theta),\qquad\partial_{\tilde\theta} W_0=O(1),
\end{equation}
\begin{equation}\label{eqA.29}
	\psi^1_0= O(\sin^2\tilde\theta),\quad {\partial_{\tilde\theta}\psi^1_0}=O(\sin\tilde\theta),\qquad \psi^2_0=O(\cos^2\tilde\theta),\quad {\partial_{\tilde\theta}\psi^2_0}=O(\sin\tilde\theta),
\end{equation}
\begin{equation}\label{eqA.30}
	\zeta^1_0=-\frac{2\mathcal{J}_1}{\pi}+O(\sin^4\tilde\theta)\quad\text{ as $\tilde\theta\to 0$},\qquad \zeta^1_0=\frac{2\mathcal{J}_1}{\pi}+O(\cos^2\tilde\theta)\quad\text{ as $\tilde\theta\to \pi/2$},
\end{equation}
\begin{equation}\label{eqA.31}
	\zeta^2_0=-\frac{2\mathcal{J}_2}{\pi}+O(\sin^2\tilde\theta)\quad\text{ as $\tilde\theta\to 0$},\qquad \zeta^2_0=\frac{2\mathcal{J}_2}{\pi}+O(\cos^4\tilde\theta)\quad\text{ as $\tilde\theta\to \pi/2$},
\end{equation}
\begin{equation}\label{eqA.32}
	\chi_0=\frac{2Q}{\pi}+O(\sin^2\tilde\theta)\quad\text{ as $\tilde\theta\to 0$},\qquad \chi_0=-\frac{2Q}{\pi}+O(\cos^2\tilde\theta)\quad\text{ as $\tilde\theta\to \pi/2$},
\end{equation}
\begin{equation}\label{eqA.33}
	\partial_{\tilde\theta}\chi_0=O(\sin 2\tilde\theta ),\qquad {\partial_{\tilde\theta}\zeta^1_0}=\sin2\tilde\theta O(\sin^2\tilde\theta),\qquad {\partial_{\tilde\theta}\zeta^2_0}= \sin2\tilde\theta O(\cos^2\tilde\theta).
\end{equation}

\subsection{Geometric Equalities Satisfied by the CCLP Black Hole Horizon}

In this section we consider some special subfamilies of the three-parameter family of extreme horizons.

\subsubsection{Vanishing Cosmological Constant $\Lambda =0 $}

In this case the geometry simplifies significantly as we can express $R_+$ explicitly in terms of the parameters as $r^2_+ = R_+ = ab + q>0$, or equivalently
\begin{equation}
\mathfrak{m} = q + \frac{(a+b)^2}{2}.
\end{equation}
The extreme horizon area satisfies
\begin{equation}\label{eqA1.30}
A_\text{e}= 8\sqrt{\pi^2\mathcal{J}_1\mathcal{J}_2+\frac{4\pi}{3\sqrt{3}}Q^3},
\end{equation}
and positivity is guaranteed by $a b+ q > 0$.
Note that if set $Q=0$, we recover the vacuum result obtained by Hollands \cite{Hollands2012}. Moreover, if either $\mathcal{J}_i$ is set to zero, then we obtain
\begin{equation}\label{eqA1.32}
A_{\text{e}} =  8\sqrt{\frac{4\pi}{3\sqrt{3}}Q^3}.
\end{equation}
This holds in particular for an extreme Reissner-Nordstr\"{o}m horizon.

\subsubsection{Vanishing Angular Momenta $\mathcal{J}_i=0$}  It is sufficient (although not necessary) to set $ a= b=0$.  In this case the extremality condition \eqref{extremecond} reads
\begin{equation}\label{RNdScond}
 2 \Lambda R_+^3 = R_+^2 - q^2,
 \end{equation} and the mass parameter is fixed by
\begin{equation}
\mathfrak{m} = \frac{R_+^2 + 3q^2}{4R_+}.
\end{equation}
The geometry of the horizon is that of a round $S^3$, with area $A = 2\pi^2 R_+^{3/2}$ and using \eqref{RNdScond} gives the geometric relation
\begin{equation}
6\left(\frac{\pi A_\text{e}}{2}\right)^{4/3} - 3\Lambda A_\text{e}^2 = 32 \pi^2 Q^2,
\end{equation}
or equivalently
\begin{equation}
\mathcal{E}(A_\text{e},Q)=A_\text{e}^2\left(4\pi^2-2\Lambda\left(4\pi A_\text{e}\right)^{2/3}\right)^{3/2}-\frac{2^{11}\pi^4Q^3}{3\sqrt{3}} = 0.
\end{equation}
It can be seen that
\begin{equation}
0 \leq A_{\text{e}} \leq A_{\text{max}} = \frac{\pi^2}{\sqrt{2\Lambda^3}} \;, \qquad  0 \leq Q \leq Q_{\text{max}} = \frac{\pi}{12 \Lambda} \;.
\end{equation}
$A_{\text{max}}$ is achieved when $Q=0$ and the cosmological and event horizon coincide.
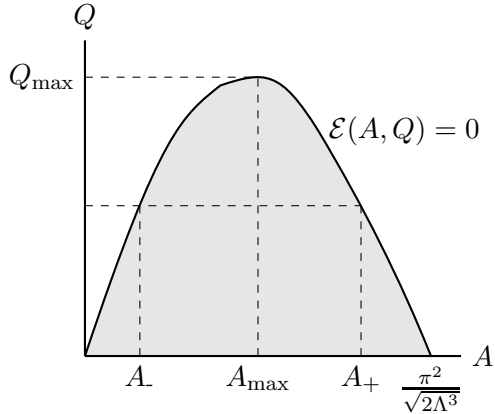
\begin{figure}[h]
	\centering
	\begin{tikzpicture}
	\draw[fill=gray!20!white,thick] (0,0) .. controls (1,3) and (1.3,3.2) ..(1.8,3.6).. controls (2.4,3.8) and (2.6,3.8) .. (3.1,3)node[right]{$\mathcal{E}(A,Q)=0$}.. controls (3.7,2) and (4.2,1)..(4.6,0)node[below]{$\frac{\pi^2}{\sqrt{2\Lambda^{3}}}$};
	\draw[thick] (0,0) -- (5,0) node[right] {$A$};
	\draw[thick] (0,0) -- (0,4.2) node[above] {$Q$};
	\draw[dashed] (2.3,0)node[below] {$A_\text{max}$} -- (2.3,3.71);
	\draw[dashed] (0,3.71)node[left] {$Q_\text{max}$} -- (2.3,3.71);
	\draw[dashed] (0,2)-- (3.67,2);
	\draw[dashed] (.73,0)node[below] {$A_\text{-}$} -- (.73,2);
	\draw[dashed] (3.67,0)node[below] {$A_\text{+}$} -- (3.67,2);
	\end{tikzpicture}
	\caption{The shaded region represents the region $\mathcal{E}(A,Q)\geq 0$.}
\end{figure}

\subsubsection{Equal Angular Momenta $\mathcal{J}=\mathcal{J}_1=\pm \mathcal{J}_2$ and Vanishing Electric Charge $Q=0$}

Consider the one-parameter subset defined by $a = b >0$ and $q =0$.  Then from \eqref{extremecond} we find
\begin{equation}
a^2 = R_+(1 - 2R_+ \Lambda)\;, \qquad \mathfrak{m} = 2R_+(1 - R_+ \Lambda)^3,
\end{equation}
so that $R_+ < (2\Lambda)^{-1}$.  The area of the extreme horizon is given by
\begin{equation}
A_{\text{e}} = \frac{8\pi^2 R_+^{3/2}}{(1 + 2R_+ \Lambda)^2}.
\end{equation}
This implies the following complicated relation between $A$ and $\mathcal{J}$
\begin{equation}
\frac{\Lambda^3A_\text{e}^6}{2^{10}\pi^6\mathcal{J}^2}=
\frac{\left(A_\text{e}\sqrt{A_\text{e}^2+512\mathcal{J}^2\pi^2}
-A_\text{e}^2-128\pi^2\mathcal{J}^2\right)^3}{\left(A_\text{e}
-\sqrt{A_\text{e}^2+512\mathcal{J}^2\pi^2}\right)^4}.
\end{equation}
From the above bound on $R_+$ it follows that
\begin{equation}
0\leq A_{\text{e}} \leq \frac{\pi^2}{\sqrt{2\Lambda^{3}}},\qquad 0\leq \mathcal{J}\leq  \mathcal{J}_{\text{max}} \equiv \frac{\sqrt{2}\pi}{54\Lambda^{3/2}}.
\end{equation}
In the case $\mathcal{J}=0$ the maximal area occurs when the event horizon and cosmological horizon coincide, while the maximum angular momentum is achieved when $A_{\text{e}} =\frac{4\pi^2}{9\Lambda^{3/2}}$.

\section{Dipole Charged Black Rings}\label{app2}

An explicit 3-parameter family of asymptotically flat stationary bi-axisymmetric black ring solutions, characterized by a mass $\mathfrak{m}$, vanishing electric charge $Q=0$,  a single angular momenta $\mathcal{J}_1$ along the $S^1$-direction of the black ring, and a `dipole charge' $\mathcal{D}$ which corresponds to the flux of the Maxwell field out of the $S^2$ portion of the ring, was constructed in \cite{Emparan:2004wy}. In the physics literature this class of solutions is referred to as the `singly-spinning dipole ring' for this reason. If $\mathcal{D} =0$, then the solution reduces to the vacuum black ring with one angular momentum \cite{emparan2002rotating}.  A remarkable feature of this solution is that it demonstrates `continuous non-uniqueness' - that is, for fixed $\mathfrak{m}, \mathcal{J}_1$, there are an infinite number of distinct dipole rings.  When $\mathcal{D} \neq 0$, the dipole ring admits a two-parameter extreme limit.  The associated near-horizon geometry, given in \cite{Kunduri:2007vf}, corresponds to an extreme horizon with $S^1 \times S^2$ topology. Note that from the point of view of the near-horizon alone, the radius of the $S^1$ is a free parameter although for the parent asymptotically flat black hole the radius is fixed.  Accordingly, we will leave it here as a free parameter $R_1$. The harmonic map scalars can be read off from the horizon metric
\begin{equation}\label{eqD.1}
\frac{dx^2}{\mathcal{C}^2 \det \lambda} + \lambda_{ij} d\phi^i d\phi^j =\frac{dx^2}{\mathcal{C}^2 \det \lambda} +  \frac{R_1^2 \sigma (1+\sigma) H(x)}{\mu (1-\sigma) F(x)} (d\phi^1)^2 + \frac{R_2^2 \mu^2 \omega_0^2 (1-x^2)}{ H(x)^2} (d\phi^2)^2,
\end{equation}
where
\begin{equation}\label{eqD.2}
F(x) = 1 + \sigma x, \qquad H(x)  = 1 - \mu x,
\end{equation}
and $\sigma, \mu \in (0,1)$.  The local metric extends smoothly to a metric on $S^1 \times S^2$ provided conical singularities are removed, which requires
\begin{equation}\label{eqD.3}
\omega_0 = \sqrt{F(1) H(1)^3} = \sqrt{F(-1) H(-1)^3}.
\end{equation}
This imposes a constraint on the parameters $\sigma, \mu$, given by
\begin{equation}\label{eqD.4}
(1+\sigma)(1-\mu)^3 = (1-\sigma)(1+\mu)^3,
\end{equation}
which can actually be solved explicitly
\begin{equation}\label{eqD.5}
\sigma = \frac{\mu(3+\mu^2)}{1+3\mu^2}
\end{equation}
so that
\begin{equation}\label{eqD.6}
\omega_0^2 = \frac{(1-\mu^2)^3}{1+3\mu^2}.
\end{equation}
The solution is parameterized by $(R_1, R_2, \sigma, \mu)$ subject to \eqref{eqD.5}.  We have also made the identification
\begin{equation}\label{eqD.7}
\mathcal{C}^{-1} = L^3 \equiv \omega_0 R_1 R_2^2 \sqrt{\frac{\sigma(1+\sigma) \mu^3}{1-\sigma}}.
\end{equation}
The remaining scalars are
\begin{equation}\label{eqD.8}
\chi_0 \equiv 0, \qquad \zeta_0^1 = \frac{L^3 R_1 (1+\sigma)}{R_2\sqrt{\sigma \mu^3}}\left(1 - \frac{1}{F(x)}\right), \qquad \psi_0^2 = -\sqrt{3} \sqrt{\frac{1 - \mu}{1+\mu}} \cdot \frac{\omega_0 \mu R_2(1+x)}{H(x)},
\end{equation}
and $\zeta_0^2 =\psi_0^1 =0$.

By the definition of electric charge and $\chi_0=0$, we have $Q=0$.  The dipole charge is
\begin{equation}\label{eqD.9}
\mathcal{D} = -v^i (\psi_0^i(+1) - \psi_0^i(-1)) = -\psi_0^2(+1) = \frac{2\sqrt{3} \mu R_2 (1-\mu^2)}{\sqrt{1+3\mu^2}},
\end{equation}
where $v^i = (0,1)$ corresponds to the fact that $\eta_2$ is the Killing vector field which vanishes at the poles of the $S^2$ of the ring horizon.  The angular momenta can be derived from the twist charged potentials, giving
\begin{equation}\label{eqD.11}
\mathcal{J}_1 = \frac{\pi}{2} \frac{L^3 R_1}{(1-\sigma) R_2}\sqrt{\frac{\sigma}{\mu^3}} \;, \qquad \mathcal{J}_2=0 \;.
\end{equation}
The area of the extreme horizon is
\begin{equation}\label{eqD.12}
A_e= 8\pi^2 L^3 = 4\pi \sqrt{ \frac{\pi \mathcal{J}_1 D^3}{3\sqrt{3}}}.
\end{equation}
Note that there is no limit as $\mathcal{D} \to 0$ or $\mathcal{J}_1 \to 0$; that is, the extreme dipole ring requires both a non-vanshing angular momenta along the $S^1$ direction of the ring, and a non-vanishing dipole charge. Therefore the area inequality is
\begin{equation}\label{eqD.121}
A\geq 4\pi \sqrt{ \frac{\pi \mathcal{J}_1 D^3}{3\sqrt{3}}}.
\end{equation}

Lastly, we set $x=\cos\theta$ and list the asymptotics of the harmonic map as $\theta\to 0,\pi$:
\begin{equation}\label{eqD.13}
	V_0=-\log\left(\sin\frac{\theta}{2}\right)+O(1),\qquad \partial_{\theta} V_0=-\cot\frac{\theta}{2}+O(\sin\theta),
\end{equation}
\begin{equation}\label{eqD.14}
	 U_0=\frac{1}{4}\log\left(\frac{\mu R_2^2(1+ \mu)^3(3 + \mu^2)R_1^2}{(3\mu^2+1)(1-\mu)}\right)+O(\sin^2\theta),\qquad \partial_{\theta} U_0= O(\sin\theta),\qquad W_0=0,
\end{equation}
\begin{equation}\label{eqD.16}
 \psi^2_0= \frac{\sqrt{12}(1-\mu^2)\mu R_2}{\sqrt{3\mu^2+1}}\cos^2\frac{\theta}{2}+O(\sin^2\theta),\qquad {\partial_{\theta}\psi^2_0}= O(\sin\theta),
\end{equation}
\begin{equation}\label{eqD.17}
	\zeta^1_0=-\frac{2\mathcal{J}_1}{\pi}+O(\sin^2\theta)\quad\text{ as $\theta\to 0$},\qquad \zeta^1_0=\frac{2\mathcal{J}_1}{\pi}+O(\sin^2\theta)\quad\text{ as $\theta\to \pi$},\qquad \zeta^2_0=0,
\end{equation}
\begin{equation}\label{eqD.19}
	\chi_0=0,\qquad {\partial_{\theta}\zeta^1_0}=O(\sin\theta).
\end{equation}

\section{A Magnetically Charged Kerr String}\label{app3}

Consider the class of extreme horizons with topology $S^1 \times S^2$ obtained as follows. Start with a general vacuum Kerr black hole solution.  Add a flat direction to obtain the product metric
\begin{equation}\label{eqE.1}
\mathbf{g} = \mathbf{g}_{\textrm{Kerr}} + dz^2,
\end{equation}
which is obviously Ricci flat in $D=5$, and hence one has a vacuum solution with horizon topology $S^1 \times S^2$ where $z$ is periodically identified with period $2\pi R$. The resulting 3-parameter vacuum solution is referred to as a `Kerr black string'. Note that the solution obtained is \emph{not} asymptotically flat but rather is asymptotically $\mathbb{R}^{3,1} \times S^1$.  Solution generating techniques, based on the underlying harmonic map structure of the theory,  can be used to generate solutions to minimal supergravity with electric and magnetic charge (as measured from the $D=4$ point of view) as well as linear momentum  along the string direction $z$.   Taking an extreme limit of this charged Kerr string and performing the near-horizon limit, we obtain extreme horizons with horizon topology $S^1 \times S^2$. Remarkably, the near-horizon geometry of the asymptotically flat vacuum black ring is globally isometric to a subfamily of the near-horizon geometries of the vacuum Kerr black string \cite{Kunduri:2007vf}. This strongly suggests there could be some (yet to be explicitly  constructed) family of extreme charged black rings with horizon geometry globally isometric to that of a charged Kerr black string.

In the following we will consider an extreme horizon parameterized by $(a,\beta, R)$, and we use the shorthand $c_\beta = \cosh \beta, s_\beta = \sinh \beta$. The Killing part of the metric is given by
\begin{align}\label{eqE.2}
\begin{split}
\lambda_{ij} d\phi^i d\phi^j =&  \frac{a^4(1-x^2)}{ \Xi(x)} \left(2(c_\beta^4 + s_\beta^4) d\phi^2 \right)^2 \\
&+ \left[ R d\phi^1 + \frac{2a^3 c_\beta s_\beta (c^2_\beta + s^2_\beta) (1-x^2)}{\Xi(x)} d\phi^2 \right]^2,
\end{split}
\end{align}
where
\begin{equation}\label{eqE.3}
\Xi(x) = a^2(1 + x^2 + 4c_\beta^2 s_\beta^2).
\end{equation}
The horizon scale is set by
\begin{equation}\label{eqE.4}
\mathcal{C}^{-1} = L^3 =  2 a^2 R (c_\beta^4 + s_\beta^4).
\end{equation}
It is easily seen that $\partial/ \partial \phi^2$ has fixed points at $x = \pm 1$, and the above metric extends smoothly to a cohomogeneity-one metric on $S^1 \times S^2$.  The remaining scalars are
\begin{equation}\label{eqE.5}
\psi_0^1 = 0,
\qquad \psi_0^2 = -\frac{4\sqrt{3} a^3 s_\beta c_\beta(c_\beta^4 + s_\beta^4)x}{\Xi(x)},\qquad \chi_0 = \frac{2\sqrt{3} a L^3 c_\beta s_\beta (c_\beta^2 + s_\beta^2)}{\Xi(x)}.
\end{equation}
A computation gives
\begin{eqnarray}
\Theta_0^1 &=& -\frac{2 L^3 R a c_\beta s_\beta (1 - x^2 + 4c_\beta^2 s_\beta^2)}{ \Xi(x)^2}  \; dx,\\
\Theta_0^2 &=& -\frac{4L^3 ( 1+ 2 s_\beta^2)(1 + 7 s_\beta2 + 19 s_\beta^4 + 24 s_\beta^6 12 s_\beta^8 + c_\beta^2 x^2 + s_\beta^4 x^2)}{(1 + 4c_\beta^2 s_\beta^2 + x^2)^3}(1-x^2) \; dx,
\end{eqnarray}
with twist potentials expressed concisely as
\begin{equation}\label{eqE.7}
\zeta_0^1 = -\frac{2L^3 R a c_\beta s_\beta x}{\Xi(x)},  \qquad \zeta_0^2 = -\frac{4L^3 a^2(c_\beta^2 + s_\beta^2)(1+ s^2_\beta c_\beta^2)x}{\Xi(x)} - \frac{\chi\psi_0^2}{3}.
\end{equation}
Owing to the functional form of $\chi_0$, $\psi_0^i$ it can be verified that $Q =0$, so the solution has vanishing electric charge.  There is a dipole charge
\begin{equation}\label{eqE.8}
\mathcal{D} = 4\sqrt{3} a c_\beta s_\beta
\end{equation}  as well as two angular momenta given by
\begin{equation}
\mathcal{J}_1 = -\pi a R^2 c_\beta s_\beta \;, \qquad \mathcal{J}_2 = - 2\pi a ^2 R (c_\beta^2 + s_\beta^2)\;.
\end{equation} One can verify that the following equality holds for the extreme solution
\begin{equation}\label{eqE.10}
A_{\text{e}}= 8\pi^2 \mathcal{C}^{-1} = 8\pi \sqrt{\mathcal{J}_2^2 - \frac{\pi}{12\sqrt{3}} \mathcal{J}_1 \mathcal{D}^3}.
\end{equation}
Then the area inequality is
\begin{equation}\label{eqE.101}
A\geq  8\pi \sqrt{\mathcal{J}_2^2 - \frac{\pi}{12\sqrt{3}} \mathcal{J}_1 \mathcal{D}^3}.
\end{equation}

Lastly, we set $x=\cos\theta$ and list the asymptotics of the harmonic map as $\theta\to 0,\pi$:
\begin{equation}\label{eqE.11}
 V_0=-\log\left(\sin\frac{\theta}{2}\right)+O(1),\qquad \partial_{\theta} V_0=-\cot\frac{\theta}{2}+O(\sin\theta),\qquad W_0=O(\sin\theta),
\end{equation}
\begin{equation}\label{eqE.12}
 U_0=\frac{1}{4}\log\left(2R^2a^2\right)+O(\sin^2\theta),\qquad\partial_{\theta} U_0= O(\sin\theta), \qquad\partial_{\theta} W_0=O(1),
\end{equation}
\begin{equation}\label{eqE.13}
 \psi^1_0=0,\qquad  \psi^2_0=-\frac{\mathcal{D}}{2}+O(\sin^2{\theta})\quad\text{ as $\theta\to 0$},\quad \psi^2_0=\frac{\mathcal{D}}{2}+O(\sin^2{\theta})\quad\text{ as $\theta\to\pi$},
\end{equation}
\begin{equation}\label{eqE.14}
 \chi_0=2\sqrt{3}aRs_{\beta}c_{\beta}(2c_{\beta}^2-1)+ O(\sin^2\theta),\qquad \partial_{\theta}\chi_0=O(\sin\theta)\quad\text{ as $\theta\to 0,\pi$},
\end{equation}
\begin{equation}\label{eqE.15}
 \zeta^1_0=-\frac{2\mathcal{J}_1}{\pi}+O(\sin^2\theta)\quad\text{ as $\theta\to 0$},\qquad \zeta^1_0=\frac{2\mathcal{J}_1}{\pi}+O(\sin^2\theta)\quad\text{ as $\theta\to \pi$},
\end{equation}
\begin{equation}\label{eqE.16}
 \zeta^2_0=-\frac{2\mathcal{J}_2}{\pi}+O(\sin^2\theta)\quad\text{ as $\theta\to 0$},\qquad \zeta^2_0=\frac{2\mathcal{J}_2}{\pi}+O(\sin^2\theta)\quad\text{ as $\theta\to \pi$},
\end{equation}
\begin{equation}\label{eqE.17}
\partial_{\theta}\psi^2_0,{\partial_{\theta}\zeta^1_0},\, {\partial_{\theta}\zeta^2_0}=O(\sin\theta),
\end{equation}
\begin{equation}
 \Theta^1 = O(\sin\theta) d \theta,  \qquad \Theta^2 = O(\sin^2\theta) d \theta \quad\text{ as $\theta\to 0, \pi$}.
\end{equation}

\section{Lens Space $L(n,1)$ Horizons}\label{app4}

The supergravity theory discussed here admits special classes of \emph{supersymmetric} (BPS) solutions. These are solutions which admit Killing spinors with respect to an appropriate connection.  Within the class of BPS solutions, there are asymptotically flat black hole solutions that must saturate the bound $M = \sqrt{3}Q$ \cite{Gibbons:1993xt}.  A BPS black hole is necessarily extreme (see e.g. \cite{Kunduri:2008tk}), so they  immediately give rise to near-horizon geometries that will be critical points of our harmonic map equations.

There is a classification  (without any isometry assumptions) of possible BPS near-horizon geometry solutions $(g,F)$ \cite{Reall:2002bh}. The only possibilities are: (i) $S^1 \times S^2$ with a product metric and the $S^2$ metric is round; (ii)  $S^3$ with a homogeneously squashed $SU(2)
\times U(1)$ metric or a lens space quotient thereof. There is also a $T^3$ possibility with the flat metric, but this is ignored because such a horizon could not correspond to an asymptotically flat black hole \cite{galloway2006rigidity,Galloway2006}. Case (i) is realized by the family of BPS black ring solutions \cite{Elvang:2004rt}.  The $S^3$ possibility is realized by the asymptotically flat black hole solutions \cite{Breckenridge:1996is, Kunduri:2014iga}, and more recently asymptotically flat black holes with lens horizon $L(n,1)$ have been constructed in \cite{kunduri2014supersymmetric,Tomizawa:2016kjh}.


The near-horizon geometries of case (ii) above are all locally isometric, with
\begin{eqnarray}
\mathbf{g}_{NH} &=& -\frac{r^2 dv^2}{\alpha^2} + \frac{4 \alpha dv dr}{j} dv dr + \frac{4\beta}{n\alpha^2} r dv (d\phi^1 + \frac{n}{2}\cos\theta d\phi^2) + ds^2_3, \\
F_{NH} &=& \frac{\sqrt{3}}{\alpha} d \left[ r dv - \frac{2\beta}{n}(d\phi^1 + \frac{n}{2}\cos\theta d\phi^2)\right],
\end{eqnarray}
where $j = \sqrt{2\alpha^3 - \beta^2}$.  The orientation is such that $\epsilon_{v r \theta \phi^1 \phi^2} >0$. The solution has two continuous parameters $\alpha , \beta$ satisfying the regularity conditions $\alpha >0$, $2\alpha^3 - \beta^2 >0$. The angles $\phi^i$ both have period $2\pi$ and $\theta \in (0,\pi)$.  The positive integer $n$ labels the horizon metric
\begin{equation}
\gamma_{ml} dy^m dy^l = \frac{4 j^2}{n^2 \alpha^2}\left(d\phi^1 + \frac{n}{2} \cos\theta d\phi^2\right)^2 + 2\alpha(d\theta^2 + \sin^2\theta (d\phi^2)^2).
\end{equation}
It is important to note that many references work with the angle $\hat\phi = 2\phi^1$ with period $4\pi$. A computation yields the harmonic map
\begin{equation}
\psi_0^1 =\frac{2\sqrt{3} \beta}{n\alpha} \;, \qquad \psi_0^2 = \frac{\sqrt{3}\beta \cos\theta}{\alpha},\qquad  \chi_0 = - \frac{4\sqrt{3} \alpha}{n} \cos\theta,
\end{equation}
\begin{equation}
 \zeta_0^1 =  \frac{16\beta \cos\theta}{n^2},\qquad \zeta_0^2 = -\frac{4\beta \sin^2\theta}{n}.
\end{equation}
Thus the angular momenta and charge, as defined in terms of potentials given above, are
\begin{equation}
	\mathcal{J}_{1} = -\frac{4\pi \beta}{n^2}, \qquad  \mathcal{J}_{2} = 0,\qquad Q = \frac{2\sqrt{3}\pi \alpha}{n}.
\end{equation}
Observe that the angular momentum associated to $\partial_{\phi^2}$ vanishes. However,  the angular momenta defined on the horizon \emph{need not} equal the ones computed at spatial infinity because the gauge fields carry angular momenta in the black hole exterior. This is related to the fact the Maxwell equation has a `source' term so that there are additional volume contributions when comparing integrals over the horizon and the boundary $S^3$ at infinity.  Indeed, the black lens spacetimes must have two non-vanishing angular momenta as measured with respect to observers at infinity.

The area formula is
\begin{equation}
	A_e = \frac{32\pi^2}{n}\sqrt{2\alpha^3 - \beta^2}  =8\pi \sqrt{\frac{4nQ^3}{3\sqrt{3} \pi} - n^2\mathcal{J}_1^2}.
\end{equation}
For $n=1$ this is the well-known formula for the area of the BMPV black hole \cite{Breckenridge:1996is}.

Now we compute the asymptotics of the harmonic data. Observe that the Killing vectors
$\eta_{(1)}=\frac{n}{2}\partial_{\phi^1}-\partial_{\phi^2}$ and $\eta_{(2)}=\frac{n}{2}\partial_{\phi^1}+\partial_{\phi^2}$ vanish at $\theta=0$ and $\theta=\pi$, respectively. Clearly the direction vectors are not same as in Section 6. However, according to \cite{Hollands2012}, there exists a matrix $A\in SL(2,\mathbb{Z})$ such that
\begin{equation}
\hat{\eta}_i=A^j_i\eta_{(j)},\qquad i,j=1,2,
\end{equation}
and $\hat{\eta}_{(i)}\hat{a}^i_+$ and $\hat{\eta}_{(i)}\hat{a}^i_-$ with $\hat{a}_+=(1,0)$ and $\hat{a}_-=(1,n)$ vanish at $\theta=0$ and $\theta=\pi$, respectively. In other words, if we select the functions $\hat{\phi}^i$ such that $\mathfrak{L}_{\hat{\eta}_{(i)}}\hat{\phi}^i=1$, we have
\begin{equation}
\eta_{(1)}=\frac{n}{2}\partial_{\phi^1}-\partial_{\phi^2}=\partial_{\hat{\phi}^1}=\hat{\eta}_i\hat{a}^i_+,\qquad \eta_{(2)}=\frac{n}{2}\partial_{\phi^1}-\partial_{\phi^2}=\partial_{\hat{\phi}^1}
+n\partial_{\hat{\phi}^2}=\hat{\eta}_i\hat{a}^i_-.
\end{equation}
The transformation from $(\phi^1,\phi^2)$ to $(\hat{\phi}^1,\hat{\phi}^2)$ is
\begin{equation}
\begin{pmatrix}
\phi^1  \\
\phi^2
\end{pmatrix}=C\begin{pmatrix}
\hat{\phi}^1   \\
\hat{\phi}^2
\end{pmatrix},\qquad C=\begin{pmatrix}
\frac{n}{2}& 0  \\
-1 & \frac{2}{n}
\end{pmatrix}.
\end{equation}
Then there exists a matrix $B=\begin{pmatrix}
1& 1  \\
0 & n
\end{pmatrix}$
such that $(\hat{\phi}^1,\hat{\phi}^2)^T=B(\bar{\phi}^1,\bar{\phi}^2)^T$. Therefore we obtain
\begin{equation}
\begin{pmatrix}
\phi^1  \\
\phi^2
\end{pmatrix}=Z\begin{pmatrix}
\bar{\phi}^1   \\
\bar{\phi}^2
\end{pmatrix},\qquad Z=CB=\begin{pmatrix}
\frac{n}{2}& \frac{n}{2}  \\
-1 & 1
\end{pmatrix},
\end{equation}
and $\bar{\lambda}=Z^T\lambda Z$. The metric functions are then
\begin{equation}
U_0=\frac{1}{4}\log\left(\frac{\det\lambda}{n^2\sin^2\theta}\right),\qquad W_0=\sinh^{-1}\left(\frac{\frac{n^2}{4}\lambda_{11}-\lambda_{22}}{e^{2U_0}\sin\theta}\right),
\end{equation}
\begin{equation}
V_0=\frac{1}{4}\log\left(\frac{\cos^2\frac{\theta}{2}
\left(\frac{n^2}{4}\lambda_{11}-n\lambda_{12}
+\lambda_{22}\right)}{\sin^2\frac{\theta}{2}
\left(\frac{n^2}{4}\lambda_{11}+n\lambda_{12}+\lambda_{22}\right)}\right),
\end{equation}
\begin{equation}
\begin{pmatrix}
\bar{\psi}^1_0   \\
\bar{\psi}^2_0
\end{pmatrix}=Z^T\begin{pmatrix}
{\psi}^1   \\
{\psi}^2
\end{pmatrix}=\begin{pmatrix}
\frac{n}{2}{\psi}^1-{\psi}^2   \\
\frac{n}{2}{\psi}^1+{\psi}^2
\end{pmatrix}.
\end{equation}
The asymptotics of the harmonic map as $\theta\to 0,\pi$ are as follows:
\begin{equation}
V_0=\frac{1}{2}\log\left(\frac{2\alpha^3}{j^2}\right)+O(\sin^2\theta),\qquad \partial_{\theta} V_0=O(\sin\theta),
\end{equation}
\begin{equation}
U_0=\frac{1}{4}\log\left(\frac{8j^2}{\alpha n^4}\right),\qquad \partial_{\theta} U_0=0,\qquad
W_0=O(\sin\theta),\qquad\partial_{\theta} W_0=O(1),
\end{equation}
\begin{equation}
\bar{\psi}^1_0=O(\sin^2\frac{\theta}{2}),\quad {\partial_{\theta}\bar{\psi}^1_0}=O(\sin\theta),\quad \bar{\psi}^2_0=O(\cos^2\frac{\theta}{2}),\quad {\partial_{\theta}\bar{\psi}^2_0}=O(\sin\theta),
\end{equation}
\begin{equation}
\zeta^1_0=-\frac{2\mathcal{J}_1}{\pi}+O(\sin^2\frac{\theta}{2})\quad\text{ as $\theta\to 0$},\qquad \zeta^1_0=\frac{2\mathcal{J}_1}{\pi}+O(\cos^2\frac{\theta}{2})\quad\text{ as $\theta\to \pi$},
\end{equation}
\begin{equation}
\chi_0=\frac{2Q}{\pi}+O(\sin^2\frac{\theta}{2})\quad\text{ as $\theta\to 0$},\qquad \chi_0=-\frac{2Q}{\pi}+O(\cos^2\frac{\theta}{2})\quad\text{ as $\theta\to \pi$},
\end{equation}
\begin{equation}
\partial_{\theta}\chi_0=O(\sin\theta),\qquad \zeta^2_0=O(\sin^2\theta),\qquad
 {\partial_{\theta}\zeta^i_0}=O(\sin\theta).
\end{equation}

\bibliographystyle{abbrv}
	\bibliography{masterfileArea}

\end{document}